\title{Lagrangian Relations and Quantum $L_{\infty}$ Algebras}
\author{Branislav Jurčo$^{1,a}$, Ján Pulmann$^{2,b}$, and Martin Zika$^{1,c}$\\
        \small $^{1}$Mathematical Institute, Faculty of Mathematics and Physics, Charles University, \\
        \small Prague, Czech Republic \\
        \small $^{2}$School of Mathematics, University of Edinburgh, 
        \small Edinburgh, United Kingdom \\\\
        \small $^{a}$ \email{Branislav.Jurco@mff.cuni.cz}, $^{b}$ \email{Jan.Pulmann@gmail.com}, $^{c}$ \email{Martin.Zika@mff.cuni.cz}
}
\date{} %leave blank
\definecolor{hyperlink}{RGB}{11,0,128}
\def\@makechapterhead#1{
  {\parindent \z@ \raggedright \normalfont
   \Huge\bfseries \thechapter. #1
   \par\nobreak
   \vskip 20\p@
}}
\def\@makeschapterhead#1{
  {\parindent \z@ \raggedright \normalfont
   \Huge\bfseries #1
   \par\nobreak
   \vskip 20\p@
}}
\newtheoremstyle{theoremnew}% name
  {.5\baselineskip\@plus.2\baselineskip\@minus.2\baselineskip}% Space above
  {.5\baselineskip\@plus.2\baselineskip\@minus.2\baselineskip}% Space below
  {\slshape}% Body font
  {}%Indent amount (empty = no indent, \parindent = para indent)
  {\bfseries}%  Thm head font
  {.}%       Punctuation after thm head
  { }%      Space after thm head: " " = normal interword space;
\theoremstyle{plain}
\newtheorem{theorem}{Theorem}[section]
\newtheorem{prop}[theorem]{Proposition}
\newtheorem{lemma}[theorem]{Lemma}
\newtheorem{cor}[theorem]{Corollary}
\newtheorem{remark}[theorem]{Remark}
\theoremstyle{remark}
\theoremstyle{definition}
\newtheorem{newdefn}[theorem]{Definition} 
\newenvironment{definition}
{\renewcommand{\qedsymbol}{$\blacktriangle$}%
\pushQED{\qed}\begin{newdefn}}
{\popQED\end{newdefn}}
\newtheorem{example}[theorem]{Example}
\renewcommand\qedsymbol{$\blacksquare$}
\definecolor{hyperlink}{RGB}{11,0,128}
\newdimen\scalemath@axis
\newcommand*{\scalemath}[3]{%
  #1{%
    \mathpalette{\scalemath@aux{#2}}{#3}%
  }%
}
\newcommand*{\scalemath@aux}[3]{%
  \begingroup
    \everyvbox{}%
    \settoheight\scalemath@axis{$#2\vcenter{}$}%
    \raisebox{\scalemath@axis}{%
      \scalebox{#1}{%
        \raisebox{-\scalemath@axis}{%
          $\m@th#2#3$%
        }%
      }%
    }%
  \endgroup
}
\newcommand*{\smallin}{\scalemath{\mathrel}{.75}{\in}}
\newcommand*{\smallsubset}{\scalemath{\mathrel}{.75}{\subset}}
\newcommand{\email}[1]{\href{mailto:#1}{#1}}
\definecolor{MyRed}{rgb}{0.75, 0.00, 0.2}
\definecolor{MyBlue}{rgb}{0.4, 0.1, 0.70}
\definecolor{JanColor}{rgb}{0.07, 0.50, 0.00}
\definecolor{MartinColor}{rgb}{0.00, 0.31, 0.75}
\definecolor{BranoColor}{rgb}{8.00, 0.60, 0.00}
\definecolor{Gray}{HTML}{888888}
\newcommand{\red}[1]{\color{black}{#1}\color{black}}
         \DeclareMathAlphabet{\mathscr}{U}{BOONDOX-cal}{m}{n}
         \SetMathAlphabet{\mathscr}{bold}{U}{BOONDOX-cal}{b}{n}
         \DeclareMathAlphabet{\mathbscr} {U}{BOONDOX-cal}{b}{n}
         \DeclareMathAlphabet{\mathpzc}{OT1}{pzc}{m}{it}
\DeclareMathAlphabet\mathbfcal{OMS}{cmsy}{b}{n}
\newcommand{\mathcat}[1]{\mathsf{#1}}
 \newcommand{\shift}[2]{ #2[{#1}]   }
 \newcommand{\s}[1]{\shiftbyone{#1}}
 \newcommand{\sT}{ \s{T^*} }
  \newcommand{\restr}[2]{{% we make the whole thing an ordinary symbol
  \left.\kern-\nulldelimiterspace % automatically resize the bar with \right
  #1 % the function
  \vphantom{\big|} % pretend it's a little taller at normal size
  \right|_{#2} % this is the delimiter
  }}
  \newcommand{\restrsmall}[2]{{% we make the whole thing an ordinary symbol
  \left.\kern-\nulldelimiterspace % automatically resize the bar with \right
  #1 % the function
  \vphantom{|} % pretend it's a little taller at normal size
  \right|_{#2} % this is the delimiter
  }}
\newcommand{\B}{\operatorname{\mathbf{B}}}
\newcommand{\D}[1]{\mathcal{D}^{_{\frac12}} \mkern-2mu #1}
\newcommand{\F}[1]{\mathcal{F}  #1 }
\newcommand{\Fw}[1]{\mathcal{F}_w  #1 }
\newcommand{\I}{\mathcal{I}}
\renewcommand{\L}{\mathcal{L}}
\newcommand{\M}{\mathcal{M} }
\newcommand{\Real}{\mathbb{R}}
\newcommand{\R}{\mathbb{R}}
\newcommand{\Z}{\mathbb{Z}}
\newcommand{\Red}{\mathbf{R}}
\newcommand{\Comp}{\mathbf{X}}
\newcommand{\GrVect}{\mathcat{GrVect}}
\newcommand{\OSC}{\mathcat{LinSymp_{-1}}}
\newcommand{\LinOSC}{\OSC}
\newcommand{\Coiso}{\mathcat{LinCoiso_{-1}}}
\newcommand{\QOSC}{\mathcat{LinQSymp_{-1}}}
\newcommand{\LinRed}{\mathcat{Red}_{-1}}
\newcommand{\Cospan}{\mathcat{CospanRed}_{-1}}
\newcommand{\Vect}{ \mathcat{Vect}}
\newcommand{\intover}[1]{\int\limits_{\mathrlap{#1}}}
\newcommand{\intBV}[2]{\intover{#1} \!  e^{ \Sfr{#2} / \hbar}}
\newcommand{\Linfty}{L_{\infty}}
\newcommand{\BV}{\boldsymbol{\Delta}}
\newcommand{\Ber}[1]{\operatorname{Ber} \left(#1\right)}
\newcommand{\Sfree}{{S_{\mathrm{free}}}}
\newcommand{\Sfr}[1]{{S_{\mathrm{free}}^{#1}}}
\newcommand{\SfreeR}{{S^R_{\mathrm{free}}}}
\newcommand{\Sint}{S_{\mathrm{int}}}
\newcommand{\dimsum}[2]{\operatorname{D}_{#1}  ( #2 ) }
\newcommand{\Ann}[1]{\operatorname{Ann} ( #1 )}
\newcommand{\lpartial}[2]{\frac{\partial_L #1}{\partial #2}}
\newcommand{\rpartial}[2]{\frac{\partial_R #1}{\partial #2}}
\newcommand{\sign}[1]{(-1)^{#1}}
\newcommand{\can}{\mathrm{can}}
\newcommand{\base}[1]{\mathbf{#1}}
\newcommand{\diag}[1]{\operatorname{\mathrm{diag}}( #1 )}
\newcommand{\even}{_{\mathrm{even}}} 
\newcommand{\odd}{_{\mathrm{odd}}} 
\renewcommand{\Im}{\operatorname{Im}}
\renewcommand{\ker}{\operatorname{Ker}}
\newcommand{\Ker}{\operatorname{Ker}}
\newcommand{\graph}[1]{\operatorname{Gr}_{#1}}
\newcommand{\id}{\mathbbm{1}}
\newcommand{\define}{ \coloneqq }
\newcommand{\Span}[1]{ \langle #1 \rangle}
\renewcommand{\deg}[1]{ \left\vert #1 \right\vert }
\newcommand{\after}{\circ}
\newcommand{\flip}[1]{\overline{#1}}
\renewcommand{\tilde}[1]{\widetilde{#1}}
\newcommand{\Rhbar}{\mathbb{R}((\hbar))}
\newcommand{\Sym}{\operatorname{Sym}}
\newcommand{\Dens}[1]{\left\vert #1 \right\vert}
\newcommand{\DensWeight}[2]{\Dens{#2}^{#1}}
\newcommand{\HalfDens}[1]{\Dens{#1}^{\frac12}}
\newcommand\dhxrightarrow[1]{%
\xrightarrow{#1}\mathrel{\mkern-14mu}\rightarrow
}
\newcommand\dhxleftarrow[1]{%
\leftarrow\mathrel{\mkern-14mu}\xleftarrow{#1}
}
\begin{document}

\maketitle

\begin{abstract}
\noindent 
Quantum $\Linfty$ algebras are higher loop generalizations of cyclic $\Linfty$ algebras. Motivated by the problem of defining morphisms between such algebras, we construct a linear category of $(-1)$-shifted symplectic vector spaces and distributional half-densities, originally proposed by Ševera. Morphisms in this category can be given both by formal half-densities and Lagrangian relations; we prove that the composition of such morphisms recovers the construction of homotopy transfer of quantum $\Linfty$ algebras. Finally, using this category, we propose a new notion of a relation between quantum $\Linfty$ algebras. 
\end{abstract}

%\noindent\keywords{keyword 1; keyword 2; lower case except names, max 6 }\\

%\noindent\authorroles{For determining author roles, please use following taxonomy: \url{https://casrai.org/credit/}. Please list the roles for each author.} 

\tableofcontents

\section{Introduction}

A symplectic vector space $(V, \omega_V)$ is a vector space $V$ equipped with an antisymmetric and non-degenerate pairing $\omega_V$. A natural notion of a morphism $f\colon (V, \omega_V) \to (W, \omega_W)$ between symplectic vector spaces is a linear map $f\colon V \to W$ such that $\omega_V(v, v') = \omega_W(f(v), f(v'))$. However, this condition immediately forces $f$ to be injective.\footnote{For $v\in \Ker{f}$, we get $\omega_V(v, v') = \omega_W(f(v), f(v'))=0$. Requiring instead that $f$ preserves the inverse of the pairings, we get that $f$ is surjective.} Thus, the naive notion of a category of symplectic vector spaces is rather restrictive. A common solution to this issue is to replace morphisms $f\colon V \to W$ by Lagrangian subspaces of $(V, -\omega_V)\times (W, \omega_W)$, famously advocated by Weinstein and Guillemin-Sternberg \cite{gs:integralgeometry, weinstein:symplecticgeometry, weinstein:symplectic_categories, gs:semi-classical} and inspired by the work of H\"ormander \cite{Hoermander1971}. Graphs of symplectic isomorphisms $V\to W$ give examples of such Lagrangian subspaces.
\smallskip

We would like to consider symplectic vector spaces equipped with an additional algebraic structure, that of a \emph{quantum $\Linfty$ algebra}. These are homotopy and higher loop generalizations of graded Lie algebras equipped with a compatible degree $-1$ symplectic form. They first appeared in string field theory \cite{zwiebach:closed_string}, and can be succintly described using the Batalin-Vilkovisky formalism \cite{bv, markl2001loop, Barannikov2010, doubek-jurco-munster}: a quantum $L_\infty$ algebra on such $(-1)$-shifted symplectic vector space $(V,  \omega)$ is given by a formal power series $S \in\operatorname{Sym} (V^*)[[\hbar]]$ satisfying the quantum master equation
\begin{equation} \BV e^{S/\hbar}  = 0, \end{equation}
where the Batalin-Vilkovisky operator $\BV$ is defined using the degree $-1$ symplectic form.
\smallskip

To combine Lagrangian relations and quantum $\Linfty$ algebras we need a further enlargement of the $(-1)$-shifted symplectic category, proposed by \v{S}evera \cite{severa:qosc}. Morphisms from $V_1$ to $V_2$ in this \emph{quantum $(-1)$-shifted symplectic category} are ``distributional'' half-densities on $V_1\times V_2$, with a Lagrangian relation $L\subset V_1 \times V_2$ seen as $\delta$-like half-density supported on $L$. The Batalin-Vilkovisky Laplacian $\BV$ acts naturally on half-densities, and the composition, given by integration along the common factor, is compatible with $\BV$. In this setting, the quantum $\Linfty$ algebra $S$ can be encoded by a $\BV$-closed morphism from a point $*$ to $V$, given by the half-density $e^{S/\hbar}\sqrt{dV}$ on $*\times V \cong V$.
\medskip

In this paper, we rigorously define a natural class of distributional half-densities on $(-1)$-shifted symplectic vector spaces, which we call \emph{generalized Lagrangians}.\footnote{See also Remark \ref{rmk:distributional_halfdensity}, which further explains why it is natural to see (distributional) half-densities as a generalization of Lagrangian subspaces.} For two such half-densities on $V_1\times V_2$ and $V_2 \times V_3$, we define a Batalin-Vilkovisky integration along $V_2$, giving a partially defined composition\footnote{The composition is defined if we can compute the perturbative Gaussian integral, i.e.\ if a relevant quadratic form is non-degenerate.} on our version of the quantum $(-1)$-shifted symplectic category. Our main application and the original motivation for our work comes from composing the quantum $\Linfty$ algebra $e^{S/\hbar}\sqrt{dV} \colon * \to V$ with a surjective Lagrangian relation $L\colon V \dhxrightarrow{} W$. As both of these distributional half-densities are $\BV$-closed, their composition is $\BV$-closed as well and defines a quantum $\Linfty$ algebra on $W$. This construction goes back to Losev (see e.g.\ references in \cite{cattaneo_mnev:chern-simons_invariants}) and is known as the BV pushforward, homotopy transfer, or the effective action; it was later examined by many authors \cite{cattaneo_mnev:chern-simons_invariants, cattaneo_felder:effective_BV, costello, mnev, ChuangLazarevMinModel, BarannikovSolving, cattaneo_mnev_reshetikhin:perturbative_gauge_theories, braun_maunder:minimal_models,  doubek_jurco_pulmann:quantum_L_infty_and_HPL}. 

\subsection{Content of the paper}
In the second section, we recall some useful facts about Lagrangian relations of $(-1)$-shifted symplectic vector spaces. The content of this section is mostly standard, with many results adapted from the book of Guillemin and Sternberg \cite{gs:semi-classical} to the $(-1)$-shifted setting. We emphasize the canonical factorization of a Lagrangian relation into a reduction and a coreduction. Our results on composition of these factorizations in Sections \ref{sec:double_reductions}, \ref{sec:comp_and_factor} appear to be new.
\smallskip

In the third section, we introduce formal half-densities and formal Batalin-Vilkovisky fiber integrals along surjective Lagrangian relations. The linear $(-1)$-shifted symplectic category is a natural setting for these integrals, providing an invariant way to define fiber BV integration \cite{schwarz:geometry_of_bv, albert_bleile_frohlich:bv_integrals}. Finally, we relate this version of the BV fiber integral to the homological perturbation lemma.
\smallskip

In the fourth section, we start by defining linear distributional half-densities, called \emph{generalized Lagrangians}, on a $(-1)$-shifted symplectic vector space $V$. Roughly, they are given by a \emph{coisotropic subspace} $C\subset V$ and a formal half-density on the coisotropic reduction $C/C^\omega$. Using this generalized notion of a Lagrangian relation, we define a version of the quantum $(-1)$-shifted symplectic category $\QOSC$. The composition is defined using the fiber BV integral along a reduction constructed from the coisotropic relations, and we give some examples of such compositions. We finish by interpreting the construction of the effective action of \cite{doubek_jurco_pulmann:quantum_L_infty_and_HPL} as a commutative triangle in $\QOSC$, and proposing a more general symmetric relation between quantum $\Linfty$ algebras using factorization from Section \ref{sec:osc}.

\subsection{Related and future work}
We will now explain how our work relates to quantum field theory, point to other works studying Lagrangian and coisotropic correspondences, and list some directions of research.
\smallbreak

The category of non-linear (or smooth) Lagrangian relations has received considerable attention both from the viewpoint of symplectic geometry and mathematical physics \cite{weinstein:symplectic_categories, gs:semi-classical,cattaneo_mnev_reshetikhin:classical_BV, CHSaksz}.  Coisotropic relations, which appeared naturally in the present paper when considering distributional half-densities as in \cite{severa:qosc}, are less studied; see, however, \cite{weinstein:coiso_and_poisson_groupoids, SH:coiso} and \red{\cite{cattaneo_mnev:chern-simons_invariants} (see the next subsection for more details about the last reference).}\ Half-densities in the $0$-shifted symplectic  
\pagebreak

\noindent
   setting were extensively studied by Guillemin and Sternberg.\footnote{Let us mention a different way to add half-densities to  the linear symplectic category: The enhanced symplectic category introduced by Guillemin and Sternberg \cite{gs:semi-classical}, following H\"ormander \cite[{Ch.~IV}]{Hoermander1971}, has objects given by symplectic spaces and morphisms given by half-densities on Lagrangian relations \vspace{-1mm}$$(L \subset V\times W , \rho \in \HalfDens{L} ).$$
\noindent One can mimic this construction in the $(-1)$-shifted symplectic case, but quantum $\Linfty$ algebras do not induce any natural half-densities on a Lagrangian subspace, and the resulting Guillemin-Sternberg category of enhanced relations does not involve the integration theory we are looking for. We wish to describe BV fiber integration over the fiber $I$ of a coisotropic reduction (in the sense of Lemma \ref{lemma:weightsandBV}). But in the enhanced symplectic category, one encounters no natural densities on $I$ and composition is defined using solely the canonical isomorphisms from Lemmata \ref{lemma:density_properties}, \ref{lemma:short_exact_dens}.
For these reasons, we \emph{will not use} this similarly-looking construction.}

\subsubsection{Physics}
Lagrangian subspaces and half-densities in this work have their origin in the Batalin-Vilkovisky approach to quantum field theory. The $(-1)$-shifted symplectic vector space $V$ we consider should be seen as the (linear and finite-dimensional approximations of the) space of BRST fields and their antifields. The half-densities relevant to physics are of the form $e^{S/\hbar} \sqrt{dV}$ or $F e^{S/\hbar} \sqrt{dV}$ for an observable $F$. A Lagrangian subspace $L\subset V$ corresponds to choosing a gauge and pairing it with $F$ computes the expectation value of $F$
\begin{equation}\label{eq:physmot}
     \int_L  F e^{S/\hbar} \sqrt{dV}  = \langle F\rangle \int_L e^{S/\hbar} \sqrt{dV}.
\end{equation}

This interpretation can be extended to the category $\QOSC$. Its objects, $(-1)$-shifted symplectic
vector spaces, are spaces of field histories, while morphisms $V\to W$ are ``coupled'' quantum field theories on $V\times W$. The composition in $\QOSC$ is given in terms of the BV fiber integral; the expectation value \eqref{eq:physmot} can be seen as the result of the following composition
    \begin{equation} \label{eq:compositionrhoL}
        \begin{tikzcd}
            * \arrow[rrr, "{ F e^{S/\hbar} \sqrt{dV}}"] &&& V \arrow[rrr, "{\delta_L}"] &&& *
            \end{tikzcd}.
    \end{equation}

Notably, examples of generalized Lagrangian relations appeared in the works of Cattaneo, Mn\"ev and Reshetikhin. In \cite[Section~2.2.2, Remark~2.14]{cattaneo_mnev_reshetikhin:perturbative_gauge_theories}, the authors explain that a coisotropic subspace $\mathcal C \subset \mathcal F$ induces  ``BV pushforward'', a chain map from half-densities on $\mathcal F$ to half-densities on the coisotropic reduction $\underline{\mathcal C}$ of $\mathcal C$. This morphism is the (infinite-dimensional and non-linear) version of post-composition in $\QOSC$ with the Lagrangian relation $\mathcal F \to \underline{\mathcal C}$ given by the coisotropic reduction. 
\smallskip

Remarkably, in an earlier work \cite[Section~3.5] {cattaneo_mnev:chern-simons_invariants}, Cattaneo and Mn\"ev interpret the effective action calculated using a propagator constructed from a homotopy operator $K$ such that $K^2\neq 0$ as a homotopy transfer along ``a Gaussian-smeared Lagrangian subspace'' or, in other words, a ``thick BV integral''. In our terminology, the homotopy $K$ should induce a generalized Lagrangian with an underlying surjective coisotropic relation. We plan to explicitly describe these generalized Lagrangians in future work.

 We also expect to find other examples of generalized Lagrangian relations (distributional half-densities) in quantum field theory, for example coming from the AKSZ formalism \cite{aksz}. 
 \smallskip

 This categorical viewpoint also relates physics with homotopy algebras, see \cite{doubek_jurco_pulmann:quantum_L_infty_and_HPL, JuretalLinfty, homological_quantum_mechanics} and references therein, and \cite{GKW} for a recent highlight. Let us also mention that Lagrangian relations between $0$-shifted symplectic spaces appear in physics in many places; see e.g.\ \cite{bates_weinstein, cattaneo_mnev_reshetikhin:classical_BV,  arvanitakis:defects}.

\subsubsection{Linear logic}
The category of linear Lagrangian is extensively studied in linear logic. It has known presentations by generators and relations and is related to quantum computing, electrical circuits, and others; see \cite{ComfortKissinger2022} and the recent survey in \cite{BCC2024}.

Recently, the category of coisotropic relations was studied by Lorand and Weinstein \cite{LorandWeinstein2016}, and in the linear logic community by Booth, Carette, and Comfort \cite{BCC2024, comfort:stabilizer_codes}, although their physical motivation is different—coisotropic relations are related to \say{discarding} in quantum computing. It would be interesting to extend their approach to the $(-1)$-shifted case and graded coisotropic correspondences, as well as half-densities.

\subsubsection{Shifted symplectic geometry}
The work of Gwilliam and Haugseng \cite{gwilliam_haugseng:bv_functor} on linear BV quantization should be closely related to ours. They consider an $\infty$-category $\mathcal{Q}\text{uad}_1$ of vector spaces $(V, \omega)$ (or more generally modules over a cdga) with a degree $1$ pairing, where $1$-morphisms are given by linear maps $V\to V'$ together with a homotopy between $\omega$ and the pullback of $\omega'$. The truncation of $\mathcal{Q}\text{uad}_1$ to a 1-category, or its image under the $H_\infty$ functor, should be compared to our category $\QOSC$ (up to taking duals, to match conventions). However, at the moment, we do not understand e.g.\ how to get a generalized Lagrangian from the data of a $1$-morphism in $\mathcal{Q}\text{uad}_1$. 
\smallskip

In shifted geometric quantization, it was noted by Safronov \cite{SafronovShiftedGeometricQuantization} that the path integral pairing \eqref{eq:compositionrhoL} of an observable with a Lagrangian should be compared to the $(-1)$-shifted geometric quantization for a given prequantization and polarization. Concretely, the following diagram is compared to \eqref{eq:compositionrhoL} \cite{SafronovShiftedGeometricQuantization}  in
\[\begin{tikzcd}[column sep=1in]
	{*} & {\Omega_\omega(X)} & {*}
	\arrow["{\text{prequantization}}", from=1-1, to=1-2]
	\arrow["{\text{polarization}}", from=1-2, to=1-3]
\end{tikzcd}\]
with the middle object being ($\omega$-twisted differential forms, or half-densities on) a $(-1)$-shifted symplectic space. It is therefore natural to ask whether there exists a coisotropic generalization of polarizations.

\subsubsection{Homotopy transfer}
One can also understand the present work as providing an invariant geometric language for homotopy transfer (of quantum $L_\infty$ algebras). Special deformation retracts or abstract Hodge decompositions \cite{chuang_lazarev:hodge_decomposition}, are a basic object in the theory of homological perturbations\footnote{See \cite[Sec.~1]{doubek_jurco_pulmann:quantum_L_infty_and_HPL} for an overview of the history of homological perturbation theory.} \cite{crainic:hpl, markl:ideal_hpl}, as they can be used to transfer algebraic structures along homotopy equivalences. We prove in Proposition \ref{prop:SDRnondeg} that symplectic special deformation retracts are in bijection with non-degenerate reductions, an arguably more natural notion. It would be interesting to see if more of the theory of homotopy transfer has similar interpretation.

\subsubsection{Non-linear generalizations}
A natural generalization of the present category is to allow non-linear Lagrangian submanifolds of the product as morphisms between $(-1)$-shifted symplectic manifolds. This would allow for more general non-strict morphisms of quantum $\Linfty$ algebras, such as the non-linear symplectic diffeomoprhism constructed in \cite[Sec.~4.3.1]{doubek_jurco_pulmann:quantum_L_infty_and_HPL} which gives a homotopy to the effective action. Since we work perturbatively, it is natural to work with formal or \emph{micro} Lagrangian relations, as introduced in the work of Cattaneo, Dhenin and Weinstein \cite{cattaneo_dherin_weinstein:microsymp1}; in the BV context these were later studied by T. Voronov in \cite{Voronov2017} under the name \emph{thick morphisms}.

\subsubsection{Morphisms of quantum \texorpdfstring{$L_\infty$}{L-infinity} algebras}

Finally, there are other notions of morphisms of quantum $\Linfty$ algebras which we can encode using the linear category $\QOSC$. For example, the natural notion of an equivalence of quantum $\Linfty$ algebras, introduced by Mn\"ev in \cite[Def.~17]{mnev}, is closely related to our Definition \ref{def:rel_of_qLinfty}; our notion allows more general reductions that to homology, while Mn\"ev's allows equality-up-to-homotopy of the effective actions. The category $\QOSC$ also provides an answer to Mn\"ev's question posed in the remark below \cite[Def.~17]{mnev}: both induction and (in our setting linear) isomorphims can be seen as commutative triangles as in Eq. \eqref{eq:triangleunderpoint}. 
In future work, we would like to examine post-compositions with generalized Lagrangians not given by Lagrangian relations, as in the beginning of Section \ref{sec:relsqlinfty}, or construct spans as in Remark \ref{rmk:constructiongspans}. Moreover, we see a hint of a 2-categorical structure naturally appearing in Remark \ref{rmk:R_2functor}, it would be interesting to extend it further. Some of these constructions appear to have better properties in the non-linear setting. We plan to address this in a future work.

\subsection{Acknowledgements}
We would like to thank Pavol Ševera for explaining to us the quantum odd symplectic category and its relation to effective actions. We would also like to thank Owen Gwilliam and Rune Haugseng for discussions about their related work and Christian Saemann for his interest and useful discussions. J.P. would like to thank Cole Comfort and Fridrich Valach for answering his questions about Lagrangian relations, Justin Hilburn for suggestions leading to Remark \ref{rmk:distributional_halfdensity} and comments on shifted geometric quantization; and  Tudor Dimofte, Lukas M\"uller and Pavel Safronov for further helpful discussions. M.Z. would like to thank Roberto Zucchini for inspiring discussions. Finally, we would like to thank the anonymous referees for useful suggestions which improved the paper and supplied multiple missing references.

J.P. was supported by the Postdoc.Mobility grant 203065 of the SNSF. B.J. and M.Z. were supported by GA\v CR  grant EXPRO 19-28628X. B.J. was further supported by GA\v CR grant 24-10887S. M.Z. was supported by the GAUK 283723 grant and benefited from the SVV-2023-260721 project.

\section{Linear \texorpdfstring{$(-1)$}{(-1)}-Symplectic Category}\label{sec:osc}

We start by recalling some elementary definitions from graded linear algebra over the field $\Real$.
\medskip

A \textbf{graded vector space} $V$ is a direct sum of real vector spaces $V= \bigoplus_{i \in \Z} V_i$; \emph{we will always assume that $V$ is finite-dimesional.}\footnote{The results in Section \ref{sec:osc} hold with a weaker condition that $V$ is of finite type, i.e.\ $V_i$ is finite-dimensional for all $i$.} The zero-dimensional vector space will be denoted by $*$. Elements $v \in V_i\subset V$ are called homogeneous of degree $\deg{v}=i$.  The degree shift\footnote{This means that $\mathbb R[k]$ is concentrated in degree $-k$ and that $V\oplus\shift{-k}{V^*}$ will have a pairing of degree $-k$.} is denoted $\left( \shift{j} {V}  \right)_i =V_{i+j}$. The graded dual $V^*$ is defined to have a reflected degree: $\left( V^* \right)_i = \left( V_{-i}  \right)^* $. A linear map $f$ is said to have degree $k \in \Z$ if $\deg{f(v)}=\deg{v}+k$ for any homogeneous element $v \in V$. A morphism of graded vector spaces is a linear map of degree~0. Linear maps $V \to W $ of degree $k$ can be thought of as morphisms in $\GrVect \left(  V, \shift{k} W \right) $. A subspace of a graded vector space $W \subseteq V$ is a linear subspace embedded by a morphism of graded vector spaces. The annihilator of $W\subset V$ is a graded subspace $\Ann W \subset V^*$ with graded components
\begin{equation}
    (\Ann W)_k = \{ \alpha \in (V^*)_k \ \vert \ \restr{\alpha}{W} = 0\}.
\end{equation}
\noindent
Note that in the category $\GrVect$, short exact sequences are well-defined and always split. This is inherited from the category of finite-dimensional real vector spaces $\Vect$ degree-wise.

\begin{definition}
    Let $V$ be a finite-dimensional $\mathbb Z$-graded vector space. The \textbf{dimensional generating function} is defined as the Laurent polynomial\footnote{If we allow $V$ of finite type, $\dimsum{V}{s}$ is an element of $\mathbb{N}[[s,s^{-1}]]$.}
\[
    \dimsum{V}{s} \define  \sum_{k \in \Z} \left( \dim V_k \right) s^k  \in \mathbb{N} [s,s^{-1}]. \qedhere
\]
\vspace{-4mm}
\end{definition}
This object caries all the information of $V$ invariant under isomorphisms of graded vector spaces. Such notation is convenient for manipulations with degree shifts and degree reflections. \\

\begin{lemma}\label{lemma:dimsum} Let $V,W$ be graded vector spaces, $k \in \Z$. Then
\begin{enumerate}
    \item $\dimsum{V \times W}{s} = \dimsum{V}{s} + \dimsum{W}{s},$
    \item $\dimsum{\shift{k} V }{s} = s^{-k} \dimsum{V}{s},$
    \item $\dimsum{V^*}{s} = \dimsum{V}{s^{-1}}.$
    \item Given an invertible linear map $f: V \to W$ of degree $k$, i.e.\ equivalently an invertible morphism in $\GrVect \left(  V ,\shift{k} W \right) $, we have 
    \begin{equation}
         s^{k}  \dimsum{V}{s} = \dimsum{W}{s}.
    \end{equation}
    \vspace{-5mm}
    \item For $W\subset V$ a graded subspace, $\dimsum{V/W}{s} = \dimsum{V}{s} - \dimsum{W}{s}$.
\end{enumerate}
\end{lemma}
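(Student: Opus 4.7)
The plan is to derive each identity by unwinding the definition of $\dimsum{\cdot}{s}$ and performing the corresponding degree-wise computation on each homogeneous component. Since the generating function depends only on the sequence $\{\dim V_k\}$, each claim will follow from a standard identity for finite-dimensional vector spaces applied component-wise, combined with an elementary manipulation of the formal variable $s$.

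For (1) I would simply use that $(V\times W)_k = V_k \oplus W_k$, so $\dim (V\times W)_k = \dim V_k + \dim W_k$; summing against $s^k$ yields the identity. For (5), since $W\subset V$ is a \emph{graded} subspace, one has $W_k\subset V_k$ and $(V/W)_k = V_k/W_k$, whence $\dim (V/W)_k = \dim V_k - \dim W_k$; again summation gives the claim. Both are immediate from the additivity of dimension on short exact sequences in each degree, which the paper has just recalled.

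For (2), I would reindex the sum: the definition $(\shift{k}V)_i = V_{i+k}$ yields
\[
    \dimsum{\shift{k}V}{s} \;=\; \sum_{i\in\Z} (\dim V_{i+k})\, s^i \;=\; \sum_{j\in\Z} (\dim V_j)\, s^{j-k} \;=\; s^{-k}\,\dimsum{V}{s}.
\]
For (3), I would use $(V^*)_k = (V_{-k})^*$, which has dimension $\dim V_{-k}$, so that
\[
    \dimsum{V^*}{s} \;=\; \sum_{k\in\Z} (\dim V_{-k})\, s^k \;=\; \sum_{j\in\Z} (\dim V_j)\, s^{-j} \;=\; \dimsum{V}{s^{-1}}.
\]
Finally, (4) will reduce to (2): by the identification in the statement, an invertible degree-$k$ linear map $f\colon V\to W$ is an isomorphism in $\GrVect(V,\shift{k}W)$, hence $V\cong \shift{k}W$ as graded vector spaces. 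Applying (2) to $\shift{k}W$ gives $\dimsum{V}{s}=\dimsum{\shift{k}W}{s}=s^{-k}\dimsum{W}{s}$, which rearranges to $s^{k}\dimsum{V}{s}=\dimsum{W}{s}$.

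There is no real obstacle here; the whole lemma is routine bookkeeping. The only thing one has to keep straight is the sign convention of the shift $(\shift{k}V)_i = V_{i+k}$ and of the graded dual $(V^*)_k = (V_{-k})^*$, so that the reindexing in (2) and (3) produces $s^{-k}$ and $s^{-1}$ respectively, rather than the opposite sign.
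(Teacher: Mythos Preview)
Your proof is correct. The paper states this lemma without proof, treating it as routine bookkeeping; your argument supplies exactly the straightforward degree-wise verification one would expect, and your handling of the shift and dual conventions matches the paper's definitions.
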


\subsection{\texorpdfstring{$(-1)$}{(-1)}-Shifted Symplectic Vector Spaces}
  We will focus on linear symplectic structures of degree $-1$ (also known as P-structures\footnote{This is an odd(-shifted) symplectic structure. Note that even(-shifted) symplectic structures have perhaps analogous but different behaviour; see \cite{roytenberg:derived_brackets} for their role in generalized geometry.}).

\begin{definition}
    A $(-1)$\textbf{-shifted symplectic vector space} or just $(-1)$\textbf{-symplectic vector space} is a graded vector space $V$ equipped with a non-degenerate graded-antisymmetric bilinear map $\omega: V \times V \to \Real$ of degree $\deg{\omega} = -1$. In other words, a bilinear map such that for all $v,w \in V$,
    \begin{enumerate}
        \item $\omega(v, w) \neq 0$ only if  $\deg{v}+ \deg{w} = 1$,
        \item $\omega(w, v) =  -(-1)^{\deg{w}\deg{v}}\omega(v, w) = -\omega(v, w).$
    \end{enumerate}
    An isomorphism of graded vector spaces $f \colon \left( V , \omega_V \right) \to \left( W , \omega_W \right)$ is said to be a \textbf{symplectic isomorphism} if $f^* \omega_W = \omega_V$.
\end{definition}

\begin{remark}\label{rmk:odd_symp_dim}
The existence of a symplectic structure with a non-zero degree imposes conditions on dimensionality of $V$. Since the map  $V \to V^*$ given by $x \mapsto \omega(x, -)$ is an isomorphism of degree $\deg{\omega} = -1 $, by Lemma \ref{lemma:dimsum} we have 
\begin{equation}
    s \dimsum{V}{s^{-1}} =  \dimsum{V}{s} .
\end{equation}
\end{remark}

\begin{example}\label{ex:shiftedcotangent}
    Define $\sT W \define \s{W^*}  \oplus W$, the \textbf{shifted cotangent bundle} of $W \in \GrVect$, with the cotangent fiber concentrated in degree $+1$. The canonical symplectic structure $\omega_{\mathrm{can}}$ given by
    \begin{equation}
        \omega_{\mathrm{can}}( \alpha\oplus v, \alpha'\oplus v' ) = \alpha(v') - \alpha'(v)
    \end{equation}
    is a $(-1)$-shifted symplectic structure. In case $W$ is purely even, the cotangent fibers are purely odd. In fact, every $(-1)$-shifted symplectic vector space $V$ is linearly symplectomorphic to such odd cotangent bundle. For example, we can choose $\smash{W = \bigoplus_{k\le 0} V_k}$ or $\smash{W = \bigoplus_{k \in \mathbb Z} V_{2k}}$ as the base, and $x \mapsto \omega(x, -)$ is a symplectic isomorphism between the remaining graded components of $V$ and $\s{W^*}$.

    \smallskip
    Schwarz \cite[Thm.~3]{schwarz:geometry_of_bv} proved a more general statement extending the setting to the category of supermanifolds. The idea is that since the odd directions are de Rham exact, a \emph{Moser path method argument} can be carried out to identify all odd symplectic structures with the canonical one up to symplectomorphism.
\end{example}

\begin{definition}[{see e.g.\ \cite[Sec.~2.1]{McDuffSalamon}}]
We define the \textbf{symplectic complement} of a subspace $W \subseteq \left( V , \omega \right)$ by \begin{equation}W^{\omega} = \{v\in V \mid \omega(v, w) = 0, \, \forall \, w\in W \}.
\end{equation}

\pagebreak
\noindent We say a subspace $W$ is 
    \begin{itemize}
        \item \textbf{isotropic} if $W \subseteq W^\omega $,
        \item \textbf{coisotropic} if $W^\omega \subseteq W $,
        \item \textbf{Lagrangian} if $W = W^\omega$,
        \item \textbf{symplectic} if $W \cap W^\omega = 0$. 
    \end{itemize}
     Equivalently, a subspace $W$ is symplectic if $\omega$ restricts to a non-degenerate pairing on $W$.
\end{definition}

\begin{example}
    Let $V$ be a graded vector space and $W\subset V$ a graded subspace. The shifted conormal bundle of $W$, given by
    $$\s{N^*} W = \s{\Ann W} \oplus W \subset \sT V = \s {V^*} \oplus V$$ is a Lagrangian subspace of $\sT V$ from Example \ref{ex:shiftedcotangent}. Note that in the setting of odd symplectic supermanifolds, Schwarz \cite[Thm.~4]{schwarz:geometry_of_bv} proved that any Lagrangian submanifold of $\sT M$ can be smoothly deformed into the shifted conormal bundle of a submanifold of $M$.
\end{example}

\begin{definition} \label{def:coiso_red}
    A \textbf{coisotropic reduction} of a coisotropic subspace $C$ is the quotient space $C/C^\omega$ together with the symplectic structure $\omega_R$ induced on $C/C^\omega$,
    \[
        \left( V , \omega \right) \stackrel{\iota}{\hookleftarrow} C \stackrel{\pi}{\to} \left( C/C^\omega , \omega_R \right), \quad \text{such that} \quad \iota^* \omega = \pi^* \omega_R . \qedhere
    \]
\end{definition}
    We will often omit $\iota$ and write $ \iota (c) \equiv c $ for $c \in C$. Let us record a simple but useful lemma from \cite[Lecture 3]{weinstein:lectures} which translates verbatim to the $(-1)$-shifted setting.

    \begin{lemma}\label{lemma:reduced_lagrangian}
        For $C\subset V$ coisotropic and $L\subset V$ Lagrangian, the image of $L\cap C$ in $C/C^\omega$, denoted $[L]_C$, is Lagrangian.
    \end{lemma}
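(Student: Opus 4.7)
\begin{myproof}
My plan is to establish the stronger equality $[L]_C = [L]_C^{\omega_R}$ directly, from which the Lagrangian property is immediate. The argument will be carried out at the level of preimages under $\pi \colon C \to C/C^\omega$, relying on the duality between sums/intersections and the $\omega$-complement, together with the modular law.

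First, I would set up two basic preparatory facts. Since $[L]_C = \pi(L \cap C)$ and $\ker \pi = C^\omega$, one has
\[
\pi^{-1}([L]_C) = (L \cap C) + C^\omega.
\]
Next, for any subspace $W \subseteq C/C^\omega$, the defining identity $\pi^* \omega_R = \iota^* \omega$ gives
\[
W^{\omega_R} = \pi\bigl( \pi^{-1}(W)^{\omega} \cap C \bigr),
\]
because $[v] \in W^{\omega_R}$ iff $\omega(v, w)=0$ for every $w$ in a lift of $W$, i.e.\ iff $v \in \pi^{-1}(W)^\omega$, together with the requirement $v \in C$.

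Applying this with $W = [L]_C$, the main computation is to simplify $\pi^{-1}([L]_C)^\omega \cap C$. Using the de Morgan-type identities $(A+B)^\omega = A^\omega \cap B^\omega$ and $(A\cap B)^\omega = A^\omega + B^\omega$, together with the involutivity $(C^\omega)^\omega = C$ coming from non-degeneracy, I would compute
\[
\bigl( (L\cap C) + C^\omega \bigr)^\omega \cap C \;=\; \bigl( (L\cap C)^\omega \cap C \bigr) \;=\; \bigl( L^\omega + C^\omega \bigr) \cap C \;=\; (L + C^\omega) \cap C,
\]
where the last equality uses $L = L^\omega$. The modular law, valid because $C^\omega \subseteq C$, then yields $(L + C^\omega) \cap C = (L\cap C) + C^\omega$. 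Pushing forward under $\pi$ gives $[L]_C^{\omega_R} = \pi((L\cap C) + C^\omega) = [L]_C$, as desired.

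The only nontrivial ingredients are the de Morgan identities and involutivity of $(-)^\omega$ in the graded setting; these are standard and can be verified either by interpreting $\omega^\sharp \colon V \to V^*$ as a degree $-1$ isomorphism sending $W^\omega$ to $\operatorname{Ann} W$, or by a dimension count using Lemma \ref{lemma:dimsum} applied coefficient-wise to $\dimsum{\cdot}{s}$. This is the step I expect to write out most carefully, since the grading obscures the intuition from the ungraded case, but no new idea is required beyond the standard linear-algebraic proof.
\end{myproof}
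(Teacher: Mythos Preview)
Your proof is correct and follows essentially the same route as the paper's: both compute $[L]_C^{\omega_R}$ as the image of $(L\cap C)^\omega \cap C = (L + C^\omega)\cap C = (L\cap C) + C^\omega$, invoking the de Morgan identity, $L=L^\omega$, and the modular law (valid since $C^\omega\subseteq C$). The paper's version is more terse, omitting your preparatory formula $W^{\omega_R}=\pi(\pi^{-1}(W)^\omega\cap C)$ and the remark on involutivity in the graded setting, but the argument is the same.
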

    \begin{proof}
        $[L]_C^\omega$ is the image in $C/C^\omega$ of  $(L\cap C)^\omega \cap C =   (L + C^\omega ) \cap C = L \cap C + C^\omega$, where the last equality holds since $C^\omega \subset C$.
    \end{proof}

\begin{example}
    Let us consider a $(-1)$-shifted symplectic vector space $(V, \omega)$. A compatible\footnote{I.e.\ $\omega \colon V\otimes V \to \s \Real$ is a chain map.} differential is a differential $Q\colon V \to V$ such that $\omega(Qx, y) + (-1)^{\deg x} \omega(x, Qy) = 0$. Then $\operatorname{Im}Q$ is isotropic, as $\omega(Qx, Qy) = \pm \omega(x, Q^2 y) = 0$. Elements of the symplectic complement $v\in (\operatorname{Im}Q)^\omega$ have to satisfy, for any $x \in V $;
    $$ \omega(Qx, v) = 0,  \quad \text{equivalently} \quad  \omega(x, Qv) = 0,  \quad \text{or equivalently} \quad  Qv = 0.$$ This means that $(\operatorname{Im}Q)^\omega = \Ker{Q}$. Thus, the cohomology of $Q$ is also the coisotropic reduction of $\Ker Q$. In this example, the differential is zero when restricted to the isotropic subspace $\operatorname{Im}Q$; we will be mostly interested in isotropic subspaces $I\subset V$ such that $\Ker{Q}\cap I = \{0\}$, i.e.\ the opposite situation.
\end{example}

Similarly to the classical case, we can equivalently describe a Lagrangian subspace as a (co)isotrope with appropriate dimensionality. Note that in the graded case, only the sums $\dim L_k + \dim L_{-k+1}$ are determined for Lagrangian $L$.

\begin{lemma}\label{lemma:dimsum_symp}
    Let $W$ be a graded subspace of a $(-1)$-shifted symplectic vector space $\left( V , \omega \right)$. Then
    \begin{equation}\label{eq:Dimsumcomplement}
         s^{-1} \dimsum{W^\omega}{s} =  \dimsum{V}{s^{-1}} - \dimsum{W}{s^{-1}} 
    \end{equation}
    and in particular, an isotropic (or coisotropic) subspace $L\subset V$ is Lagrangian if and only if
    \begin{equation}
        s^{-1} \dimsum{L}{s} =  \dimsum{V}{s^{-1}} - \dimsum{L}{s^{-1}} .
    \end{equation}
\end{lemma}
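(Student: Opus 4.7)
My plan is to reduce everything to a calculation with the dimensional generating function, using the fact that the symplectic form provides a degree $-1$ identification of $V$ with $V^*$.

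First, I would introduce the map $\phi \colon V \to V^*$ defined by $\phi(v) = \omega(v, -)$. Since $\omega$ is non-degenerate and has degree $-1$, $\phi$ is an isomorphism of graded vector spaces of degree $-1$. The key observation is that under this identification, $W^\omega$ corresponds exactly to the annihilator $\Ann W \subseteq V^*$: indeed, $\phi(v) \in \Ann W$ iff $\omega(v,w) = 0$ for all $w \in W$, i.e.\ iff $v \in W^\omega$. Therefore $\phi$ restricts to an invertible linear map of degree $-1$ between $W^\omega$ and $\Ann W$, and by Lemma \ref{lemma:dimsum}(4) this yields
\begin{equation}
    s^{-1}\dimsum{W^\omega}{s} = \dimsum{\Ann W}{s}.
\end{equation}

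Next, I would compute $\dimsum{\Ann W}{s}$ from the short exact sequence $0 \to \Ann W \to V^* \to W^* \to 0$ obtained by restricting functionals (this sequence exists and splits in $\GrVect$). Applying Lemma \ref{lemma:dimsum}(5) gives $\dimsum{\Ann W}{s} = \dimsum{V^*}{s} - \dimsum{W^*}{s}$, and then Lemma \ref{lemma:dimsum}(3) converts the duals into $\dimsum{V}{s^{-1}} - \dimsum{W}{s^{-1}}$. Combining these two identities yields equation \eqref{eq:Dimsumcomplement}.

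For the second claim, if $L$ is isotropic then $L \subseteq L^\omega$ (and if coisotropic, $L^\omega \subseteq L$). In either case, an inclusion of graded subspaces is an equality iff the two sides have the same graded dimensions, equivalently the same $\dimsum{\cdot}{s}$. Specialising \eqref{eq:Dimsumcomplement} to $W = L$ thus shows that $L = L^\omega$ iff $s^{-1}\dimsum{L}{s} = \dimsum{V}{s^{-1}} - \dimsum{L}{s^{-1}}$.

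There is no real obstacle here beyond careful bookkeeping of the degree $-1$ shift from $\omega$; one sanity check I would run is the limiting cases $W=0$ and $W=V$, where \eqref{eq:Dimsumcomplement} should reproduce the self-duality $s^{-1}\dimsum{V}{s} = \dimsum{V}{s^{-1}}$ from Remark \ref{rmk:odd_symp_dim}, and this indeed holds.
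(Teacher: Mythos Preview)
Your proof is correct and follows essentially the same approach as the paper: identify $W^\omega$ with $\Ann W$ via the degree $-1$ map $v \mapsto \omega(v,-)$, then use $V^*/\Ann W \cong W^*$ (equivalently your short exact sequence) together with Lemma~\ref{lemma:dimsum} to deduce the formula, and finish the Lagrangian characterization by the inclusion-plus-equal-dimensions argument.
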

\begin{proof}
    The map $x\mapsto \omega(x, -)$ restricts to an isomorphism $W^\omega \stackrel{\sim}{\to} \Ann{W}$ of degree $-1$, thus by Lemma \ref{lemma:dimsum} we have 
    \begin{equation}
        \dimsum{\Ann{W}}{s} = s^{-1}\dimsum{W^\omega}{s} .
    \end{equation}
    Finally, from $W^* \cong V^*/\Ann{W}$, we have $\dimsum{V}{s^{-1}} = \dimsum{W}{s^{-1}} + \dimsum{\Ann{W}}{s}$ and equation \eqref{eq:Dimsumcomplement} follows. For the second part of the lemma, $L \subseteq L^\omega$ (or $L^\omega \subseteq L $) and $\dimsum{L}{s}=\dimsum{L^\omega}{s}$ together imply $L=L^\omega$.
\end{proof}

\begin{lemma}\label{lemma:ortho_comp_squared}
    $ \left( W^\omega \right)^\omega = W $
\end{lemma}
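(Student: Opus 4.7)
The plan is to establish the equality by combining the trivial inclusion $W \subseteq (W^\omega)^\omega$ with a dimension-count matching $\dimsum{(W^\omega)^\omega}{s} = \dimsum{W}{s}$, using Lemma \ref{lemma:dimsum_symp} twice.

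First I would observe that the inclusion $W \subseteq (W^\omega)^\omega$ is immediate from the definition: if $w \in W$, then by definition of $W^\omega$ we have $\omega(v, w) = 0$ for every $v \in W^\omega$, hence $w \in (W^\omega)^\omega$. (Graded-antisymmetry of $\omega$ means the left and right orthogonals agree, so no sign subtleties arise.)

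Next, the equality will follow if I can show that $\dimsum{(W^\omega)^\omega}{s} = \dimsum{W}{s}$, since any graded subspace inclusion with matching dimensional generating functions is forced to be an equality. Applying Lemma \ref{lemma:dimsum_symp} to $W$ and then to $W^\omega$ gives
\begin{equation}
    s^{-1} \dimsum{W^\omega}{s} = \dimsum{V}{s^{-1}} - \dimsum{W}{s^{-1}},
    \qquad
    s^{-1} \dimsum{(W^\omega)^\omega}{s} = \dimsum{V}{s^{-1}} - \dimsum{W^\omega}{s^{-1}}.
\end{equation}
Substituting $s \mapsto s^{-1}$ in the first identity yields $\dimsum{W^\omega}{s^{-1}} = s^{-1}\bigl(\dimsum{V}{s} - \dimsum{W}{s}\bigr)$. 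Plugging this into the second identity and using the constraint $\dimsum{V}{s^{-1}} = s^{-1}\dimsum{V}{s}$ from Remark \ref{rmk:odd_symp_dim}, the $\dimsum{V}{\cdot}$ terms cancel and leave $s^{-1}\dimsum{(W^\omega)^\omega}{s} = s^{-1}\dimsum{W}{s}$, as desired.

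There is no real obstacle here: the only thing to be slightly careful about is to use Remark \ref{rmk:odd_symp_dim} to replace $\dimsum{V}{s^{-1}}$ in the second identity, so that the two expressions can be directly compared. Finite-dimensionality of $V$ ensures the dimensional generating functions are Laurent polynomials and cancellation is unambiguous. Combining this dimension equality with the inclusion from the first step yields $(W^\omega)^\omega = W$.
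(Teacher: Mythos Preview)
Your proof is correct and follows essentially the same approach as the paper's: the trivial inclusion $W \subseteq (W^\omega)^\omega$, followed by a double application of Lemma~\ref{lemma:dimsum_symp} together with Remark~\ref{rmk:odd_symp_dim} to match dimensional generating functions. The paper simply states this in one line without spelling out the cancellation; your version makes the computation explicit.
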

\begin{proof}
    Clearly, $W \subseteq \left( W^\omega \right)^\omega $. By double application of Lemma \ref{lemma:dimsum_symp} and Remark \ref{rmk:odd_symp_dim}, $\dimsum{\left( W^\omega \right)^\omega}{s} =  \dimsum{W}{s}$. Together, these observations yield the statement.
\end{proof}

\subsubsection{Coisotropes and Non-canonical Decompositions}\label{ssec:coisotropesdec}

Given a coisotropic subspace $C \subset V$, the quotient $C/C^\omega$ has again a natural degree $-1$ symplectic form as in Definition \ref{def:coiso_red}. We will now show that $V$ is isomorphic to $C/C^\omega\oplus \sT{C^\omega}$ in a non-canonical way. For $C$ Lagrangian, this implies that Lagrangian complements always exist.

\begin{prop}\label{prop:red_decomp}
    Let $C \subseteq \left( V, \omega \right) $ be a coisotropic subspace. Denote $I \define C^{\omega} \subseteq C$ its isotropic complement. 
    Then, there exist complements\footnote{$R$ stands for \emph{reduced}, $B$ stands for \emph{boundaries}. This is motivated by the canonical decomposition from Section \ref{ssec:canon_decomp}; the subspace $B$ will consist of (co)boundaries of a differential.} $B\subset V$ of $C$ and $R \subset C$ of $I$ such that
    \begin{enumerate}
        \item $R$ and $R^{\omega} = I \oplus B$ are symplectic subspaces of $V$,
        \item $I$, $B$ are Lagrangian subspaces of $I \oplus B$.
    \end{enumerate} 
    In other words, we have a (non-canonical) direct sum decomposition
    \begin{equation}
        V = R \oplus I \oplus B \quad \text{with } \quad \omega = 
        \begin{pmatrix} \omega_R & 0 & 0 \\ 
                       0 & 0 &  \omega'' \\
                     0 & -\omega'' & 0  \end{pmatrix} ,
    \end{equation} 
    where $\omega_R$ is the induced symplectic form on $R\cong C/I$ and $\omega''$ is the natural pairing of $I$ and $B\cong V/C$. 
\end{prop}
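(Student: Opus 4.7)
My plan is to build the decomposition in two stages: first split $C = I \oplus R$, then split the symplectic complement $R^\omega$ as $I \oplus B$.

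First I would pick any graded complement $R$ of $I = C^\omega$ inside $C$, so that $C = I \oplus R$. To see that $R$ is symplectic, take $r \in R$ pairing trivially with all of $R$; since $r \in C$ and $I$ pairs trivially with $C$ by definition of $C^\omega$, it follows that $r \in C^\omega = I$, forcing $r = 0$. Hence $\omega$ is non-degenerate on $R$, and by Lemma \ref{lemma:dimsum_symp} together with Remark \ref{rmk:odd_symp_dim} the dimensions match, giving $V = R \oplus R^\omega$. The inclusion $R \subset C$ then yields $I = C^\omega \subset R^\omega$, so $I$ lies inside the induced symplectic subspace $R^\omega$.

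Next I would verify that $I$ is Lagrangian in $R^\omega$. It is manifestly isotropic, so by Lemma \ref{lemma:dimsum_symp} it suffices to compare dimensional generating functions. Using Lemma \ref{lemma:dimsum_symp} for $I = C^\omega \subset V$ and for $R \subset V$, together with $\dimsum{C}{s} = \dimsum{I}{s} + \dimsum{R}{s}$ and the identity $s\dimsum{R}{s^{-1}} = \dimsum{R}{s}$ coming from $R$ being symplectic (Remark \ref{rmk:odd_symp_dim}), one obtains exactly the Lagrangian condition for $I$ inside $R^\omega$.

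The nontrivial step, and the main obstacle I expect, is producing a Lagrangian complement $B$ of $I$ in $R^\omega$. I would start with any graded complement $B_0$ of $I$ in $R^\omega$; since $I$ is Lagrangian in $R^\omega$, the restriction $\omega \colon I \times B_0 \to \mathbb{R}$ is a non-degenerate pairing of degree $-1$. I would then look for $B$ of the form $\{b + \phi(b) : b \in B_0\}$ for some degree-$0$ map $\phi \colon B_0 \to I$, solving the resulting condition
\begin{equation}
\omega(b,b') + \omega(b, \phi(b')) + \omega(\phi(b), b') = 0
\end{equation}
for $\phi$ by dualizing $\omega|_{B_0\times B_0}$ through the non-degenerate pairing $\omega|_{I \times B_0}$ and symmetrizing appropriately. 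This argument is the graded analogue of the standard existence of a Lagrangian complement to a given Lagrangian, and carries over verbatim to the $(-1)$-shifted setting.

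Assembling the pieces, $V = R \oplus R^\omega = R \oplus I \oplus B$ with $C = R \oplus I$, so $B$ is a complement of $C$ in $V$. By construction $R \perp R^\omega$ with $R$ symplectic, while $I$ and $B$ are transverse Lagrangians in $R^\omega$ paired non-degenerately by some $\omega''$; the stated block form of $\omega$ then reads off directly, with $\omega_R$ the restriction of $\omega$ to $R$, vanishing $I \times I$ and $B \times B$ blocks, and $\omega''$ the induced pairing between $I$ and $B \cong V/C$.
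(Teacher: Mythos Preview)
Your argument is correct, and it takes a genuinely different route from the paper for the construction of $B$. The paper builds $B$ by an inductive degree-by-degree procedure (Lemma~\ref{lemma:inductive_construction_of_B} in the appendix): it directly constructs an isotropic complement of $C$ in $V$ with the correct dimensional generating function, and only afterwards checks that $I$ and $B$ are Lagrangian in $I\oplus B$ via Lemma~\ref{lemma:dimsum_symp}. You instead first prove that $I$ is already Lagrangian in $R^\omega$, and then produce $B$ by the classical graph-correction trick: start from an arbitrary complement $B_0$, and shear it into $I$ via $\phi = -\tfrac12\,\Psi^{-1}\circ\beta$, where $\Psi\colon I\to B_0^*$ is the degree $-1$ isomorphism induced by $\omega$ and $\beta(b')=\omega(-,b')|_{B_0}$. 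The degree bookkeeping works because both $\Psi$ and $\beta$ have degree $-1$, so $\phi$ has degree $0$, and the factor $\tfrac12$ is harmless over $\mathbb{R}$. Your approach is shorter and more conceptual, reusing the standard existence of Lagrangian complements; the paper's inductive construction is more explicit about how the graded dimensions are being matched and avoids invoking the perfect pairing $I\cong B_0^*[-1]$ at the cost of a separate appendix lemma.
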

\begin{proof}
   \begin{itemize}
    \item[]
    \item \textbf{Choice of $R$:} The projection $C \stackrel{\pi}{\to} C/I$ gives the classical coisotropic reduction from Definition \ref{def:coiso_red}. An arbitrary graded linear complement $R$ of $I\subset C$ is a symplectic subspace of $V$, as $(R, \restr{\omega}{R}) \cong (C/I, \omega_{C/I})$ and thus $\restr{\omega}{R}$ is nondegenerate. Moreover, by Lemma \ref{lemma:ortho_comp_squared}, $R^\omega $ is symplectic as well, since
        \begin{equation}
            R^\omega \cap R = 0, \quad \textnormal{so} \quad R^\omega \cap \left( R ^\omega \right) ^\omega = 0.
        \end{equation}       
       \item \textbf{Choice of $B$:} Using Lemma \ref{lemma:inductive_construction_of_B} of Appendix \ref{ssec:inductive_construction_of_B}, we construct $B$, an isotropic linear complement of $C$ satisfying        \begin{equation}\label{eq:condition_from_appendix}
        s^{-1}\dimsum{I}{s} = \dimsum{B}{s^{-1}}.
       \end{equation}
       To check that  $I$ and  $B$ are Lagrangian subspaces of $I \oplus B\cong R^\omega$, we check the condition from Lemma \ref{lemma:dimsum_symp},
    \begin{align*}
        s^{-1}\dimsum{I}{s} & \stackrel{?}{=} \dimsum{I \oplus B}{s^{-1}} - \dimsum{I}{s^{-1}} = \dimsum{B}{s^{-1}} ,\\% \quad \iff \quad\text{$I$\textit{ is Lagrangian,}} \\
        s^{-1}  \dimsum{B}{s} &  \stackrel{?}{=} \dimsum{I \oplus B}{s^{-1}} - \dimsum{B}{s^{-1}} = \dimsum{I}{s^{-1}} .% \quad \iff \quad\text{$B$\textit{ is Lagrangian.}} 
    \end{align*} 
    These equalities follow from equation \eqref{eq:condition_from_appendix}. \qedhere

\end{itemize}
\end{proof}

\subsection{Linear and Lagrangian Relations}
Recall that a linear relation between vector spaces $U$ and $V$ is a linear subspace $L\subset U\times V$. This defines the category $ \mathcat{LinRel} $ of real finite-dimensional vector spaces and linear relations. Identity is given by the diagonal $\diag{-}$ and composition as the set-theoretic composition,
\begin{equation}\label{eq:comp}
    L_2 \after L_1 \define \{ (u, w)\in U\times W \mid \exists v \in V \text{ such that } (u, v)\in L_1 \text{ and } (v, w)\in L_2 \}.
\end{equation}
The image and kernel of a linear relation $L \subset U \times V$ are defined by
    \begin{align}
        \Im L & \define \left\{ v \in V \ \vert \ \exists u \in U \colon \left(u,v\right) \in L \right\}, \\
        \ker L & \define \left\{ u \in U \ \vert \  \left( u,0 \right) \in L \right\}.
    \end{align}
A transpose of a linear relation $L\subset U\times V$ is 
\begin{equation} L^T \define \{ (v, u)\in V\times U \mid (u, v) \in L \subset U\times V \}. \end{equation}

\noindent
$L$ is called \textbf{injective} if $\Ker{L} = 0$, \textbf{surjective} if $\Im{L}=V$, \textbf{coinjective} if $\Ker{L^T}=0$ and \textbf{cosurjective} if $\Im{L^T}=U$. The relation $L$ is a graph of a linear map $U \to V$ if and only if $L$ is cosurjective and coinjective. If this condition is not satisfied, one should view $L$ as a \emph{partially defined}, \emph{multi-valued} map; the \emph{domain of definition} is $\Im{L^T}$, and the \emph{indeterminacy} is $\Ker{L^T}$ (see e.g.\ \cite{mac_lane:additive_relations}).
\medskip

For a $(-1)$-shifted symplectic vector space $\left(V , \omega \right)$, define $\flip{V}$ as the same graded vector space with an opposite symplectic form $-\omega$. 

\begin{definition}\label{def:osc}
    The objects of the \textbf{linear $(-1)$-symplectic category} $\LinOSC$ are $(-1)$-shifted symplectic vector spaces and morphisms from $V$ to $W$ are \textbf{Lagrangian relations}, i.e.\ Lagrangian subspaces of $\flip{V} \times W$. The identity morphism is given by the diagonal $\diag{V} \subseteq \flip{V} \times V$. Composition of \begin{equation}\begin{tikzcd}
U \arrow[r, "L_1"] & V \arrow[r, "L_2"] & W
\end{tikzcd}\end{equation} is defined as a composition of relations of sets from equation \eqref{eq:comp}.
 \end{definition}

\begin{example}
    The basic example of a Lagrangian relation is the graph $\graph{\phi} \subseteq \flip{V} \times W$ of a symplectic isomorphism $\phi \colon V \to W$, In fact, all isomorpisms in $\LinOSC$ are obtained as such graphs, and the functor $\graph{(-)}$ identifies the category of symplectomorphisms with the maximal subgroupoid (the core) of $\LinOSC$. We will often denote these isomorphisms by a decorated arrow (see also Definition \ref{def:redcored})
    \vspace{-2mm}
    \begin{equation}
         \begin{tikzcd}
            V \arrow[r, "\graph{\phi}", two heads, tail] & W .
            \end{tikzcd}
    \end{equation}
\end{example}

\begin{lemma}\label{lemma:lagrangian_comp}
	The category $\LinOSC$ is well-defined. 
\end{lemma}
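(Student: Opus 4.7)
The plan is to verify three things: that the proposed identity $\diag{V}\subseteq\flip V\times V$ is indeed a Lagrangian relation; that the set-theoretic composition of two Lagrangian relations is again Lagrangian; and that composition is associative with the diagonal acting as a unit. The associativity and unit laws are inherited from the ambient category of linear relations $\mathcat{LinRel}$, so only the two Lagrangian conditions need a proof.

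For the diagonal, $\diag V$ is isotropic, since for $(v,v),(v',v')\in\diag V$ the pairing is $(-\omega_V)(v,v')+\omega_V(v,v')=0$. That it is moreover Lagrangian follows from Lemma \ref{lemma:dimsum_symp}: the condition $s^{-1}\dimsum{\diag V}{s}=\dimsum{\flip V\times V}{s^{-1}}-\dimsum{\diag V}{s^{-1}}$ reduces to $s^{-1}\dimsum{V}{s}=\dimsum{V}{s^{-1}}$, which is precisely Remark \ref{rmk:odd_symp_dim}.

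For the composition, given Lagrangians $L_1\subseteq\flip U\times V$ and $L_2\subseteq\flip V\times W$, the plan is a single application of Lemma \ref{lemma:reduced_lagrangian} in the ambient $(-1)$-symplectic space
\begin{equation}
X\define\flip U\times V\times\flip V\times W.
\end{equation}
The product $L_1\times L_2$ is Lagrangian in $X$ (isotropy is immediate, and the dimensional condition of Lemma \ref{lemma:dimsum_symp} is additive under products). Inside $X$ consider the subspace
\begin{equation}
C\define\flip U\times\bigl\{(v_1,v_2)\in V\times\flip V:v_1=v_2\bigr\}\times W.
\end{equation}
A direct computation using the symplectic form $(-\omega_U)\oplus\omega_V\oplus(-\omega_V)\oplus\omega_W$ shows that $C^\omega=\{0\}\times\{(v,v):v\in V\}\times\{0\}\subseteq C$, so $C$ is coisotropic, and the induced form on the reduction canonically identifies $C/C^\omega$ with $\flip U\times W$. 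By Lemma \ref{lemma:reduced_lagrangian}, the image $[L_1\times L_2]_C$ is Lagrangian in $C/C^\omega\cong\flip U\times W$. Unwinding the definitions, an element of this image is exactly a pair $(u,w)$ for which there exists $v\in V$ with $(u,v)\in L_1$ and $(v,w)\in L_2$; that is, $[L_1\times L_2]_C=L_2\after L_1$.

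The only real subtlety is the graded bookkeeping: one must verify the Lagrangian dimensional condition for $L_1\times L_2$ and correctly identify the induced symplectic form on $C/C^\omega$, both of which are routine via Lemma \ref{lemma:dimsum} and Lemma \ref{lemma:dimsum_symp}. No new ideas beyond the coisotropic-reduction trick are needed, and the argument is entirely parallel to the classical $0$-shifted symplectic case of Guillemin--Sternberg.
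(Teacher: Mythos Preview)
Your proof is correct and follows essentially the same route as the paper: both reduce the claim that $L_2\circ L_1$ is Lagrangian to a single application of Lemma~\ref{lemma:reduced_lagrangian} with the coisotropic $C=\flip U\times\diag{V}\times W$ inside $\flip U\times V\times\flip V\times W$, and both defer associativity and unitality to the ambient category of linear relations. You supply a bit more detail (the explicit dimension check for $\diag V$ and the computation of $C^\omega$), but the argument is the same.
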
	
\begin{proof}
The diagonal relation is Lagrangian and satisfies the identity axiom. Composition of set-theoretic relations is associative, and composing two linear relations gives again a linear relation.

\smallskip
To check that the composition of two Lagrangian relations is Lagrangian, we can use Lemma \ref{lemma:reduced_lagrangian} as in Weinstein \cite[Lecture 3]{weinstein:lectures}.  The subspace $C = \flip{U} \times \diag{V} \times W \subset \flip{U}\times V \times \flip{V}\times W$ is coisotropic, and its coisotropic reduction is $\flip{U} \times \diag{V} \times W/ (* \times \diag{V} \times *) \cong \flip{U}\times W$. The image $ \left[  L_1 \times L_2  \right]_C$ of the Lagrangian $L_1 \times L_2  \subset \flip{U}\times V \times \flip{V}\times W $ is $L_2 \circ 
 L_1$, which is therefore Lagrangian by Lemma~\ref{lemma:reduced_lagrangian}.
\end{proof}

\begin{remark}[Dagger compact closed category]
    The transpose $L \mapsto L^T$ defines a dagger on the symmetric monoidal category $(\OSC, \times)$ \cite[Def~2.2]{selinger:dagger_compact}. Moreover, with $\flip{V}$ as the dual object of $V$, $\OSC$ is a compact closed category, i.e.\ the  internal hom $[V_1, V_2]$ can be computed as $\flip{V_1} \times V_2$ \cite{kelly:compact_closed}. Finally, these are compatible as in \cite[Def.~2.6]{selinger:dagger_compact}, i.e.\ $\OSC$ is dagger compact closed. Dagger compact closed categories (originally introduced as strongly compact closed categories) are a natural setting for (finite-dimensional) quantum mechanics, as proposed by Abramsky and Coecke \cite{abramsky_coecke:1, abramsky_coecke:2}.
\end{remark}

\begin{example}[{Odd version of \cite[Thm.~4.8.1]{gs:semi-classical}}]\label{ex:shiftedtangent}
There is a \emph{shifted cotangent functor}
\begin{equation}
\sT  \colon \GrVect \to \LinOSC,
\end{equation}
defined on objects by $V\mapsto ( \sT V , \omega_\mathrm{can} )$ and on morphisms by sending $f\colon V \to W$ to the Lagrangian
	\begin{equation*}
        \sT  f = \{ (\beta\circ f, v, \beta, f(v)) \mid v \in V, \beta \in W^* \} \ \subset \ \s{V^*} \oplus V \oplus \s{W^*} \oplus W =\overline{\sT  V} \oplus \sT  W.
    \end{equation*}
    \vspace{-4mm}
\end{example}

We end this section by showing that each Lagrangian relation has a coisotropic image, with the corresponding isotrope being the kernel of the transposed relation (see e.g.\ \cite[{p.~945}]{gs:integralgeometry}).

\begin{lemma}\label{lemma:ker_L_T}
       Let $L \colon \left( U , \omega_U \right) \to \left( V , \omega_V \right)$ be a Lagrangian relation. Then \[\ker L^T = \left( \Im L \right)^{\omega_V} \quad \text{and} \quad \ \ker L = \left( \Im L^T \right)^{\omega_U}.\]
    \end{lemma}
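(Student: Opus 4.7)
The plan is to prove both identities at once by observing that they are transposes of each other: once we establish $\ker L^T = (\Im L)^{\omega_V}$ for an arbitrary Lagrangian relation $L$, applying this to $L^T \colon V \to U$ (which is still Lagrangian, since the swap map $\bar V \times U \to \bar U \times V$ sends $L^T$ to $L$ up to a change of sign convention) yields the second identity.

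For the first identity, I would unfold the definitions: $\ker L^T$ consists of those $v \in V$ with $(0, v) \in L \subset \bar U \times V$, while $(\Im L)^{\omega_V}$ consists of those $v \in V$ with $\omega_V(v, v') = 0$ for every $v'$ appearing in some pair $(u', v') \in L$. The key observation is that for any $(u', v') \in L$, the symplectic form on $\bar U \times V$ pairs $(0, v)$ with $(u', v')$ as
\begin{equation}
\bigl((-\omega_U) \oplus \omega_V\bigr)\bigl((0,v),(u',v')\bigr) = -\omega_U(0, u') + \omega_V(v, v') = \omega_V(v, v').
\end{equation}
This immediately shows $(0, v) \in L^{\omega_U \oplus \omega_V}$ if and only if $v \perp_{\omega_V} \Im L$.

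Now I invoke the Lagrangian hypothesis, which gives $L = L^{\omega_U \oplus \omega_V}$. The inclusion $\ker L^T \subseteq (\Im L)^{\omega_V}$ uses only that $L$ is isotropic (so $(0, v) \in L$ forces the pairing above to vanish), while the reverse inclusion uses that $L$ is coisotropic (so the pairing vanishing for all $(u', v') \in L$ forces $(0, v) \in L$). Combining these yields the equality, and the second identity follows by the symmetry argument above.

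There is no real obstacle here; the only subtlety to watch is the sign coming from $\bar U$, which is conveniently absorbed because the first component of $(0, v)$ is zero. A dimension check via Lemma~\ref{lemma:dimsum_symp} could serve as an alternative way to upgrade the isotropic inclusion to equality, but the direct argument above is cleaner and makes transparent which half uses isotropy versus coisotropy of $L$.
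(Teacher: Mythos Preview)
Your proof is correct and essentially identical to the paper's own argument: both compute the pairing $(-\omega_U \oplus \omega_V)((0,v),(u',v')) = \omega_V(v,v')$, use $L = L^\omega$ to conclude the first identity, and then obtain the second by applying the first to $L^T$. Your additional remarks about which inclusion uses isotropy versus coisotropy are a nice clarification but not a departure from the paper's method.
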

    \noindent
    In particular, for a Lagrangian relation, surjectivity is equivalent to coinjectivity and injectivity is equivalent to cosurjectivity.
    \begin{proof}
        By definition, $v \in \ker L^T$ if and only if  $\left( 0,v \right) \in L = L^\omega$. Equivalently, for all $( u' , v' ) \in  L$,
        \begin{equation}
                  0 = -\omega_{U} \oplus \omega_{V} \left( \left( 0,v \right) , \left( u' , v' \right) \right)
        = -\omega_{U} \left( 0,u' \right) + \omega_{V} \left( v,v' \right) = \omega_{V} \left(v,v' \right) .
        \end{equation} Therefore, $v \in \ker L^T$ if and only if $v \in \left( \Im L \right)^{\omega_V}$.
        The second equation is proven from the first by considering $L^T$ in place of $L$.
\end{proof}

\subsection{Reductions and Coreductions}

\begin{definition}\label{def:redcored}
    Let $L \in U \to V$ be a Lagrangian relation. We say $L$ is 
    \begin{itemize}
        \item  a \textbf{reduction}, if $ \ker L^T = 0$ (equivalently $ \Im L = V$) and we denote $U \dhxrightarrow{L} V$, 
        \item a \textbf{coreduction}, if $ \ker L = 0$ (equivalently $ \Im L^T = V$) and we denote \begin{tikzcd}[cramped, column sep=0.5cm]
            U \arrow[r, "L", tail] & V
            \end{tikzcd}. \qedhere
    \end{itemize} 
\end{definition}

\noindent Some useful properties follow from the definition: 
\begin{itemize}
    \item A composition of (co)reductions is a (co)reduction.
    \item A Lagrangian relation is both a reduction and a coreduction \begin{tikzcd}[cramped, column sep=0.7cm]
        U \arrow[r, "L", two heads, tail] & V
        \end{tikzcd} if and only if it is an isomorphism in $\LinOSC$ (i.e.\ a graph of a symplectic isomorphism).
    \item A Lagrangian relation $L\colon U\to V$ is a reduction if and only if $L \after L^T = \id_V$, and a coreduction if and only if $L^T \after L = \id_U$. Moreover, reductions are epimorphisms, and coreductions are monomorphisms in $\LinOSC$ (see Remark \ref{rmk:epi}). We will denote the subcategory of reductions (epimorphisms) by $\LinRed$.
\end{itemize}

     The following proposition shows that every reduction is equal to coisotropic reduction $\text{red}_C$ from Definition \ref{def:coiso_red} up to a post-composition by a symplectic isomorphism
     \begin{equation*}
     \begin{tikzcd}
	V & R
	\arrow["L", two heads, from=1-1, to=1-2]	
\end{tikzcd}
     = \left(\begin{tikzcd}
	 V & & {\Im L^T/ \ker L} & R
	\arrow["\text{red}_{\Im \! L^T}",two heads, from=1-1, to=1-3]
	\arrow[ two heads, tail, from=1-3, to=1-4]
\end{tikzcd}\right).\end{equation*}
     This is a straightforward modification of \cite[Prop. 3.4.2]{gs:semi-classical}.

\pagebreak
\begin{prop}[Reductions are coisotropic reductions]\label{prop:reductions_are_coiso}
    Let $V,R \in \LinOSC$, $L \subseteq \flip{V} \times R$ a graded subspace. Then the following two conditions are equivalent.
    \begin{enumerate}
        \item \label{item:isred}$ L \in \LinOSC (V,R)$ and it is a reduction.
        \item \label{item:excoiso} There exists $ C \subseteq V $ coisotropic and a symplectic isomorphism $\phi \colon C/C^\omega \xrightarrow{\cong} R$ such that \begin{equation}
            L = \left\{ \left( c,r \right) \in \flip{V} \times R \ \vert \ c \in C,  r = \phi \left( \pi (c) \right) \right\}.
        \end{equation}
        where $\pi \colon C \to C/C^\omega$ is the quotient map.
    \end{enumerate}
\end{prop}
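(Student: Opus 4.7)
\begin{myproof}[Plan]
I would prove both implications separately, using Lemma \ref{lemma:ker_L_T} to translate between conditions on $L$ and its transpose, and the dimension-count characterization of Lagrangians from Lemma \ref{lemma:dimsum_symp}.

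For the easy direction \ref{item:excoiso} $\Rightarrow$ \ref{item:isred}, I would take the explicit $L = \{(c, \phi(\pi(c))) : c \in C\}$ and check the three properties in turn. Surjectivity of $L$ is immediate because $\pi$ is surjective and $\phi$ is an isomorphism, so $\Im L = R$ and hence $\ker L^T = (\Im L)^{\omega_R} = 0$ by Lemma \ref{lemma:ker_L_T}; this gives the reduction condition once $L$ is shown to be Lagrangian. Isotropy of $L$ in $\flip{V}\times R$ follows from the fact that $\phi$ is a symplectic isomorphism and $\pi^*\omega_R = \iota^*\omega$, so $-\omega_V(c,c') + \omega_R(\phi\pi c, \phi\pi c') = -\omega_V(c,c') + \omega_V(c,c') = 0$. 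The Lagrangian condition then follows by Lemma \ref{lemma:dimsum_symp} from the dimension count $\dimsum{L}{s} = \dimsum{C}{s}$, together with $\dimsum{\flip V \times R}{s} = \dimsum{V}{s} + \dimsum{C}{s} - \dimsum{C^\omega}{s}$ and the relation $s^{-1}\dimsum{C}{s} = \dimsum{V}{s^{-1}} - \dimsum{C^\omega}{s^{-1}}$, which is exactly Lemma \ref{lemma:dimsum_symp} applied to $C^\omega \subset V$ combined with Lemma \ref{lemma:ortho_comp_squared}.

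For the more substantive direction \ref{item:isred} $\Rightarrow$ \ref{item:excoiso}, the natural candidate for the coisotropic subspace is $C \define \Im L^T \subseteq V$. Lemma \ref{lemma:ker_L_T} gives $C^\omega = (\Im L^T)^{\omega_V} = \ker L$, and the set-theoretic inclusion $\ker L \subseteq \Im L^T$ (since $(c,0)\in L$ witnesses $c \in \Im L^T$) yields $C^\omega \subseteq C$, so $C$ is coisotropic. Next, I would construct $\phi$ by descending the partial map $C \to R$, $c \mapsto r$ where $(c,r) \in L$: well-definedness of the image $r$ uses that if $(c,r),(c,r')\in L$ then $(0, r-r')\in L$ so $r-r' \in \ker L^T = 0$ by the reduction hypothesis; surjectivity uses $\Im L = R$; and the kernel of this partial map is exactly $\ker L = C^\omega$. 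So we obtain a linear isomorphism $\phi \colon C/C^\omega \to R$. It is symplectic because isotropy of $L$ gives $\omega_R(\phi[c], \phi[c']) = \omega_V(c,c') = \omega_R(\pi(c), \pi(c'))$ for any lifts $c, c' \in C$.

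Finally I would check that $L$ coincides with the relation constructed from $(C,\phi)$ by \ref{item:excoiso}. One inclusion is by definition of $C$ and $\phi$; the reverse inclusion holds because any $(c,\phi\pi c)$ lies in $L$ by construction of $\phi$. I expect the main subtle point to be the coisotropy step in the reverse direction — one could be tempted to check $C^\omega \subseteq C$ via a dimension argument, but the cleaner route is the direct identification $C^\omega = \ker L \subseteq \Im L^T = C$ provided by Lemma \ref{lemma:ker_L_T}; once this is recognized, the rest of the proof is essentially formal.
\end{myproof}
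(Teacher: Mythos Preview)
Your proposal is correct and follows essentially the same route as the paper: in both directions you use the identification $C=\Im L^T$, $C^\omega=\ker L$ via Lemma~\ref{lemma:ker_L_T}, and descend the graph of $L$ to the quotient. The only notable difference is in the direction \ref{item:excoiso}$\Rightarrow$\ref{item:isred}: you verify that $L$ is Lagrangian by the dimension count of Lemma~\ref{lemma:dimsum_symp}, whereas the paper checks coisotropy of $L$ directly by taking $(v,\pi_R(d))\in L^\omega$ and showing $v-d\in C^\omega$; both arguments are short and equivalent in spirit.
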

\begin{proof}
    First, we suppose that condition \ref{item:excoiso} holds. The relation $L$ is surjective by definition; let us prove that $L$ is indeed a Lagrangian relation. Denote $\pi_R \equiv \phi \after \pi$. Since $\phi$ is a symplectomorphism, we have $\restr{\omega}{C} = \pi_R^* \omega_R$. From this, the isotropy of $L$ follows:
    \begin{equation}
        -\omega \oplus \omega_R \left( \left( c , \pi_R c \right) , \left( c' , \pi_R c' \right) \right) = -  \omega \left( c , c' \right) + \pi_R^* \omega_R \left( c , c' \right) = 0 , \quad \text{for all } c, c' \in C.
    \end{equation}
    To show that $L$ is coisotropic, let us take arbitrary $(v, \pi_R(d))\in L^\omega$ with $v\in V, d\in C$. This element satisifes, for any $c \in C$,
    \begin{equation}
     0 = -\omega(v, c) + \omega_R(\pi_R(d), \pi_R(c)) = \omega(d-v, c).
    \end{equation}
    In other words, $v-d \in C^\omega$ and since $d\in C$, then $v\in C$ as well and $\pi(v) = \pi(d)$, which means that $L^\omega \subset L$.

\smallskip
    Now let condition \ref{item:isred} hold and $V \dhxrightarrow{L} R$ be a reduction, denote $C \equiv \Im L^T$. By Lemma \ref{lemma:ker_L_T}, $C$ is coisotropic: \begin{equation}
        (\Im L^T )^\omega = \ker L \subseteq \Im L^T.
    \end{equation}
    By coinjectivity of $L$, there exists a map $\pi_R \colon C \to R$ such that $L = \left\{ \left( c , \pi_R (c) \right) \in \flip{V} \times R \ \vert \ c \in C \right\}$.  The projection $\pi_R$ uniquely factors through $\pi$, since $\ker \pi = C^\omega = \ker L = \ker \pi_R$;  the induced symplectic isomorphism $\phi \colon C/\ker{\pi_R} \to \Im{\pi_R} $ 
    \begin{equation}\label{eq:reductionnotation}
        \begin{tikzcd}
            C \arrow[rd, "\pi_R"] \arrow[d, "\pi"'] &   \\
            \frac{C}{C^\omega} \arrow[r, "\phi"']        & R 
            \end{tikzcd}  
    \end{equation}
    is uniquely determined by $\phi(\pi(c)) = \pi_R(c)$.
\end{proof}
In other words, given a reduction $V \dhxrightarrow{L} R$, the coisotrope $C$ is unique since it is  determined by $\Im L^T$, and the symplectic isomorphism $\phi \colon  C/C^\omega \cong R$ is unique, as $\graph{\phi}$ is necessarily equal to the composition \begin{equation}\label{eq:isofromreduction}\graph{\phi} = \bigg(\begin{tikzcd}
	{C/C^\omega} & V & R
	\arrow["\text{red}_C^T",tail, from=1-1, to=1-2]
	\arrow["L", two heads, from=1-2, to=1-3]
\end{tikzcd}\bigg),\end{equation}
since $\graph{\phi}\circ\operatorname{red}_C = L$ implies $\graph{\phi}\circ\operatorname{red}_C\circ\operatorname{red}_C^T = L \circ\operatorname{red}_C^T$ and $\operatorname{red}_C\circ\operatorname{red}_C^T = \id$ for a reduction $\operatorname{red}_C$.

\subsubsection{Factorization}

Crucially, it turns out that any Lagrangian relation can be factored into a reduction followed by a coreduction, with coisotropics given by $\Im L^T$ and $\Im L$ respectively. This is a $(-1)$-shifted symplectic version of the usual claim for linear relations \cite[p.~1045]{mac_lane:additive_relations} or Lagrangian relations \cite[p.~946]{gs:integralgeometry}.

\begin{prop}\label{prop:red_cored_factorization}
    Let $L \in \LinOSC (U,V)$. Let  $L_U$ and $L_V$ be the coisotropic reductions with respect to $\Im L^T \subset U$ and $\Im L \subset V$, respectively. Then  $L_V \circ L \circ L_U^T=:\graph{\phi}$ is an isomorphism and the following diagram commutes.
\begin{equation}\label{eq:factorization}
\begin{tikzcd}
    U \arrow[rd, "L_U"', two heads] \arrow[rrr, "L"] &                                          &                        & V \\
                                                          & \Im L^T/\Ker L \arrow[r, "\graph{\phi}"', two heads, tail] & \Im L/\Ker L^T \arrow[ru, "L_V^T"', tail] &  
    \end{tikzcd}
\end{equation}

Moreover, this factorization is unique in the following sense. For every factorization $L = L_2^T \circ L_1$ where $L_1\colon U \dhxrightarrow{} R$, $L_2 \colon V \dhxrightarrow{} R$ are reductions, there are unique isomorphisms $\psi_1$, $\psi_2$ making the following diagram commute.
\begin{equation}\label{diag:uniqueness_of_factorization}
\begin{tikzcd}
	U && V \\
	& R \\
	{\Im L^T /\Ker L} && {\Im L / \Ker L^T}
	\arrow["L", from=1-1, to=1-3]
	\arrow["{L_1}"{description}, from=1-1, to=2-2, two heads]
	\arrow["{L_2^T}"{description}, from=2-2, to=1-3, tail]
	\arrow["{L_U}"', from=1-1, to=3-1, two heads]
	\arrow["{L_V^T}"', from=3-3, to=1-3, tail]
	\arrow["{\exists!\graph{\psi_1}}"{description}, dashed, from=3-1, to=2-2]
	\arrow["{\exists! \graph{\psi_2}}"{description}, dashed, from=3-3, to=2-2]
	\arrow["{\graph{\phi}}"' , from=3-1, to=3-3, two heads, tail]
\end{tikzcd}
\end{equation}
\end{prop}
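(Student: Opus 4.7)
The setup first needs justifying: by Lemma~\ref{lemma:ker_L_T}, $C_U := \Im L^T \subset U$ is coisotropic with $C_U^\omega = \ker L$, so the coisotropic reduction yields $L_U \colon U \dhxrightarrow{} C_U/\ker L$, and symmetrically for $L_V$. Hence both $L_U$ and $L_V$ are genuine reductions in $\LinOSC$, and the two outer legs of diagram~\eqref{eq:factorization} are defined.

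For the existence part, I would compute $L_V \circ L \circ L_U^T$ directly from the set-theoretic composition of relations. Since $L_U^T = \{([c], c) : c \in \Im L^T\}$, and every $(c,v)\in L$ has $c \in \Im L^T$ automatically, unwinding definitions gives
\begin{equation}
L_V \circ L \circ L_U^T = \bigl\{\bigl([c], [v]\bigr) \ \big\vert \ (c,v) \in L\bigr\}.
\end{equation}
To see this is the graph of a linear map $\phi$, I use linearity of $L$: if $(c,v),(c,v')\in L$ then $(0,v-v')\in L$, so $v-v'\in\ker L^T$, hence $[v]=[v']$. Well-definedness in the first argument is dual: if $c_1-c_2\in\ker L$ then $(c_1-c_2,0)\in L$ and subtracting shows $[v_1]=[v_2]$. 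Bijectivity of $\phi$ is then immediate from the definitions of $\Im L$, $\ker L^T$ and the duals. Since the composition of Lagrangian relations is Lagrangian (Lemma~\ref{lemma:lagrangian_comp}), the graph $\graph{\phi}$ is Lagrangian, which (on a graph of an isomorphism) forces $\phi$ to be a symplectomorphism. Commutativity of \eqref{eq:factorization} then follows from a second direct computation: $L_V^T \circ \graph{\phi} \circ L_U = \{(c, v') : \exists v\in\Im L, (c,v)\in L, v-v'\in \ker L^T\}$, and using $(0, v-v')\in L$ this collapses to $L$.

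For uniqueness, suppose $L = L_2^T \circ L_1$ with $L_i$ reductions to a common $R$. The plan is to recover $\Im L^T$ and $\ker L$ from $L_1$ alone: since $L_2$ is surjective, $\Im L^T = \Im(L_1^T\circ L_2) = \Im L_1^T$; since $L_2$ is a reduction, $\ker L_2^T = (\Im L_2)^\omega = 0$, so composing with $L_2^T$ cannot enlarge the kernel and $\ker L = \ker L_1$. The analogous argument with roles swapped gives $\Im L = \Im L_2^T$ and $\ker L^T = \ker L_2$. By Proposition~\ref{prop:reductions_are_coiso} applied to $L_1$, there is a unique symplectomorphism $\psi_1 \colon \Im L^T/\ker L \xrightarrow{\cong} R$ with $L_1 = \graph{\psi_1} \circ L_U$, and similarly a unique $\psi_2$ for $L_2$. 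Substituting the factorizations of $L_1, L_2$ into $L = L_2^T \circ L_1$ and using $\graph{\phi} = L_V \circ L \circ L_U^T$ gives $\graph{\phi} = \graph{\psi_2}^{-1} \circ \graph{\psi_1}$, which is the commutativity of \eqref{diag:uniqueness_of_factorization}.

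I expect the main obstacle to be purely bookkeeping in the explicit composition computations in the existence step; the symplectic property of $\phi$ itself is free once the Lagrangian-composition lemma is invoked, and the uniqueness essentially reduces to the uniqueness statement inside Proposition~\ref{prop:reductions_are_coiso} combined with the identifications $\Im L_1^T = \Im L^T$ and $\ker L_1 = \ker L$.
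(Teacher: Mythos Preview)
Your proof is correct and follows essentially the same route as the paper's: both establish coisotropy via Lemma~\ref{lemma:ker_L_T}, both compute $L_V \circ L \circ L_U^T$ explicitly and identify it as a symplectic isomorphism, and both derive uniqueness from Proposition~\ref{prop:reductions_are_coiso} after showing $\Im L_1^T = \Im L^T$ and $\Im L_2^T = \Im L$. The only cosmetic differences are that the paper argues $\Phi$ is an isomorphism by noting it is surjective and cosurjective (hence, by Lemma~\ref{lemma:ker_L_T}, a bijection), and verifies commutativity via the one-line inclusion $L \subset L_V^T \circ \Phi \circ L_U$ between Lagrangians of equal dimension, whereas you unwind both steps by direct computation; the content is the same.
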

\begin{proof}
    By Lemma \ref{lemma:ker_L_T}, $\Im L^T$ is coisotropic in $U$ and $\Im L$ in $V$. Thus, the quotients $R_U \define \Im L^T /\Ker L$ and $R_V \define \Im L /\Ker L^T$ are $(-1)$-shifted symplectic and define reductions $L_U$ and $L_V$ by Proposition \ref{prop:reductions_are_coiso}.   Denote the quotient maps by $\pi_U \colon \Im L^T \to R_U$, $\pi_V \colon \Im L \to R_V$ and the composition by $\Phi := L_V \circ L \circ L_U^T \subset \flip{R_U} \times R_V $. In this notation, 
    \begin{equation}
        \Phi = \left\{ \left( r ,s \right) \in \flip{R_U} \times R_V \ \vert \ \exists  (u,v) \in L \colon r=\pi_U (u) , s =   \pi_V (v) \right\}.
    \end{equation}
     To show that $\Phi$ is an isomorphism, it suffices to notice that $\Phi$ is both surjective and cosurjective, and hence $\Phi = \graph{\phi}$ for a symplectic isomorphism $\phi \colon R_U \to R_V$. To check that diagram \eqref{eq:factorization} commutes, it is easy to see that $L \subset L_V^T \circ \Phi \circ L_U$, which implies equality of these two Lagrangian subspaces of $\flip{U}\times V$.
     \smallskip     

    Turning to \eqref{diag:uniqueness_of_factorization}, for any such factorization $L = L_2^T\circ L_1$, we have $\Im{\left(L_2^T\circ L_1\right)} = \Im{L_2^T}$, since $L_1$ is surjective. Thus, the coisotrope giving the reduction $L_2$ is necessarily equal to $\Im{L}$, and similarly $\Im{L_1^T} = \Im{L^T}$. By Proposition \ref{prop:reductions_are_coiso}, we get unique $\psi_{1,2}$ making the left and right triangles in \eqref{diag:uniqueness_of_factorization} commute. The bottom triangle commutes since the whole square commutes and we have \eqref{eq:isofromreduction}.
\end{proof}

\begin{example}
    \label{ex:continuingcotangentexample} Continuing Example \ref{ex:shiftedtangent}, we can interpret Proposition \ref{prop:red_cored_factorization} for linear maps. If $L = \sT f$ for $f\colon U \to V$, we get
    \begin{eqnarray*}
        \Ker{L} =  0 \oplus \Ker{f} \;\subset\; \s{U^*} \oplus U, \quad & \quad \Im{L} = V^*\oplus \Im{f} \; \subset \; \s{V^*} \oplus V, \\
        \Ker{L^T} =  \Ker{f^\text{t}}\oplus 0 \subset  \s{V^*} \oplus V, & \Im{L^T} = \Im{f^t} \oplus U \subset \s{U^*} \oplus U.
    \end{eqnarray*}
    Then, Lemma \ref{lemma:ker_L_T} says that $\Ker{f^\text{t}} = \Ann{\Im{f}}$, while Proposition \ref{prop:red_cored_factorization} gives the isomorphism $U/\Ker{f} \cong \Im{f}$.
\end{example}

\begin{remark}[Epimorphisms are reductions] \label{rmk:epi}
We can now show that not only a reduction $L$ satisfies $L \after L^T = \id$ and is therefore an epimorphism, but the other implication is also true. We can decompose any epimorphism $L \in \LinOSC (U,V)$ as $L = L_V^T \after L_U = L_V^T \after L_V  \after L_V^T \after L_U $ where $L_U$ and $L_V$ are reductions. Since $L$ is epic, we have $L_V^T \after L_V = \id_V$ and thus $L_V$ is an isomorphism and $L$ is a reduction.
\end{remark}
     
\begin{definition}\label{def:factor_cospan}
     Given a Lagrangian relation $L \colon U \to V$, we define its \textbf{factorization cospan} to be a pair of reductions $(L_U , L_V)$ 
    \begin{equation}\label{diag:factor_cospan}
        \begin{tikzcd}
U \arrow[rd, "L_U"', two heads] \arrow[rr, "L", dashed] &   & V \arrow[ld, "L_V", two heads] \\
                        & R &                        
\end{tikzcd}
    \end{equation}
    such that
    \[
        L=L_V^T \after L_U. \qedhere
    \]
    \end{definition}

    Proposition \ref{prop:red_cored_factorization} shows that a factorization cospan always exists and moreover, for a fixed relation $L$, all factorization cospans are uniquely isomorphic in the sense of diagram \eqref{diag:uniqueness_of_factorization}; we will speak of \emph{the} factorization cospan of a Lagrangian relation. Note that since $L_V \after L_V^T = \id_V$, the diagram \eqref{diag:factor_cospan} is commutative in $\OSC$.

\subsection{Spans of Reductions}
\label{sec:double_reductions}
If we have two Lagrangian relations $U \xrightarrow{L_1} V \xrightarrow{L_2} W$, we can form their factorization cospans and get the following diagram in $\OSC$.
\begin{equation}\label{diag:twocompred}\begin{tikzcd}
	U && V && W \\
	& R && {\tilde{R}}
	\arrow["{L_1}", from=1-1, to=1-3]
	\arrow["{L_2}", from=1-3, to=1-5]
	\arrow[two heads, from=1-1, to=2-2]
	\arrow["L"', two heads, from=1-3, to=2-2]
	\arrow["{\tilde{L}}", two heads, from=1-3, to=2-4]
	\arrow[two heads, from=1-5, to=2-4]
\end{tikzcd}\end{equation}
Let us now investigate two natural questions connected with diagram \eqref{diag:twocompred}: whether we can complete the \emph{span} of reductions $R \dhxleftarrow{} V \dhxrightarrow{} \smash{\tilde{R}}$ to a commutative square, and how this square relates to the usual definition of a composition of cospans in terms of pushouts \cite[Sec.~I.2.6]{BenabouBicategories}. We will answer these questions in Corollary \ref{thm:cospans_comp}.
\smallskip 

Let us therefore consider an arbitrary \emph{span of reductions}, i.e.\ is a pair of reductions $ \smash{(L , \tilde{L})}$ with a~common source.
    \begin{equation}\label{eq:doublereduction}
        \begin{tikzcd}
            & V \arrow[ld, "L"', two heads] \arrow[rd, "\tilde{L}", two heads] &           \\
          R &                                                          & \tilde{R}
          \end{tikzcd}
        \end{equation}
\noindent
Transposing $L$, we get a relation $\tilde{L} \circ L^T\colon R \to \tilde{R}$, which can be easily described as
\begin{equation}\label{eq:redcoredexplicit}
    \tilde{L} \circ L^T = \{ (\pi_R(c), \pi_{\tilde{R}}(c)) \mid c \in C \cap \tilde{C} \} \subset R \times \tilde{R},
\end{equation}
where $C\define \Im{L^T}$ and $\pi_R \colon C \to R$ are the coisotrope and projections such that $L = \{(\pi_{R}(c),c)\mid c \in C\}$ using the notation from diagram \eqref{eq:reductionnotation} (and analogously for $\tilde{C}$ and $\pi_{\tilde{R}}$).

\subsubsection{Orthogonal Spans of Reductions}\label{ssec:ortho_double_red}

There is a natural choice of the completion of diagram \eqref{eq:doublereduction} to a square, namely the factorization cospan $R \dhxrightarrow{} S_0 \dhxleftarrow{} \smash{\tilde{R}}$ of the composite $\smash{\tilde{L} \circ L^T\colon R \to \tilde{R}}$. However, the resulting square of reductions is not always commutative.\footnote{As a counter-example, consider a decomposition $V=R \oplus I \oplus B$ from Proposition \ref{prop:red_decomp} let $R$ and $\tilde{R}$ be reductions along $R\oplus I$ and $R\oplus B$, respectively. Then $\tilde{L} \after L^T = \id_R$ and $S_0 \cong R$, but $L \neq \tilde{L}$ unless $I = B = *$.} We will now completely characterize the class of spans of reductions for which this happens.

\begin{definition}\label{def:ortho}
    We say a span of reductions $R \dhxleftarrow{L} V \dhxrightarrow{\tilde L} \tilde{R}$ is \textbf{orthogonal} if $\Ker{L} \perp \Ker{\tilde{L}} $, i.e.\ if for all $i \in \Ker{L}$, $\tilde{i} \in \Ker\tilde{L}$,
    \[
        \omega (i , \tilde{i}) = 0 . \qedhere
    \]
\end{definition}
\noindent Denote $I\define \Ker {L}$ and $\tilde{I}\define \Ker{\tilde{L}}$. Observe that $I \perp \tilde{I}$ is equivalent to $I+\tilde{I}$ being isotropic, which is in turn equivalent to  $C \cap \tilde{C}$ being coisotropic, since $C \cap \tilde{C} = ( I + \tilde{I} )^\omega$. 

\begin{theorem}\label{thm:ortho}
    Consider a span of reductions $R \dhxleftarrow{L} V \dhxrightarrow{\tilde L} \tilde R $. Then the factorization cospan 
\[\begin{tikzcd}
	R && {\tilde{R}} \\
	& {S_0}
	\arrow["{K_0}"', two heads, from=1-1, to=2-2]
	\arrow["{\tilde{L} \circ L^T}", from=1-1, to=1-3]
	\arrow["{\tilde{K}_0}", two heads, from=1-3, to=2-2]
\end{tikzcd}\]
makes the square of reductions \eqref{diag:fact_cospan_square}  commute if and only if $(L, \tilde L)$ is an orthogonal span of reductions. 
\begin{equation}\label{diag:fact_cospan_square}
 \begin{tikzcd}
	& V \\
	R && {\tilde R} \\
	& S_0
	\arrow["L"', two heads, from=1-2, to=2-1]
	\arrow["{\tilde L}", two heads, from=1-2, to=2-3]
	\arrow["{K_0}"', two heads, from=2-1, to=3-2]
	\arrow["{\tilde{K}_0}", two heads, from=2-3, to=3-2]
\end{tikzcd}   
\end{equation}
    Moreover, any cone of reductions under an orthogonal span of reductions $(L, \tilde L)$ uniquely factors through the cone \eqref{diag:fact_cospan_square}. In other words, the pushout of \eqref{eq:doublereduction} in the category of reductions exists if and only if the span of reductions is orthogonal, and it is given by the factorization cospan of $\smash{\tilde L \circ L^T}$.
\end{theorem}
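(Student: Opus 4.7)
The plan is to encode each reduction by its coisotrope and kernel using Proposition~\ref{prop:reductions_are_coiso} and then handle the whole statement by one coisotrope computation. Set $I \define \Ker L$, $\tilde I \define \Ker \tilde L$, $C \define \Im L^T = I^\omega$, $\tilde C \define \Im \tilde L^T = \tilde I^\omega$, and let $\pi_R \colon C \to R$, $\pi_{\tilde R} \colon \tilde C \to \tilde R$ be the quotients from \eqref{eq:reductionnotation}. Since $(I + \tilde I)^\omega = C \cap \tilde C$, Lemma~\ref{lemma:ortho_comp_squared} gives that orthogonality is equivalent to $C \cap \tilde C$ being coisotropic, and equivalently to the containments $I \subset \tilde C$ and $\tilde I \subset C$.

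The key observation handles both directions of the first claim simultaneously. For any reduction $K \colon R \dhxrightarrow{} S$ with defining coisotrope $C_K \subset R$, a direct description of $L^T$ (which sends each $r \in R$ to its $\pi_R$-fiber in $C$) shows that $\Im(K \circ L)^T = L^T(C_K) = \pi_R^{-1}(C_K)$ is a subspace of $C$ automatically containing $I$; dually for any $\tilde K$. Thus if any pair $(K, \tilde K)$ of reductions satisfies $K \circ L = \tilde K \circ \tilde L$, their common defining coisotrope in $V$ must lie in $C \cap \tilde C$ and contain $I + \tilde I$, forcing orthogonality. Conversely, under orthogonality I would check commutativity of \eqref{diag:fact_cospan_square} directly: by \eqref{eq:redcoredexplicit} and Proposition~\ref{prop:red_cored_factorization}, the factorization identity $\tilde L \circ L^T = \tilde K_0^T \circ K_0$ says precisely that for every $c \in C \cap \tilde C$, $K_0(\pi_R(c)) = \tilde K_0(\pi_{\tilde R}(c))$ in $S_0$, which is exactly $(K_0 \circ L)(c) = (\tilde K_0 \circ \tilde L)(c)$; both compositions are defined on $c$ thanks to $I, \tilde I \subset C \cap \tilde C$, and the domains of the two composite relations then coincide.

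For the universal property, given any cone $(K'', \tilde K'')$ with common composite $M'' \colon V \dhxrightarrow{} S''$, the same coisotrope analysis yields $\Im(M'')^T \subseteq C \cap \tilde C$ and $\Ker M'' \supseteq I + \tilde I$, which are exactly the conditions needed for $M''$ to factor through the canonical reduction of $V$ with coisotrope $C \cap \tilde C$ and kernel $I + \tilde I$. After identifying the target $(C \cap \tilde C)/(I + \tilde I)$ with $S_0$, this yields a unique reduction $N \colon S_0 \to S''$ with $N \circ K_0 = K''$ and $N \circ \tilde K_0 = \tilde K''$; its uniqueness is automatic from $K_0 \circ K_0^T = \id_{S_0}$. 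The main obstacle is the bookkeeping of the different presentations of $S_0$—as a subquotient of $R$ via $K_0$, of $\tilde R$ via $\tilde K_0$, and of $V$ via the canonical reduction. To reconcile them, I would compute $\Ker(\tilde L \circ L^T) = \pi_R(C \cap \tilde I)$ from \eqref{eq:redcoredexplicit} and use $\tilde I \subset C$ under orthogonality to rewrite this as $\pi_R(\tilde I)$, giving the identification $S_0 = \pi_R(C \cap \tilde C)/\pi_R(\tilde I) \cong (C \cap \tilde C)/(I + \tilde I)$; once this is in hand, the rest is a diagram chase in $\LinRed$.
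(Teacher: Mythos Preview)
Your proof is correct and follows essentially the same route as the paper: encode reductions by their coisotropes, identify orthogonality with $C\cap\tilde C$ being coisotropic, and observe that cones under $(L,\tilde L)$ correspond to coisotropic subspaces of $C\cap\tilde C$, the maximal one $C\cap\tilde C$ itself being the factorization cospan and hence the pushout. The only organizational difference is that the paper isolates the factoring criterion ``$M$ factors through $L$ iff $\Im M^T\subseteq\Im L^T$'' as a separate Proposition~\ref{prop:factorizingreductions} and invokes it repeatedly, whereas you carry out the equivalent coisotrope computations (e.g.\ $\Im(K\circ L)^T=\pi_R^{-1}(C_K)$ and the identification $S_0\cong(C\cap\tilde C)/(I+\tilde I)$) inline.
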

A cone of reductions $\tilde R \dhxrightarrow{} S \dhxleftarrow{} \tilde R$  under $(L, \tilde L)$ is equivalently given by a reduction  $M \colon V \dhxrightarrow{} S$ which factors through both $L$ and $\smash{\tilde{L}}$. We will thus first study the problem of factoring one reduction through another; Theorem \ref{thm:ortho} will follow by applying the following proposition twice.
    \begin{prop}\label{prop:factorizingreductions}
Consider a pair of reductions $L\colon V \dhxrightarrow{} R$ and $M \colon V\dhxrightarrow{} S$.
\[\begin{tikzcd}
	& V \\
	R \\
	& S
	\arrow["L"', two heads, from=1-2, to=2-1]
	\arrow["M", two heads, from=1-2, to=3-2]
	\arrow["{\exists ? K}"', dashed, two heads, from=2-1, to=3-2]
\end{tikzcd}\]
Then the following are equivalent:
\begin{enumerate}[label=(\arabic*)]
    \item \label{i:pfr:factors} $M$ factors through $L$, i.e.\ there is a reduction $K \colon R \dhxrightarrow{} S$ such that $K \circ L = M$,
    \item \label{i:pfr:eqonS} $M \circ L^T \circ L = M$,
    \item \label{i:pfr:eqonV} $M^T \circ M \circ L^T \circ L = M^T \circ M$,
    \item \label{i:pfr:coisotropes} $\Im M^T \subseteq \Im L^T$.
\end{enumerate}
Moreover, if any of these conditions holds, one has $K = M \circ L^T$ and thus $K$ is unique if it exists.
\end{prop}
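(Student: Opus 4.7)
The plan is to leverage two structural identities: for any reduction $V \dhxrightarrow{L} R$ one has $L \circ L^T = \id_R$, and by Proposition \ref{prop:reductions_are_coiso} the relation $L$ admits the explicit description $L = \{(c, \pi_R(c)) : c \in C\}$ with $C \define \Im L^T$ coisotropic in $V$ and $\Ker L = C^\omega$. The strategy is to establish the pairwise equivalences \ref{i:pfr:factors} $\Leftrightarrow$ \ref{i:pfr:eqonS}, \ref{i:pfr:eqonS} $\Leftrightarrow$ \ref{i:pfr:eqonV}, and \ref{i:pfr:eqonS} $\Leftrightarrow$ \ref{i:pfr:coisotropes} separately; the formula $K = M \circ L^T$ and its uniqueness will fall out of the first of these.

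For \ref{i:pfr:factors} $\Rightarrow$ \ref{i:pfr:eqonS}, I simply compute $M \circ L^T \circ L = K \circ (L \circ L^T) \circ L = K \circ L = M$. For the converse, I set $K \define M \circ L^T$; this is Lagrangian by Lemma \ref{lemma:lagrangian_comp}, surjective because $M$ is, hence a reduction, and \ref{i:pfr:eqonS} rewrites exactly as $K \circ L = M$. Any other $K'$ satisfying $K' \circ L = M$ must obey $K' = K' \circ L \circ L^T = M \circ L^T$, establishing uniqueness. The equivalence \ref{i:pfr:eqonS} $\Leftrightarrow$ \ref{i:pfr:eqonV} is then cosmetic: the forward direction is left-composition with $M^T$, and the converse is left-composition with $M$, using $M \circ M^T = \id_S$ to strip off $M^T \circ M$ from both sides.

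The only implication with computational content is \ref{i:pfr:eqonS} $\Leftrightarrow$ \ref{i:pfr:coisotropes}. Unpacking the composition yields $L^T \circ L = \{(v, v') : v, v' \in C,\ v - v' \in C^\omega\}$. If \ref{i:pfr:eqonS} holds, each $v \in \Im M^T$ has some $s$ with $(v, s) \in M = M \circ L^T \circ L$; the intermediate lift through $L^T \circ L$ forces $v \in C$, proving \ref{i:pfr:coisotropes}. Conversely, given $\Im M^T \subseteq C$, taking $\omega$-complements and invoking Lemma \ref{lemma:ker_L_T} (for $\Ker M = (\Im M^T)^\omega$) yields $C^\omega \subseteq \Ker M$. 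The inclusion $M \subseteq M \circ L^T \circ L$ then follows by pairing each $(v, s) \in M$ with $(v, v) \in L^T \circ L$, and for the reverse inclusion any $(v, s) \in M \circ L^T \circ L$ arises via some $v' \in C \cap \Im M^T$ with $v - v' \in C^\omega \subseteq \Ker M$, so $(v - v', 0) \in M$ and linearity of $M$ together with $(v', s) \in M$ gives $(v, s) \in M$.

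I do not foresee a genuine obstacle: the whole argument reduces to the cancellation identities $L \circ L^T = \id_R$, $M \circ M^T = \id_S$ plus the coisotrope-plus-projection description of reductions. The mildest subtlety is \ref{i:pfr:eqonV} $\Rightarrow$ \ref{i:pfr:eqonS}, where the assumption that $M$ itself is a reduction (i.e.\ surjective, not merely Lagrangian) is genuinely needed to cancel $M^T \circ M$ on the left.
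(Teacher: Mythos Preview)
Your proof is correct and takes essentially the same approach as the paper, relying on the cancellation identities $L \circ L^T = \id_R$, $M \circ M^T = \id_S$ and the explicit description of $L^T \circ L$ in terms of the coisotrope $C = \Im L^T$. The only cosmetic difference is that you establish \ref{i:pfr:eqonS} $\Leftrightarrow$ \ref{i:pfr:coisotropes} directly, whereas the paper connects \ref{i:pfr:eqonV} $\Leftrightarrow$ \ref{i:pfr:coisotropes} by also writing out $M^T \circ M \circ L^T \circ L$ explicitly; your route is marginally more economical.
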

\begin{proof} 
    If $K$ exists, then $K \circ L = M$ implies that  $K = M \circ L^T$ and thus $K$ is unique. Moreover, any relation $K$ making the diagram above commute is necessarily a reduction, since $K \circ L = M$ is surjective. 
    
    Let us now show the equivalence of the four statements. The equivalence of \ref{i:pfr:factors} and \ref{i:pfr:eqonS} is clear since we just need to check that $K \circ L = M$ for our only candidate $K = M \circ L^T$. The equivalence of \ref{i:pfr:eqonS} and \ref{i:pfr:eqonV} is due to $M \after M^T = \id$, as $M$ is a reduction. Finally, let $C_L \define \Im L^T$ and $C_M \define \Im M^T$, then $M^T \circ M$ is the relation\footnote{In fact, the assignemt $C_M \mapsto M^T \circ M$ gives a bijection between coisotropic subspaces $M \subset V$ and symmetric idempotent endomorphisms of $V$ in $\LinOSC$ \cite[Eq.~9.14]{gs:integralgeometry}. The present proposition can be understood as saying that this bijection is order-preserving, with respect to the partial order on idempotents from \cite{Mitsch1986}. See also \cite[{\textsection29}]{HalmosITHS} for an analogous statement for Hilbert spaces.} %by transposing $M^TML^TL = M^TM$, the second equality from Mitsch is also necessarily satisfied.
    \[ M^T \circ M = \{ (c, c') \in C_M\times C_M \mid c - c' \in (C_M)^\omega \} \subseteq V\times V. \]
    Similarly, the composition $M^T \circ M \circ L^T \circ L$ is equal to 
    \[ M^T \circ M \circ L^T \circ L = \{ (c + i_L, c+ i_M) \mid c \in C_L \cap C_M, i_L \in (C_L)^\omega, i_M \in (C_M)^\omega \}  \subseteq V\times V.\]
    The statement \ref{i:pfr:eqonV} is equivalent to the inclusion $M^T \circ M \circ L^T \circ L \subseteq M^T \circ M$, as both are Lagrangian subspaces of $\flip{V}\times V$. This is in turn equivalent to the following three conditions for all $c \in  C_r \cap C_M$, $i_L \in (C_L)^\omega$, $i_M \in (C_M)^\omega$:
    \vspace{-2mm}
    \begin{align*}
        c + i_L & \in C_M, \\
        c+ i_M & \in C_M, \\
        i_L - i_M &\in (C_M)^\omega.
    \end{align*}
    The last condition is equivalent to $(C_L)^\omega \subseteq (C_M)^\omega$, i.e.\ $C_M \subseteq C_L$, and this implies the first condition as $c+i_L \in C_M\cap C_L + (C_L)^\omega \subseteq C_M \cap C_L + (C_M)^\omega \subseteq C_M$. The middle condition is always satisfied. Thus, the inclusion $M^T \circ M \circ L^T \circ L \subset M^T \circ M$ is equivalent to $C_M \subseteq C_L$, i.e.\ the statement \ref{i:pfr:coisotropes}. 
\end{proof}

\begin{proof}[Proof of Theorem \ref{thm:ortho}]
    Consider an arbitrary cone of reductions, and denote $M= K \circ L = \tilde K \circ \tilde L$.
    \begin{equation}\label{diag:square_of_red}
\begin{tikzcd}
	& V \\
	R && {\tilde R} \\
	& S
	\arrow["L"', two heads, from=1-2, to=2-1]
	\arrow["{\tilde L}", two heads, from=1-2, to=2-3]
	\arrow["K"', two heads, from=2-1, to=3-2]
	\arrow["{\tilde{K}}", two heads, from=2-3, to=3-2]
	\arrow["M"{description}, two heads, from=1-2, to=3-2]
\end{tikzcd}
\end{equation}
The reduction $M$ is given (up to a unique isomorphism of $S$) by the coisotrope $D \define \Im M^T$. From Proposition \ref{prop:factorizingreductions} we see that necessarily $D \subset \Im L^T \cap \Im  \tilde L^T$, and vice versa choosing coisotropic $D\subset \Im L^T \cap \Im \tilde{L}^T$ gives $K$ and $\tilde K$ such that $K \circ L = M = \tilde K \circ \tilde L$ (again from Proposition \ref{prop:factorizingreductions}). Thus, commutative squares of the form \eqref{diag:square_of_red} (up to an isomorphism in $S$) are in bijection with coisotropic subspaces of $\Im L^T \cap \Im \tilde{L}^T$. In particular, such commutative squares exist if and only if $\Im L^T \cap \Im \tilde{L}^T$ is itself coisotropic,\footnote{If a subspace $E\subset V$ contains a coisotropic subspace $C \subset E$, then $E$ is also coisotropic, as $E^\omega \subset C^\omega \subset C \subset E$.} i.e.\ if $(L, \tilde L)$ is an orthogonal span of reductions.
\smallskip

Next, we show that the case $\smash{D_0 = \Im {L^T} \cap \Im {\tilde{L}^T}}$ corresponds to $\smash{(K_0, \tilde{K}_0)}$ being the factorization cospan of $\smash{\tilde L \circ L^T}$. By uniqueness of factorization from Proposition \ref{prop:red_cored_factorization}, it is enough to check that $\smash{\tilde{L} \circ L = \tilde{K}_0^T \circ K_0}$. The relation $\smash{\tilde{L} \circ L}$ is given by \eqref{eq:redcoredexplicit},
\begin{equation*}
    \tilde{L} \circ L^T = \{ (\pi_R(c), \pi_{\tilde{R}}(c)) \mid c \in C \cap \tilde{C} \} \subset R \times \tilde{R},
\end{equation*}
\noindent
while $\tilde{K}_0^T \circ K_0$ is equal to
\[ \tilde{K}_0^T \circ K_0   = \tilde L \circ M_0^T \circ M_0 \circ L^T = \{ (\pi_{R}({c_1}), \pi_{\tilde{R}}({c_2})) \mid c_1, c_2 \in C \cap \tilde C, c_1 - c_2 \in (C\cap \tilde C)^\omega\} \in R\times \tilde{R}. \]
By choosing $c_1 = c_2$, we see that $\tilde{L} \circ L^T \subset \tilde{K}_0^T \circ K_0$, which proves the equality of these Lagrangian subspaces of $\flip{R}\times \tilde{R}$.

\smallskip
To prove the pushout property of this square, consider an arbitrary commutative square of reductions (the outer square on the diagram).
\[\begin{tikzcd}[row sep = 1cm]
	& V \\
	R & {S_0} & {\tilde R} \\
	& S
	\arrow["L"', two heads, from=1-2, to=2-1]
	\arrow["{\tilde L}", two heads, from=1-2, to=2-3]
	\arrow["K"', two heads, from=2-1, to=3-2]
	\arrow["{\tilde{K}}", two heads, from=2-3, to=3-2]
	\arrow["{K_0}", from=2-1, to=2-2]
	\arrow["{\tilde {K}_0}"', from=2-3, to=2-2]
	\arrow["N"{description,pos=0.35}, dashed, two heads, from=2-2, to=3-2]
	\arrow["{M_0}"{description}, two heads, from=1-2, to=2-2]
\end{tikzcd}\]
To show $S_0$ is a pushout, we need to construct $N$ as above and check that it is a map of cocones. The reduction $N$ is constructed by factorizing $\smash{M = K\circ L = \tilde K \circ \tilde L}$ through $M_0$, which exists and is unique by Proposition \ref{prop:factorizingreductions}. Finally, we need to check $K = N \circ K_0$, which is equivalent to $K \circ L = N \circ K_0 \circ L$, i.e.\ $M = N \circ M_0$, and similarly for $\tilde K = N \circ \tilde{K}_0$.
\end{proof}

\subsection{Category of Cospans of Reductions}\label{sec:comp_and_factor}
Recall from Section \ref{sec:double_reductions} that we wanted to investigate the factorization cospan of a composition. Let us consider diagram \eqref{diag:twocompred} and add the factorization of $\tilde{L} \circ L^T$ to the bottom.\footnote{We drop the subscript ${}_0$, which denoted the factorization cospan in the previous section, to lighten the notation.}

\begin{equation}
    \label{eq:compfactcospans}
\begin{tikzcd}
	U && V && W \\
	& R && {\tilde{R}} \\
	&& S
	\arrow["{L_1}", from=1-1, to=1-3]
	\arrow["{L_2}", from=1-3, to=1-5]
	\arrow[two heads, from=1-1, to=2-2]
	\arrow["{L^T}", tail, from=2-2, to=1-3]
	\arrow["{\tilde{L}}", two heads, from=1-3, to=2-4]
	\arrow[tail, from=2-4, to=1-5]
	\arrow["K"', two heads, from=2-2, to=3-3]
	\arrow["{\tilde{K}^T}"', tail, from=3-3, to=2-4]
\end{tikzcd}\end{equation}
Since the square and the two triangles commute, the outer triangle gives the factorization cospan of $L_2\circ L_1$, by uniqueness of factorizations from Proposition \ref{prop:red_cored_factorization}.
This way, we have defined the composition of morphisms in the following category.
\begin{definition}
    The \textbf{category of cospans of reductions}, denoted $\Cospan$, has $(-1)$-symplectic vector spaces as objects and isomorphism classes of cospans of reductions as morphisms. Composition of the factorization cospans of $L_1 : U \to V$ and $L_2 :V \to W$ is given by the factorization cospan of $L_2 \after L_1$.
\end{definition}
Usually, the composition in the category of cospans is defined using pushouts. Theorem \ref{thm:ortho} says these two compositions agree exactly when the pushout is defined, i.e.\ when $\smash{(L, \tilde{L})}$ is an orthogonal span of reductions. We now summarize these results in the following corollary.

\begin{cor}[Factorization cospan as an equivalence of categories]\label{thm:cospans_comp} There is an equivalence of categories between the linear $(-1)$-shifted symplectic category and the category of isomorphism classes of cospans of reductions in the linear $(-1)$-shifted symplectic category given by the construction of the factorization cospan;
\begin{equation}
    \OSC \cong \Cospan.
\end{equation}
\noindent
Moreover, assume the Lagrangian relations $L_1$ and $L_2$ \textbf{compose orthogonally}, i.e.\ $\ker L_1 ^T \perp \ker L_2$. Then the composition in $\Cospan$ coincides with the composition of factorization cospans under $L_1$ and $L_2$ along the pushout in the category of reductions $\LinRed$, as displayed in diagram \eqref{eq:comp_cospan_red}.
    \begin{equation}\label{eq:comp_cospan_red}
\begin{tikzcd}
U \arrow[rr, "L_1", dashed] \arrow[rd,  two heads] &                               & V \arrow[rr, "L_2", dashed] \arrow[ld, "L"', two heads] \arrow[rd, "\tilde{L}", two heads] &                                                & W \arrow[ld, two heads]\\
                                                                                                       & R \arrow[rd, "K"', two heads] &                                                                                            & \tilde{R} \arrow[ld, "\tilde{K}", two heads] &                                                                                     \\
                                                                                                       &                               & R'  \arrow[uu, phantom, "\lrcorner"{anchor=center, pos=0.125, rotate=135}]
                                                                                        &                                                &                                                                                    
\end{tikzcd}
\end{equation}    
\end{cor}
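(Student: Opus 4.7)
My plan is to define a functor $F \colon \OSC \to \Cospan$, identity on objects, sending a Lagrangian relation $L\colon U \to V$ to the isomorphism class of its factorization cospan from Definition \ref{def:factor_cospan}. Well-definedness follows from the uniqueness (up to isomorphism of cospans) in Proposition \ref{prop:red_cored_factorization}. Functoriality is built in: $\diag V$ factors trivially through $V$, and the identity $F(L_2 \circ L_1) = F(L_2) \circ F(L_1)$ holds by the very definition of composition in $\Cospan$.

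To show $F$ is an equivalence, essential surjectivity is immediate. For full faithfulness I construct the inverse on hom-sets by $G \colon [(L_U, L_V)] \mapsto L_V^T \circ L_U$; this is independent of the chosen representative because for $(L_U', L_V') = (\graph{\phi} \circ L_U, \graph{\phi} \circ L_V)$ one has $L_V'^T \circ L_U' = L_V^T \circ \graph{\phi}^T \circ \graph{\phi} \circ L_U = L_V^T \circ L_U$, since $\graph{\phi}^T \circ \graph{\phi} = \id$. Proposition \ref{prop:red_cored_factorization} then gives $G \circ F = \id$ directly, while $F \circ G = \id$ follows from its uniqueness clause.

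For the second part, I first translate the hypothesis $\ker L_1^T \perp \ker L_2$ into orthogonality of the span $(L, \tilde L)$ from diagram \eqref{diag:twocompred} in the sense of Definition \ref{def:ortho}. The reduction $L \colon V \dhxrightarrow{} R$ is associated, via Proposition \ref{prop:reductions_are_coiso}, to the coisotrope $\Im L_1 \subset V$, so that $\ker L = (\Im L_1)^\omega = \ker L_1^T$ by Lemma \ref{lemma:ker_L_T}; analogously $\ker \tilde L = (\Im L_2^T)^\omega = \ker L_2$. With orthogonality in hand, Theorem \ref{thm:ortho} produces the pushout $S$ of the span in $\LinRed$ and identifies it with the factorization cospan $\tilde L \circ L^T = \tilde K^T \circ K$.

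Finally I match this pushout-based composition with the one in $\Cospan$. Substituting the middle factorization yields
\[ L_2 \circ L_1 \;=\; \tilde L_W^T \circ \tilde L \circ L^T \circ L_U \;=\; (\tilde K \circ \tilde L_W)^T \circ (K \circ L_U), \]
exhibiting a coreduction-after-reduction factorization $(K \circ L_U,\, \tilde K \circ \tilde L_W)$ of $L_2 \circ L_1$. By the uniqueness clause of Proposition \ref{prop:red_cored_factorization} this must be \emph{the} factorization cospan of $L_2 \circ L_1$, and hence equals $F(L_2) \circ F(L_1)$ in $\Cospan$; it is visibly the cospan obtained by glueing the original factorization cospans over the pushout $S$ in $\LinRed$, as in diagram \eqref{eq:comp_cospan_red}. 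The only mildly technical step is the translation between the two notions of orthogonality via the kernel identification; the remainder is a direct assembly of the preceding propositions.
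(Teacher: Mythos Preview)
Your proof is correct and follows essentially the same route as the paper's. The paper's argument is terser—it appeals to ``the construction above'' (the discussion leading to the definition of $\Cospan$) for the first part and to Theorem \ref{thm:ortho} for the second—but the content is identical: the equivalence is immediate from Proposition \ref{prop:red_cored_factorization} and the way composition in $\Cospan$ is defined, and the pushout statement comes from translating the orthogonality condition to the span $(L,\tilde L)$ and invoking Theorem \ref{thm:ortho} to turn the coreductions in diagram \eqref{eq:compfactcospans} into reductions while preserving commutativity. Your explicit construction of $G$ and the kernel identifications $\ker L = \ker L_1^T$, $\ker \tilde L = \ker L_2$ just spell out what the paper leaves implicit.
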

\begin{proof}
    The first part follows from Proposition \ref{prop:red_cored_factorization} and the construction above. For the second part, $L_1$ and $L_2$ compose orthogonally if and only if $(L,\tilde{L})$ is an orthogonal span of reductions. This assumption enables us to transpose any coreductions in the diagram \eqref{eq:compfactcospans} into reductions while preserving its commutativity, using Theorem \ref{thm:ortho}. 
\end{proof}

\section{Half-Densities and Perturbative BV Integration}
In the previous section, the degree of the symplectic form did not play a big role, apart from some complications when calculating dimensions. In this chapter, we will introduce notions for which it is essential that the symplectic form has an odd degree: half-densities and their perturbative Batalin-Vilkovisky integrals. Our goal is to define a fiber Batalin-Vilkovisky integral along a reduction.
\subsection{Linear Half-Densities}

Densities on graded vector spaces\footnote{Recall that we now assume that our vector spaces are finite-dimensional.} are real-valued functions on the set of bases, transforming with the Berezinian. In general, Berezinian is defined for even automorphisms of free modules over a commutative superalgebra \cite[Sec.~3.3]{ManinGFT}. We will restrict to the case of linear automorphisms of graded vector spaces, where the usual formula simplifies as follows. We will also replace the field $\mathbb R$ with the field $\Rhbar$ of formal Laurent series in powers of $\hbar$ to simplify $\hbar$-dependent calculations. Unless it is explicitly denoted otherwise, the tensor product $\otimes$ is understood over $\Rhbar$.

\begin{definition}
    Let $ A$ be an invertible degree-preserving linear map on a graded\footnote{Graded in $\Z$, which induces a $\Z_2$-grading by parity.} vector space $V = V \even \oplus V \odd$, which naturally decomposes into $A = A \even \oplus A\odd$. We define the \textbf{Berezinian} of $A$ as
\[
    \Ber{ A} = \frac{ \det  A\even }{ \det  A\odd }. \qedhere
\]
\end{definition} 
\noindent We list some simple properties:
\begin{itemize}
    \item  $\Ber{A_1 A_2}= \Ber{A_1}\Ber{A_2}$, $\Ber{ A^{-1}} = \Ber{ A}^{-1} $,
    \item $\Ber{ A^T}=\Ber{ A}$,
    \item  $\Ber{ A_1 \oplus  A_2}= \Ber{ A_1}\Ber{ A_2}$,
    \item For $V$ a $(-1)$-shifted symplectic space and $ A \colon V\to V$ a symplectic isomorphism, $\Ber{ A} = \left( \det  A\even \right)^2 $. 
    
\end{itemize}

\begin{definition} Let $\base{e}$ be a basis of a graded vector space $V$. A \textbf{linear density of weight} $\alpha \in \Real$ on $V$ is a map $\rho \colon \base{e} \mapsto \rho \left( \base{e} \right) \in \Rhbar $ satisfying
    \begin{equation}
        \rho \left( \base{e} \cdot  A \right) = |\Ber{ A}|^\alpha \rho \left( \base{e} \right)
    \end{equation}
    for any invertible linear map $ A$ of degree 0, which acts naturally on $\base{e}$ from the right.
    We denote the one-dimensional vector space of linear densities of weight $\alpha$ on $V$ by $\DensWeight{\alpha}{V}$. For $V=*$ a point, we define\footnote{Motivated by the fact that a zero-dimensional vector space has a unique basis, or by Lemma \ref{lemma:short_exact_dens} with $W=0$.} $\DensWeight{\alpha}{*} = \Rhbar$. We call elements of $\smash{\HalfDens{V}}$ \textbf{linear half-densities}.\footnote{On $(-1)$-shifted symplectic vector spaces, they are the natural objects to integrate along Lagrangian subspaces, see Section \ref{ssec:weightsandBV}.}
\end{definition}

 There is a natural notion of multiplication of densities: $(\rho \cdot \sigma) ( \base{e} ) \define \rho ( \base{e}) \sigma ( \base{e}) $. We can use this to identify a priori different spaces of linear densities.

\begin{lemma}\label{lemma:density_properties} There are following canonical isomorphisms (which we will denote by the \say{$=$} sign).
    \begin{equation}
        \DensWeight{\alpha}{V} \otimes \DensWeight{\beta}{V} = \DensWeight{\alpha+\beta}{V} , \quad \DensWeight{\alpha}{V}=\DensWeight{-\alpha}{V^*}, \quad \DensWeight{\alpha}{V} = \DensWeight{-\alpha}{\s V}.
    \end{equation}
\end{lemma}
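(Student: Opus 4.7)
The plan is to exploit the fact that each space $\DensWeight{\alpha}{V}$ is one-dimensional over $\Rhbar$, so to give the canonical isomorphisms it suffices to exhibit a natural (basis-independent) nonzero element of each pairing, or equivalently a natural map that intertwines the transformation rules. I would treat the three isomorphisms in turn.

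For the first, define the multiplication map $\DensWeight{\alpha}{V}\otimes\DensWeight{\beta}{V}\to\DensWeight{\alpha+\beta}{V}$ pointwise by $(\rho\otimes\sigma)(\base{e})\coloneqq \rho(\base{e})\sigma(\base{e})$. Using the definition of densities,
\begin{equation}
(\rho\otimes\sigma)(\base{e}\cdot A)=|\Ber A|^{\alpha}\rho(\base{e})\,|\Ber A|^{\beta}\sigma(\base{e})=|\Ber A|^{\alpha+\beta}(\rho\otimes\sigma)(\base{e}),
\end{equation}
so the image lies in $\DensWeight{\alpha+\beta}{V}$, and nonvanishing of a product of nonzero 1-dimensional elements gives an isomorphism.

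For the second isomorphism, send $\rho\in\DensWeight{\alpha}{V}$ to the density $\rho^*$ on $V^*$ defined by $\rho^*(\base{e}^*)\coloneqq \rho(\base{e})$, where $\base{e}^*$ is the dual basis of $\base{e}$. The key computation is that under $\base{e}\mapsto \base{e}\cdot A$ the dual basis transforms as $\base{e}^*\mapsto \base{e}^*\cdot (A^{T})^{-1}$. Combining the listed properties $\Ber(A^T)=\Ber(A)$ and $\Ber(A^{-1})=\Ber(A)^{-1}$ gives $|\Ber((A^{T})^{-1})|=|\Ber A|^{-1}$, and hence
\begin{equation}
\rho^*(\base{e}^*\cdot (A^{T})^{-1})=\rho(\base{e}\cdot A)=|\Ber A|^{\alpha}\rho^*(\base{e}^*)=|\Ber((A^{T})^{-1})|^{-\alpha}\rho^*(\base{e}^*),
\end{equation}
so $\rho^*\in\DensWeight{-\alpha}{V^*}$. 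The assignment $\rho\mapsto \rho^*$ is clearly linear and nonzero, hence an isomorphism of one-dimensional spaces.

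For the third isomorphism, note that a basis $\base{e}$ of $V$ is equally a basis of $\s{V}$, and a degree-$0$ automorphism $A$ of $V$ is also a degree-$0$ automorphism of $\s{V}$ — but under the shift, even and odd components swap, so $A_{\text{even}}^{\s V}=A_{\text{odd}}^{V}$ and vice versa. Therefore $\Ber_{\s V}(A)=\Ber_V(A)^{-1}$. Defining $\rho'(\base{e})\coloneqq \rho(\base{e})$ viewed on bases of $\s V$ yields
\begin{equation}
\rho'(\base{e}\cdot A)=|\Ber_V A|^{\alpha}\rho'(\base{e})=|\Ber_{\s V}A|^{-\alpha}\rho'(\base{e}),
\end{equation}
so $\rho'\in\DensWeight{-\alpha}{\s V}$, and again this linear nonzero map of one-dimensional spaces is the desired isomorphism. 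Naturality — i.e.\ that these identifications commute with symplectic or degree-$0$ isomorphisms of the underlying spaces and are independent of the choice of basis used to define them — is automatic from the coordinate-free definitions.

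No step here is genuinely delicate; the only point requiring a moment of care is the shift case, where one must be explicit about how the parity decomposition transfers under $\s{}$ to conclude $\Ber_{\s V}=\Ber_V^{-1}$.
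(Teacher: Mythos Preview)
Your proof is correct and follows exactly the same approach as the paper: multiplication of densities for the first isomorphism, the dual-basis map $\rho\mapsto\rho^*$ with the transformation $\base{e}^*\mapsto\base{e}^*\cdot(A^T)^{-1}$ for the second, and the observation that the shift swaps $A_{\text{even}}\leftrightarrow A_{\text{odd}}$ and hence inverts the Berezinian for the third. Your write-up is simply more explicit than the paper's terse version.
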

\begin{proof}
    The first isomorphism is the multiplication of densities. The second isomorphism sends a density $\rho\in \DensWeight{\alpha}{V}$ to the density $\tilde{\rho}(\base{e}^*) \define \rho(\base{e})$, where $\base{e}^*\subset V^*$ is the dual basis to $\base{e}\subset V$. Transforming $\base{e}$ by $ A$ transforms $\base{e}^*$ by $( A^{-1})^T$, hence $\tilde\rho$ has weight $-\alpha$. The last isomorphism follows form the fact that exchanging $ A\even \leftrightarrow  A\odd$ inverts the Berezinian. 
\end{proof} 

\begin{lemma}\label{lemma:short_exact_dens}
   A short exact sequences of graded vector spaces of the form
    \begin{equation}
        \begin{tikzcd}
            0 \arrow[r] & U \arrow[r , "i"] & V \arrow[r, "p"] & W \arrow[r] & 0
            \end{tikzcd}
    \end{equation}
    induces a canonical isomorphism $ \DensWeight{\alpha}{V} =  \DensWeight{\alpha}{U} \otimes \DensWeight{\alpha}{W}$.
\end{lemma}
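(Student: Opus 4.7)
\begin{myproof}
The plan is to construct the isomorphism by choosing a splitting of the short exact sequence and then verifying that the resulting map does not depend on this choice. First, since we work in $\GrVect$, the sequence splits: pick a section $s\colon W\to V$ with $p\circ s=\id_W$, giving a direct-sum decomposition $V=i(U)\oplus s(W)$. Given bases $\base{e}_U$ of $U$ and $\base{e}_W$ of $W$, concatenate them into a basis $\base{e}_V^{(s)}\define(i(\base{e}_U),s(\base{e}_W))$ of $V$. Then for $\rho_U\in\DensWeight{\alpha}{U}$ and $\rho_W\in\DensWeight{\alpha}{W}$ define
\[
    (\rho_U\otimes\rho_W)(\base{e}_V^{(s)}) \define \rho_U(\base{e}_U)\,\rho_W(\base{e}_W).
\]

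The next step is to check that this prescription defines an element of $\DensWeight{\alpha}{V}$, i.e.\ transforms correctly under a change of basis of $V$. It suffices to verify the transformation law on the subclass of bases of the form $\base{e}_V^{(s)}$, since any basis of $V$ is obtained from one of these by a degree-zero invertible linear map. Under $\base{e}_U\mapsto\base{e}_U\cdot A_U$ and $\base{e}_W\mapsto\base{e}_W\cdot A_W$ (with $s$ fixed), the concatenated basis transforms by the block diagonal map $A_U\oplus A_W$, whose Berezinian factorizes as $\Ber{A_U}\Ber{A_W}$; thus the density transforms with the weight $\alpha$ as required.

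The main technical point, and the key step, is independence of the section $s$. Any other splitting has the form $s'=s+i\circ\phi$ for a unique degree-zero map $\phi\colon W\to U$. The change-of-basis matrix from $\base{e}_V^{(s)}$ to $\base{e}_V^{(s')}$ is then block upper triangular of the form $\bigl(\begin{smallmatrix}\id_U & \phi\\ 0 & \id_W\end{smallmatrix}\bigr)$, whose even and odd parts are both unipotent, so its Berezinian equals $1$. Hence $(\rho_U\otimes\rho_W)(\base{e}_V^{(s)})=(\rho_U\otimes\rho_W)(\base{e}_V^{(s')})$, and the prescription is canonical.

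Finally, one verifies that the resulting linear map $\DensWeight{\alpha}{U}\otimes\DensWeight{\alpha}{W}\to\DensWeight{\alpha}{V}$ is nonzero on nonzero simple tensors, and since all three spaces are one-dimensional, it is an isomorphism. I expect the only delicate point is the independence of splitting; everything else is a direct application of the multiplicativity properties of the Berezinian listed just after its definition.
\end{myproof}
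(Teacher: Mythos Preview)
Your proof is correct and follows essentially the same approach as the paper's: the key step in both is that different extensions of a basis of $i(U)$ to a basis of $V$ are related by a block upper-triangular matrix whose even and odd parts are each unipotent, hence have Berezinian equal to $1$. The paper's version is much terser (it just cites the classical case and records this one observation), while you have spelled out the construction of the map, the block-diagonal check, and the splitting-independence step explicitly; but the underlying argument is identical.
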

\begin{proof}
    Analogously to the classical case \cite[]{gs:semi-classical}. Different extensions of a basis of $i(U)$ to $V$ differ only by an action of $ A$ with upper triangular block matrix structure on $U \oplus W$. Since the block structure is induced on both $U\even \oplus W\even$ and $U\odd \oplus W\odd$, the Berezinian does not depend on the choice of such extension. Then, by $\Ber{ A' \oplus  A}=\Ber{ A'}\Ber{ A}$, the Lemma follows.
\end{proof}
\noindent
Since in $\GrVect$, $U \oplus W \cong U \times W$, a simple corollary of Lemma \ref{lemma:short_exact_dens} is
\begin{equation}
    \DensWeight{\alpha}{U \times W} = \DensWeight{\alpha}{U} \otimes \DensWeight{\alpha}{W}.
\end{equation}

\subsubsection{Linear Half-densities and Measures on Kernels of Lagrangian Relations} \label{ssec:weightsandBV}

    One reason for introducing half-densities is that they induce densities on Lagrangian subspaces. Indeed, for a Lagrangian subspace $L\subset V$, there is a following sequence of canonical isomorphisms (due to \cite[Eq.~(30)]{schwarz:geometry_of_bv}, \cite[Eq.~(3.5)]{khudaverdian} and \cite[Sec.~3]{khudaverdianvoronovDFOSG}).
    \begin{align}  \HalfDens{V} 
     \cong \HalfDens{L} \otimes \HalfDens{V/L} 
    \cong \HalfDens{L} \otimes \HalfDens{\s{L^*}} 
     \cong \HalfDens{L} \otimes \HalfDens{L}
    \cong \DensWeight{}{L},\end{align}
    where we used Lemmata \ref{lemma:density_properties} and \ref{lemma:short_exact_dens}. The isomorphism $V/L \cong \s{L^*}$ is given by $[v]\mapsto \omega(v, -)$. The appearence of the shift $\s{L^*}$ highlights the difference between even and odd symplectic geometry.\footnote{For classical even $\omega$, we get $\DensWeight{\alpha}{V}  
    = \DensWeight{\alpha}{L} \otimes \DensWeight{\alpha}{ L^*} 
    = \DensWeight{\alpha}{L} \otimes \DensWeight{-\alpha}{L} = \R$, that is a proof of the existence of canonical symplectic volume.} This argument can be generalized to a more general reduction $V \dhxrightarrow{} R$ instead of $L\colon V \dhxrightarrow{} *$.
    \begin{lemma}\label{lemma:weightsandBV}
        Let $I\subset V$ be an isotropic subspace, $C \define I^\omega$ and $R \define C/I$. Then there is a canonical isomorphism
        $\HalfDens{V} = \HalfDens{R} \otimes \Dens{I}$.
    \end{lemma}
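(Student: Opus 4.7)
\begin{myproof}
The plan is to generalize the chain of canonical isomorphisms exhibited just before the statement of the lemma for a Lagrangian $L\subset V$ to our setting of an arbitrary isotropic $I \subset V$, by replacing the exact sequence $0\to L \to V \to V/L\to 0$ with the two-step filtration $I \subset C \subset V$ and keeping track of the additional middle factor $R = C/I$.

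First, I would apply Lemma \ref{lemma:short_exact_dens} to the short exact sequence
\begin{equation}
    0 \to C \to V \to V/C \to 0
\end{equation}
to obtain $\HalfDens{V} = \HalfDens{C} \otimes \HalfDens{V/C}$. Next, the contraction $v\mapsto \omega(v,-)|_I$ defines a map $V \to \s{I^*}$ of degree $0$ with kernel $C = I^\omega$, and the dimensional formula from Lemma \ref{lemma:dimsum_symp} together with $\dimsum{\s{I^*}}{s} = s\dimsum{I}{s^{-1}}$ shows this map is surjective, giving the canonical isomorphism $V/C \cong \s{I^*}$. Applying the identifications of Lemma \ref{lemma:density_properties} yields $\HalfDens{V/C} = \HalfDens{\s{I^*}} = \HalfDens{I}$.

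Second, I would apply Lemma \ref{lemma:short_exact_dens} to the short exact sequence
\begin{equation}
    0 \to I \to C \to R \to 0
\end{equation}
giving $\HalfDens{C} = \HalfDens{I}\otimes\HalfDens{R}$. Combining the two identifications produces
\begin{equation}
    \HalfDens{V} \;=\; \HalfDens{I}\otimes\HalfDens{R}\otimes\HalfDens{I} \;=\; \HalfDens{R}\otimes\bigl(\HalfDens{I}\otimes\HalfDens{I}\bigr) \;=\; \HalfDens{R}\otimes\Dens{I},
\end{equation}
where the last equality is the multiplication of densities from Lemma \ref{lemma:density_properties}.

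I don't anticipate any real obstacle here: every step is an appeal to one of the canonical isomorphisms already established in Lemmata \ref{lemma:density_properties} and \ref{lemma:short_exact_dens}, plus the symplectic identification $V/C \cong \s{I^*}$ which is standard. The only mild subtlety worth spelling out in the write-up is the degree-shift bookkeeping in that identification (namely that $\omega$ of degree $-1$ makes $v\mapsto\omega(v,-)|_I$ a degree-$0$ map into $\s{I^*}$ rather than into $I^*$), since this is exactly what makes the odd symplectic case give $\Dens{I}$ on the right-hand side rather than the trivial line as in the even symplectic case.
\end{myproof}
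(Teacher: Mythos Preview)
Your proof is correct and follows essentially the same chain of canonical isomorphisms as the paper: split $\HalfDens{V}$ via $0\to C\to V\to V/C\to 0$, split $\HalfDens{C}$ via $0\to I\to C\to R\to 0$, identify $V/C\cong\s{I^*}$, and collapse the two factors of $\HalfDens{I}$ into $\Dens{I}$ using Lemma~\ref{lemma:density_properties}. Your added justification of the isomorphism $V/C\cong\s{I^*}$ and the degree-shift bookkeeping is a welcome elaboration of what the paper leaves implicit.
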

    \begin{proof} Using Lemmata \ref{lemma:density_properties}, \ref{lemma:short_exact_dens} and the isomorphism $ V/C \cong \s{I^*}$, we have
        \begin{align*}
    \HalfDens{V}
       & \cong  \HalfDens{C} \otimes \HalfDens{V/C},
    \\ & \cong  \HalfDens{R} \otimes \HalfDens{I} \otimes \HalfDens{V/C},
    \\ & \cong  \HalfDens{R} \otimes \HalfDens{I} \otimes \HalfDens{\s{I^*}},
    \\ & \cong  \HalfDens{R} \otimes \HalfDens{I} \otimes \HalfDens{I},
    \\ & \cong  \HalfDens{R} \otimes \DensWeight{}{I}. \qedhere
\end{align*}
    \end{proof}

\subsection{Formal Functions}

A central part of the Batalin-Vilkovisky formalism are integrals of functions of the form $e^{S/ \hbar}$. In order to accomodate such functions and their products, we will consider formal polynomials in $V^*$ and $\hbar^{\pm 1}$, completed with respect to the weight grading of \cite[Sec.~2.2]{braun_maunder:minimal_models}, see \cite[Sec.~2.2]{doubek_jurco_pulmann:quantum_L_infty_and_HPL} for an analysis that easily translates to our setting.
\medskip

The \textbf{weight} of a homogeneous element $$f \in  \Sym^k(V^*) \otimes_\R \R \hbar^g \subset \widehat{\mathrm{Sym}}(V^*) [[ \hbar , \hbar^{-1} ]]$$
is defined to be $w = 2g + k$. Here, $\Sym^k$ is the graded-symmetric algebra given by the quotient of $V^{\otimes k}$ by the ideal generated by elements of the form $\alpha \otimes \beta - (-1)^{\deg{\alpha}\deg{\beta}} \beta \otimes \alpha $ and $\Sym^0 (V) \equiv \Real$ by definition. $\smash{\widehat{\mathrm{Sym}}}$ denotes the completion of the space of polynomial elements to formal series.

\begin{definition}\label{def:formal_functions}
Let $\Fw{V}$ be the space of finite linear combinations of homogeneous weight $w$ elements
\[\Fw{V} \define \bigoplus_{\substack{ k \geq 0 \\ g \in \Z \\ 2g+k=w }} \Sym^k(V^*) \otimes_\R \R \hbar^g.\]
We define the \textbf{space of formal functions} on a graded vector space $V$ as the space of formal series with weight bounded from below
\[
    \F{V} \define \Big\{ f \in  \prod_{w \in \Z} \Fw{V} \mid \text{ the weight components of $f$ vanish for $w < N_f$ for some $N_f \in \mathbb Z$} \Big\}. \qedhere
\]
\end{definition}
\noindent
This way, $\F{V}$ is an algebra, and the product preserves the weight grading. We would like to emphasize that the weight grading just solves a technical triviality.

\subsubsection{BV Algebra}
We will now briefly recall the Batalin-Vilkovisky structure \cite{bv} on the space $\F{V}$, see \cite[Sec.~2.1]{doubek_jurco_pulmann:quantum_L_infty_and_HPL} for more details.  Let $\{ e_i \} _i$ be a basis of $V \in \LinOSC$, $\{ \phi^i \}_i$ the dual basis. We define the matrix $\omega_{ij} \define \omega (e_i , e_j)$ and denote its inverse $\omega^{ij}$. The algebra $\F{V}$ is then spanned by graded-commutative polynomials in $\phi^i$ and $\hbar^{\pm 1}$.

\begin{definition}\label{def:bv}
Define the \textbf{odd Poisson bracket} $\{-,-\} \colon \F{V} \times \F{V} \to \F{V} $ by\footnote{Right partial derivatives are defined by
\begin{equation*}
    \rpartial{F}{\phi^i} = (-1)^{\deg{i} (\deg{F} - \deg{i})} \lpartial{F}{\phi^i}.
\end{equation*}}
    \begin{equation}
        \left\{ f , g \right\} \define \rpartial{f}{\phi^i} \omega^{ij} \lpartial{g}{\phi^j},
    \end{equation}  
     and the \textbf{BV Laplacian} $\BV \colon \F{V} \to \F{V} $ by
      \begin{equation}\label{eq:bv_laplacian}
        \BV  \define \frac12 (-1)^{\deg{i}} \omega^{ij} \frac{\partial^2_L}{\partial \phi^i \partial \phi^j}.
    \end{equation}
    Both of these maps are defined to be $\Rhbar$-linear.
\end{definition}
These two operations define a BV algebra structure on $\F{V}$, i.e.\ $\BV$ is a second-order differential operator of degree 1 which squares to $0$, and $\{-,-\}$ is a degree 1 Poisson bracket satisfying
\begin{equation}
            \BV (fg) =  (\BV f) g + \sign{\deg{f}} f \BV g + \sign{\deg{f}} \{ f , g \}.
\end{equation}
    \begin{remark} Using the odd Poisson bracket, we can give yet another equivalent formulation of orthogonality of spans reductions from Definition \ref{def:ortho}. Let $C, \tilde C \subset V$ be two coisotropic subspaces, with corresponding isotropes $I, \tilde{I}$. Define the \textbf{vanishing ideal} $\I_C$ of $C$ by $\I_C \define \langle \Ann{C}\rangle \subseteq \F{V}$. Then
    \begin{equation}
        I\perp \tilde{I} \quad \text{if and only if} \quad \{ \I_C , \I_{\tilde{C}} \} \subseteq \I_C + \I_{\tilde{C}}.
    \end{equation}
    \end{remark}

Finally, we extend the BV Laplacian to the space of all half-densities, to be thought of as the sections of the square root of the Berezinian bundle over $V$.
\begin{definition}
    The space of \textbf{half-densities} $\D{V}$ on a graded vector space $V$ is defined as the tensor product 
    \begin{equation}
        \D{V} \define \F{V} \otimes \HalfDens{V}.
    \end{equation}
    The BV Laplacian $\BV \colon \D{V} \to \D{V}$ is defined by $\BV \otimes \id\colon \F{V} \otimes \HalfDens{V} \to \F{V} \otimes \HalfDens{V}$.
\end{definition}
\begin{remark}\label{rem:funczerodim}
Note that our definitions imply that for $V = *$, the algebra $\F{V}$ is equal to the algebra of formal Laurent series $\mathbb{R}((\hbar))$, and similarly $\D{V} \cong \mathbb{R}((\hbar))$.
\end{remark}

In fact, it is the space of half-densities on an odd symplectic supermanifold which carries a canonical BV Laplacian \cite{khudaverdian}, see also \cite{severa:origin_of_bv}. In our case, when the manifold is the vector space $V$, there is a unique-up-to-rescaling translation-invariant half-density which induces the BV operator on functions from Definition~\ref{def:bv}.

 \subsection{\texorpdfstring{$(-1)$}{(-1)}-Shifted Symplectic dg Vector Spaces}
 We will equip some of the symplectic vector spaces with a compatible differential. However, we would like to point out that we will not consider symplectic dg vector spaces as objects of a symplectic category; it will be the morphisms which will carry the differential (see Definition \ref{def:gen_Lagr}).
 \begin{definition} \label{def:dg_symp} A \textbf{$(-1)$-shifted symplectic dg vector space} is a $(-1)$-shifted symplectic space $(V, \omega)$ equipped with a degree $1$ differential $Q\colon V \to V$ such that for all $v,w \in V$,
 	\[
 	\omega \left( Q v , w \right) + (-1)^{\deg{v}} \omega \left( v , Q w \right) = 0. \qedhere
 	\]
 \end{definition}
Such differentials are in bijection with elements $\Sfree \in \Sym^2(V^*)$ of degree 0 such that
\begin{equation*}
    \left\{ \Sfree , \Sfree \right\} = 0,
\end{equation*}
i.e.\ $\Sfree$ satisfies the \emph{classical master equation}. This bijection is given by\footnote{If we denote
	$Q(e_j) = Q_j^i e_i,$ $\Sfree = \frac 12 s_{ij}\phi^i \phi^j $
    then the above formula gives 
	$Q^i_j = - \omega^{ik} s_{kj}$. Equivalently, we have $\Sfree(v, w) = (-1)^{\deg{v}} \omega(Qv, w)$.}
\begin{equation}\{\Sfree, - \} \define Q^\text t
    \end{equation}
    where the transpose of a graded linear map $f$ is defined on $\phi \in V^*$ by $f^t ( \phi ) \define (-1)^{(\deg{f}+1)\deg{\phi}} \phi \after f$. We will thus use both $Q$ and $\Sfree$ to refer to a differential on a $(-1)$-shifted symplectic vector space.

 \subsubsection{Non-Degenerate Reductions and Canonical Decomposition}\label{ssec:canon_decomp}
 We will now study ``non-degenerate'' reductions, i.e.\ reductions along which we can define a perturbative Gaussian integral with the kernel given by $e^{\Sfree/\hbar}$. In some forms and special cases, this non-degeneracy condition is well-known among experts and appears e.g.\ in \cite[Lemma 2.5.1, Section 2.7]{costello}. As explained in Proposition  \ref{prop:SDRnondeg} below, such reductions succinctly encode special deformation retracts of symplectic vector spaces or abstract Hodge decompositions \cite{chuang_lazarev:hodge_decomposition} in the language of Lagrangian relations.

 \begin{definition}\label{def:nondeg}
 	Let $\left( V , \omega , Q \right)$ be a $(-1)$-shifted symplectic dg vector space and $\Sfree \in \Sym^2(V^*)$ the corresponding quadratic form. An isotrope $I \subset V$ is said to be \textbf{non-degenerate} if it satisfies any of the following equivalent conditions.
 	\begin{itemize}
 		\item  $\Sfr{I}\define \restr{\Sfree}{I}$, the restriction of $\Sfree$ to $I$, is a non-degenerate pairing.
            \item The matrix $\omega ( e_i , Q e_j )$ is non-degenerate for any basis $\{ e_i \}$ of $I$.
 		\item $I \cap (QI)^{\omega} = \{0\}.$
 	\end{itemize}
    A reduction $L: V \dhxrightarrow{} R$ is said to be a \textbf{non-degenerate reduction} if $\ker L$ is non-degenerate.
 \end{definition}

The utility of this definition is demonstrated in Proposition \ref{prop:canonicaldecomposition}, which proves that any non-degenerate isotrope $I \subset V$ determines a \textbf{canonical decomposition} in the sense of Proposition \ref{prop:red_decomp},
\begin{equation}\label{eq:canonicaldecomposition}
    V = (I \oplus QI)^\omega \oplus I \oplus QI \eqqcolon R_\can \oplus I \oplus B.
\end{equation}

 \begin{prop}\label{prop:canonicaldecomposition}
 	Let $I \subset V$ be a non-degenerate isotrope with respect to $Q$. Then:
 	\begin{enumerate}
        \item $I \cap \Ker Q = \{0\}$.
 		\item $I \cap QI = \{0\}$.
        \item $QI$ is isotropic.
 		\item   $I \oplus QI$ is symplectic.
 		\item \label{item:Iomega} $ ( I \oplus QI )^\omega \oplus I = I^\omega$, thus also $I^\omega / I \cong ( I \oplus QI )^\omega \eqqcolon R_\can $. 
        \item \label{item:decQ} In the decomposition $V  = I \oplus ( I \oplus QI )^\omega \oplus QI$, the only non-zero components of the differential $Q$  are $\restr{Q}{I} \colon I \xrightarrow{\cong} QI$ and possibly $\restr{Q}{R_\can} \colon ( I \oplus QI )^\omega \to ( I \oplus QI )^\omega$.
        
 	\end{enumerate}
 \end{prop}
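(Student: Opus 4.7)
My plan is to prove the six statements in order, since each builds on the previous ones. Throughout I will use the key identity $\Sfr{I}(v,w) = (-1)^{\deg{v}} \omega(Qv,w)$ for $v,w \in I$, together with the three equivalent characterizations of non-degeneracy given in Definition \ref{def:nondeg}, and the compatibility $\omega(Qv,w) = -(-1)^{\deg{v}} \omega(v,Qw)$ from Definition \ref{def:dg_symp}.

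For item 1, if $v \in I \cap \ker Q$, then for every $w \in I$ we get $\Sfr{I}(v,w) = (-1)^{\deg{v}}\omega(Qv,w) = 0$, so non-degeneracy of $\Sfr{I}$ forces $v=0$. Item 2 follows immediately: any $v \in I \cap QI$ can be written $v = Qw$ with $w \in I$, so $Qv = Q^2 w = 0$, reducing to item 1. For item 3 the isotropy computation is $\omega(Qu, Qv) = \pm \omega(u, Q^2 v) = 0$.

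For item 4, I will show that the restriction $\restr{\omega}{I \oplus QI}$ is non-degenerate by organizing it in the block structure coming from the decomposition $I \oplus QI$. The diagonal blocks vanish (isotropy of $I$ and of $QI$), and the off-diagonal block $\omega(e_i, Qe_j)$ is precisely the non-degeneracy condition of Definition \ref{def:nondeg}. Item 5 then follows by combining this symplecticity with dimension counting: since $I \oplus QI$ is symplectic, $V = (I \oplus QI) \oplus (I \oplus QI)^\omega$, and both $(I \oplus QI)^\omega$ and $I$ sit inside $I^\omega$. Their intersection is trivial, because if $v \in (I \oplus QI)^\omega \cap I$ then in particular $\omega(v, Qw) = 0$ for all $w \in I$, i.e.\ $\Sfr{I}(w,v) = 0$ for all $w \in I$, forcing $v = 0$. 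A check using Lemma \ref{lemma:dimsum_symp} that $\dimsum{(I\oplus QI)^\omega \oplus I}{s} = \dimsum{I^\omega}{s}$ concludes the equality.

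The potentially trickiest step is item 6, though the computation is still routine. I would first observe that $I \oplus QI$ is $Q$-invariant (since $Q$ vanishes on $QI$ and sends $I$ into $QI$), and then use the compatibility of $Q$ with $\omega$ to deduce that $R_\can = (I \oplus QI)^\omega$ is also $Q$-invariant: for $v \in R_\can$ and any $u = u_1 + Qu_2 \in I \oplus QI$, one has $\omega(Qv, u) = \pm \omega(v, Qu) = \pm\omega(v, Qu_1) = 0$. This shows $Q$ decomposes diagonally on $R_\can \oplus (I \oplus QI)$. Inside $I \oplus QI$, the restriction $\restr{Q}{I} : I \to QI$ is surjective by definition and injective by item 1, hence an isomorphism, while $\restr{Q}{QI} = 0$. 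The behaviour of $Q$ on $R_\can$ is left unspecified (only its compatibility with $\restr{\omega}{R_\can}$ is forced), which matches the statement.
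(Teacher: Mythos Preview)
Your proof is correct and follows essentially the same route as the paper: items 1--4 and 6 match the paper's arguments almost verbatim, and for item 5 you, like the paper, establish the inclusion $(I\oplus QI)^\omega \oplus I \subset I^\omega$ and then close the gap by a dimension count via Lemma~\ref{lemma:dimsum_symp}. The one detail worth making explicit when you carry out that count is that it reduces to $\dimsum{I}{s} = \dimsum{I}{s^{-1}}$, i.e.\ $\dim I_k = \dim I_{-k}$, which holds precisely because the non-degenerate pairing $\Sfr{I}$ has degree~$0$.
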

 \begin{proof}
 	\begin{enumerate}
        \item[]
        \item If there were a vector $i\in I \cap \Ker{Q}$, then $\Sfree(i, -) = \pm \omega(Qi, -) = 0$ and thus $\Sfr{I}$ would be degenerate.
 		\item From $Q^2=0$, we have $I \cap QI \subset \ker Q \cap I$, which is zero by the previous point.
        \item As $Q$ is compatible with $\omega$, we get $\omega \left(Qi,Qi'\right) = \pm \omega 
 		\left(i,Q^2i'\right) = 0$ for all $ i, i' \in I$.
 		\item By $\Sfree = \pm \omega(Q-, -)$, we have that $ \restr{\omega}{ I \oplus QI}$ is block-diagonal with $\pm \Sfr{I}$ on anti-diagonals. Since this matrix is invertible, $I \oplus QI$ is symplectic (i.e.\ it does not intersect its $\omega$-orthogonal complement). 
 		\item We have $I \subset  I \oplus QI$, and so $(I \oplus QI)^\omega \subset I^\omega$ and also $I \subset I^\omega$. Together, this gives $(I \oplus QI)^\omega \oplus I \subset I^\omega$. The equality is proven by comparing dimensions. Since $Q \colon L \to QL$ is an isomorphism of degree 1, we have $\dimsum{QL}{s} = s\dimsum{L}{s}$ Using Lemma \ref{lemma:dimsum_symp}, we get that
 		\begin{equation} \dimsum{(QI \oplus I)^\omega\oplus I}{s} = s(\dimsum{V}{s^{-1}} - \dimsum{QI\oplus I}{s^{-1}}) +\dimsum{I}{s}= s(\dimsum{V}{s^{-1}} - (1+s^{-1})\dimsum{I}{s^{-1}}) +\dimsum{I}{s}\end{equation}
 		
 		while
 		\begin{equation} \dimsum{L^\omega}{s} = s(\dimsum{V}{s^{-1}} - \dimsum{I}{s^{-1}}). \end{equation}
 		The difference is
 		\begin{equation}\dimsum{I}{s} - \dimsum{I}{s^{-1}} = \sum_{k \ge 1} (\dim I_k - \dim I_{-k})s^k, \end{equation}
 		which vanishes since $I$ has a non-degenerate pairing $\Sfr{I}$ of degree $0$.

        \item We have $Q = \restr{Q}{I} + \restr{Q}{QI} +  \restr{Q}{( I \oplus QI )^\omega}$. The first map is the isomorphism $I\to QI$. The second map vanishes. The third map lands again in $( I \oplus QI )^\omega$, since $\omega(Qr, i + Qi') = \pm \omega(r, Qi) = 0$. \qedhere
 	\end{enumerate}
 \end{proof}
\noindent
This decomposition of $V$ induces a special deformation retract between $V$ and $R$, see e.g.\ \cite{crainic:hpl}.
 \begin{definition}
    A \textbf{special deformation retract} is a pair of dg vector spaces, chain maps $i, p$ and a degree $(-1)$ map $k$ as below
    \vspace{-2em}
    \begin{equation}
    \begin{tikzcd}
    {\left( V , Q_V \right)} \arrow[rr, "p", shift left] \arrow["k"', loop, distance=3em, in=215, out=145] &  & {\left( W , Q_W \right)} \arrow[ll, "i", shift left]
    \end{tikzcd}
    \vspace{-1em}
    \end{equation}
    such that $pi = \id_W$, $ip = \id_V + Q_V k + k Q_V$, $k^2 = 0$, $pk = 0$ and $ki = 0$. If $V$ and $W$ are $(-1)$-shifted dg symplectic, then we say that the special deformation retract is \textbf{symplectic} if $i$ is a symplectic map, $p$ is a Poisson map and $k$ satisfies $\omega_V(kv, v') = (-1)^{\deg{v}} \omega_V(v, kv')$.
\end{definition}
Finally, we can relate symplectic SDRs, and also abstract Hodge decompositions of Chuang and Lazarev \cite[Def.~2.1]{chuang_lazarev:hodge_decomposition} with non-degenerate reductions.

\begin{prop} \label{prop:SDRnondeg}
There is a bijection between the following structures.
\begin{enumerate}
    \item Non-degenerate isotropes in $V$.
    \item Symplectic special deformation retracts between $V$ and some $R$ (up to an isomorphism of $R$).
    \item Abstract Hodge decompositions $s, t \colon V \to V$ of $(V, \omega, Q)$.
\end{enumerate}
The abstract Hodge decomposition is harmonious (loc.cit.), i.e.\ $R$ is isomorphic to the homology of $V$, if and only if $\restr{Q}{R_\can} = 0$.
\end{prop}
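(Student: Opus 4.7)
The plan is to use the canonical decomposition of Proposition~\ref{prop:canonicaldecomposition} as a common scaffold for all three structures. I would first establish the bijection $(1) \Leftrightarrow (2)$, then identify (3) with (2), and finally deduce the harmoniousness refinement.

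For $(1) \Rightarrow (2)$, given a non-degenerate isotrope $I$, Proposition~\ref{prop:canonicaldecomposition} furnishes $V = R_\can \oplus I \oplus QI$. I set $R \define R_\can$, take $i, p$ to be the canonical inclusion and projection, and define $k$ to vanish on $R_\can \oplus I$ and send $Qj \in QI$ to $-j$ (the sign dictated by the convention $ip = \id + Qk + kQ$). The relations $pi = \id$, $k^2 = 0$, $pk = 0$, $ki = 0$ are immediate from the direct sum decomposition, and the homotopy identity reduces to a summand-wise check; the key input is Proposition~\ref{prop:canonicaldecomposition}(\ref{item:decQ}), which shows that $Q$ stabilizes $R_\can$ and hence $kQ$ vanishes on $R_\can$. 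The symplectic compatibility conditions are then verified summand-wise, reducing the self-adjointness of $k$, via $\Sfree(v,w) = \pm \omega(Qv, w)$, to the symmetry of $\Sfr{I}$.

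For the reverse direction $(2) \Rightarrow (1)$, given an SDR $(i, p, k)$ I set $I \define \Im k$. Isotropy is immediate from self-adjointness and $k^2 = 0$: indeed $\omega(kv, kv') = \pm\omega(v, k^2 v') = 0$. Applying $k$ on the right to $ip = \id + Qk + kQ$ and using $pk = k^2 = 0$ yields the useful identity $kQk = -k$. To establish non-degeneracy via the criterion $I \cap (QI)^\omega = \{0\}$ of Definition~\ref{def:nondeg}, I suppose $j = kv$ lies in the intersection; then $\omega(Qkv', kv) = 0$ for all $v' \in V$, and combining self-adjointness of $k$ with $kQk = -k$ forces $\omega(v', kv) = 0$ for all $v'$, whence $j = kv = 0$ by non-degeneracy of $\omega$. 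The two assignments are mutually inverse up to isomorphism of $R$: starting from $I$ one recovers $\Im k = I$ by construction, while starting from an SDR the subspace $R_\can = (I \oplus QI)^\omega$ is canonically isomorphic to $R$ via $p$.

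The correspondence with (3) is essentially tautological once $(1) \Leftrightarrow (2)$ is in hand: a Chuang--Lazarev abstract Hodge decomposition $(s, t)$ is precisely the data of an SDR with $k = s$, $ip = t$ and $R = \Im t$, and the symplectic compatibility matches self-adjointness of $s$. For the harmoniousness refinement, I compute in the canonical decomposition that $\ker Q = (\ker Q \cap R_\can) \oplus QI$ (since $Q|_I$ is injective and $QI \subset \ker Q$) and $\Im Q = Q(R_\can) \oplus QI$, so $H(V) \cong (\ker Q \cap R_\can)/Q(R_\can)$, which equals $R_\can$ precisely when $Q|_{R_\can} = 0$. I expect the main obstacle to be bookkeeping around the sign conventions in the homotopy identity and the self-adjointness of $k$, together with verifying that the ``up to isomorphism of $R$'' equivalence is matched correctly so that the bijection $(1) \Leftrightarrow (2)$ is well-defined on the nose.
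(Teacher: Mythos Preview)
Your proposal is correct and follows essentially the same route as the paper: both directions of $(1)\Leftrightarrow(2)$ use the canonical decomposition and the assignment $I\mapsto(i,p,k)$ with $k=-(\restr{Q}{I})^{-1}$ on $QI$, and the inverse $I=\Im k$. The only cosmetic difference is in the non-degeneracy check for $\Im k$: you first derive $kQk=-k$ and combine it with self-adjointness, whereas the paper expands $Qk(v')=ip(v')-v'-kQ(v')$ inside $\omega(k(v),Qk(v'))$ and kills two of the three terms via $ki=0$ and $k^2=0$; these are equivalent rearrangements of the same identity. Your treatment of the harmoniousness statement (computing $\ker Q$ and $\Im Q$ summand-wise and reducing to a dimension count on $R_\can$) fills in a detail the paper leaves implicit.
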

\noindent
See \cite[Proposition 2.5]{chuang_lazarev:hodge_decomposition} for a related statement in the harmonious case.
\begin{proof}
    Given a non-degenerate isotrope $I\subset V$, one can take define a SDR $V \rightleftarrows I^\omega/I \cong (I \oplus QI)^\omega $ using the projection $p \colon V \to  (I \oplus QI)^\omega$ and inclusion $i \colon (I \oplus QI)^\omega \to V$ from the decomposition \eqref{eq:canonicaldecomposition}; $k$ is equal to $-(\restr{Q}{I})^{-1} \colon QI \to I$. Conversely, a symplectic SDR defines  a non-degenerate isotrope $I\define \Im{k}$. Indeed, given $k(v) \in \Im k \cap (Q (\Im k) )^\omega$, for all $v' \in V$,
    \begin{align*}
        0 = & \ \omega (k(v), Qk(v')) = 
        \omega (k(v),ip(v')) -
        \omega (k(v),v') -
        \omega (k(v),kQ (v')) = \\
        = &
        \pm \omega (v,kip(v')) -
        \omega (k(v),v') 
        \pm \omega (v, k^2Q (v')) = \omega (k(v),v'),
   \end{align*}
   which forces $k(v)=0$ and thus $\Im k$ is non-degenerate.
    
    Similarly, an abstract Hodge decomposition is defined from a symplectic special deformation retract by setting $s\define k$ and $t\define ip$, and given $(s, t)$, one can take $I\define \Im{s}$.\smallskip

   It is then a straightforward check that these maps are well defined bijections.
\end{proof}
  
\subsection{Perturbative BV Integral}
In this section, we recall the notion of perturbative BV integrals. Our goal is to define a formal Laurent series
\begin{equation}
    \intBV{L\subset V}{} f \rho \in \Rhbar 
\end{equation}
where $L\subset V$ is a Lagrangian and $f \otimes \rho$ is a half-density on $V$. This integral is usually defined  choosing a Lagrangian complement to $L$ and computing the ordinary Berezin-Lebesgue integral over L \cite{schwarz:geometry_of_bv}. Alternatively, one can define the perturbative version of this integral using homological perturbation theory, see \cite{albert_bleile_frohlich:bv_integrals, gwilliam_jf:how_to_feynman_diagrams, gwilliam-thesis}.
\medskip

We choose a third approach and define this integral axiomatically, which quickly leads to explicit formulas. This is possible for non-degenerate $\smash{\restr{\Sfree}{L}}$, essentially because we can use the canonical decomposition from the previous section. The axiomatic approach makes it easy to compare to other approaches; see Remark \ref{rem:BerLeb} for a comparison with the Berezin-Lebesgue integral and Section \ref{ssec:fibintHPL} for an equivalence with homological perturbation theory. These two comparisons also allow for easy proofs of the uniqueness and existence of this integral.
\begin{definition}\label{def:BVintegralLagrangian}
	Let $(V, \omega , \Sfree )$ be a $(-1)$-shifted symplectic dg vector space and $L\subset V$ a Lagrangian subspace such that the  $\Sfr{L}$ is non-degenerate (Definition \ref{def:nondeg}). Then
	\begin{equation} \label{eq:integralLagr} \intBV{L\subset V}{L} (-) \colon \D{V} \to \mathbb R((\hbar)), \end{equation}
	denoted by ${\displaystyle f\otimes \rho \mapsto \intBV{L\subset V}{L} f\rho}$,
	is the $\Rhbar$-linear weight-homogenous\footnote{Our normalization of the integral implies that it has weight equal to $\operatorname{sdim}{L}= \dim L_\textnormal{even} - \dim L_\textnormal{odd}$} map uniquely specified by:
	\begin{enumerate}
		\item \label{i:dBVi:exact} ${\displaystyle\intBV{L\subset V}{L} (\{\Sfree, f\} + \hbar\BV f) \rho= 0}$.
		\item \label{i:dBVi:vanish} ${\displaystyle\intBV{L\subset V}{L}g\rho = 0}$ for any $g\in \I_L \subset \F{V}$, i.e.\ integration annihilates the vanishing ideal of $L$.
		\item  \label{i:halfdensnorm} ${\displaystyle\intBV{L\subset V}{L}  \rho = (2\pi)^{\tfrac {\dim L_\textnormal{even}}2} \hbar^{\tfrac {\dim L_\textnormal{even} - \dim L_\textnormal{odd}}2} \rho( \base{e}_L,  Q(\base{e}_L) )}$ for any basis $\base{e}_L$ of $L$. \qedhere
	\end{enumerate}
\end{definition}
\noindent
Note that the last condition is independent of $\base{e}_L$ as the RHS is equal to the composition $\HalfDens{V} = \HalfDens{L}\otimes \DensWeight{-\frac{1}{2}}{L} = \mathbb \Rhbar$, where we use $V\cong L \oplus QL \cong L \oplus \s L$ and Lemma \ref{lemma:density_properties}.
\medskip

To ensure that the integral from Definition \ref{def:BVintegralLagrangian} is uniquely defined, we will show in the next section that \eqref{eq:integralLagr} is given by the famous Wick's Lemma. To ensure that such an integral exists, one could check directly that the prescription given by Wick's Lemma satisfies the properties listed in Definition \ref{def:BVintegralLagrangian}. We instead use the fact that (up to normalization on linear half-densities specified by Item \ref{i:halfdensnorm}) this integral coincides with the perturbed projection constructed using homological perturbation lemma, see Proposition \ref{prop:fiberBVisHPL}.

\begin{remark}\label{rem:BerLeb}
    The first two items of Definition \ref{def:BVintegralLagrangian} are motivated by usual properties of BV integrals: the integral vanishes on $\BV$-exact half-densities\footnote{This is the odd Stokes' theorem of Schwarz \cite[Thm.~2]{schwarz:geometry_of_bv}.} and depends only on the restriction of $f$ to $L$.
    
    The third item fixes a normalization of the integral that matches the usual Gaussian integrals, see also \cite[Eq.~(49)]{schwarz:semiclassical_in_BV}. Indeed, for $V = \sT  \R^k$, $L=\R^k$ and $\rho=1$ for the canonical basis,
    \begin{equation} \intover{\R^k \subset \sT  \R^k} e^{\frac 12 s_{ij} x^i x^j/\hbar} \rho = (2\pi)^{-\tfrac k2} \rho( \base{e}_L, Q(\base{e}_L) ) = (2\pi\hbar)^{-\tfrac k2}  \left\lvert\Ber{
       \begin{matrix} 1 & 0 \\
        0 & \frac{s_{ij}}{\hbar} \end{matrix} 
    } \right\rvert^{\frac 12} = \left\lvert\det\frac{s_{ij}}{2\pi\hbar}\right \rvert^{-\frac{1}{2}}\,. \end{equation}
    \noindent
    Similarly, for $V = \sT (\shift{1}{\R^k}\oplus  
 \shift{-1}{\R^k})$, $L=  \shift{1}{\R^k}\oplus  
 \shift{-1}{\R^k}$ and $\rho=1$ for the canonical basis
    \begin{equation}
    \intover{\substack{\shift{1}{\R^k}\oplus  
 \shift{-1}{\R^k} \phantom{\sT  (  )} \\ 
    \cap \phantom{\sT  (  )}   \\  \sT (\shift{1}{\R^k}\oplus  
 \shift{-1}{\R^k})} } e^{w_{ij} \eta^i \xi^j/\hbar} \rho =  \hbar^{-k}\rho( \base{e}_L,  Q(\base{e}_L) ) = 
      \left\lvert\Ber{
       \begin{matrix} w_{ij} & 0 & 0 & 0 \\
        0 & -w_{ji} & 0 & 0 \\
        0 & 0 & 1 & 0 \\
        0 & 0 & 0 & 1
        \end{matrix} } \right\rvert^\frac 12 = \left\lvert\det{\frac{w_{ij}}{\hbar}}\right\rvert ,
        \end{equation} 
    where the first two rows/columns in the matrix correspond to the fiber (even) coordinates and the last two to base (odd) coordinates.

    \smallskip
    The first integral agress with the (even) Gaussian integral for $s_{ij}$ negative definite. The second integral recovers the Berezin integral up to a sign.\footnote{To get correct signs for such Gaussian integrals, we would need to discuss orientations, which is orthogonal to the goals of this work.}
\end{remark}

\subsubsection{Properties of the BV Integral}

To prove Wick's Lemma, we will use a version of the Schwinger-Dyson equation, for context see e.g.\ \cite[Eq. 15.25]{quantization_of_gauge_systems}.

\begin{lemma}[Schwinger-Dyson equation]\label{lemma:schwinger-dyson}
    Let $\beta \in \Ann{L} \subset \I _L, f \in \F{V}$. 
    \begin{equation}
        \intBV{L\subset V}{L}  \{ \Sfree, \beta \} f  \rho 
        =               
        -(-1)^{\deg{ \beta }}  \hbar \intBV{L\subset V}{L}  \{ \beta, f \}  \rho
    \end{equation}
\end{lemma}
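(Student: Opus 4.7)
\begin{myproof}
The plan is to derive the Schwinger-Dyson equation by applying axiom \ref{i:dBVi:exact} of Definition \ref{def:BVintegralLagrangian} to the product $\beta f$ and then using axiom \ref{i:dBVi:vanish} to kill the terms where $\beta$ survives as a factor.

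First, I would expand $\{\Sfree, \beta f\} + \hbar \BV(\beta f)$ via the graded Leibniz rules. Since $\{-,-\}$ is a derivation of degree $1$ in its second slot and $\Sfree$ has degree $0$,
\[
\{\Sfree, \beta f\} = \{\Sfree, \beta\}\, f + (-1)^{\deg{\beta}} \beta \,\{\Sfree, f\},
\]
while the BV Leibniz identity recalled after Definition \ref{def:bv} gives
\[
\BV(\beta f) = (\BV \beta)\, f + (-1)^{\deg{\beta}} \beta\, \BV f + (-1)^{\deg{\beta}} \{\beta, f\}.
\]
Multiplying by $\rho$ and integrating, axiom \ref{i:dBVi:exact} yields that the sum of all five resulting integrals vanishes.

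Next, I would eliminate three of these terms. Since $\beta \in \Ann{L} \subset \I_L$ is linear in the coordinates $\phi^i$, and $\BV$ is a second-order differential operator vanishing on linear functions, $\BV \beta = 0$. Moreover, $\beta \cdot \{\Sfree, f\}$ and $\beta \cdot \BV f$ both lie in the vanishing ideal $\I_L$, so by axiom \ref{i:dBVi:vanish} their integrals vanish. What remains is precisely
\[
\intBV{L\subset V}{L} \{\Sfree, \beta\} f\, \rho + (-1)^{\deg{\beta}} \hbar \intBV{L\subset V}{L} \{\beta, f\}\, \rho = 0,
\]
which rearranges to the claimed identity.

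The only real subtlety is tracking signs in the graded Leibniz rules; once these are fixed by the degree conventions (bracket of degree $1$, $\deg{\Sfree} = 0$), the argument is purely formal and uses only the two algebraic axioms of the BV integral together with the observation $\BV\beta = 0$.
\end{myproof}
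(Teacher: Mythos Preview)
Your proof is correct and follows exactly the same approach as the paper: apply axiom \ref{i:dBVi:exact} to $\beta f$, expand via the Leibniz rules for $\{-,-\}$ and $\BV$, and observe that three of the five terms vanish by axiom \ref{i:dBVi:vanish} and by $\BV\beta = 0$ for linear $\beta$. Your write-up is in fact more explicit than the paper's, which merely sketches this outline.
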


\begin{proof}
    The axiom \ref{i:dBVi:exact} of Definition \ref{def:BVintegralLagrangian} gives us
    \begin{equation} \intBV{L\subset V}{L} (\{\Sfree, \beta f\} + \hbar\BV (\beta f)) \rho = 0 \end{equation}
    which can be expanded using the properties of $\BV$ and $\{-,-\}$. Three of the five terms vanish by axiom \ref{i:dBVi:vanish} and by the fact that $\BV \beta = 0$ since $\beta$ is of polynomial degree 1.
\end{proof}

\noindent Now we can relate integrals of homogeneous polynomials of degree $k+1$ with integrals of homogeneous polynomials of degree $k-1$ using the non-degenerate pairing $\Sfr{L}$, arriving at Wick's lemma. 

\begin{lemma}[Wick's Lemma]\label{lemma:wick} Let $ \left( V , \omega , \Sfree \right)$ be a $(-1)$-shifted symplectic dg vector space, $L \subset V$ Lagrangian such that $\Sfr{L}$ is non-degenerate. Choose a basis $\{ \gamma^i \}_i $ of $L^*$. Let $\Sfr{L} \equiv s_{ij} \gamma^i \gamma^j$ and denote $s^{ij}$ its inverse. Then for any $ k \geq 1$:
    \begin{equation}
        \intBV{L\subset V}{L} \gamma^{i_1} \dots \gamma^{i_{2k}} \rho = \sum_{\sigma \smallin \mathrm{Pair}(2k)} \sign{\sigma} \hbar^k \bigg( \prod_{\mathrm{(j_1,j_2) \smallin  }\sigma} -s^{j_1 j_2} \bigg) \intBV{L\subset V}{L}  \rho  
    \end{equation}
    where $\mathrm{Pair}(2k)$ is the set of  $(2k-1)!! $ partitions of the set $\{ 1 , \ldots , 2k \} $ into disjoint pairs. The sign $(-1)^\sigma$ is obtained by bringing each variable $\gamma^{j_1}$ to the immediate left of its partner $\gamma^{j_2}$ assigned by the pairing $\sigma$.
\end{lemma}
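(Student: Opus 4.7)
\myproofname. The approach is a standard inductive reduction of polynomial degree via the Schwinger-Dyson equation from Lemma \ref{lemma:schwinger-dyson}. The key preparatory step is to realize each linear functional $\gamma^i$ on $L$ as a Poisson bracket $\{S_{\mathrm{free}},\beta^i\}$ modulo the vanishing ideal $\I_L$, which is possible precisely because $S_{\mathrm{free}}|_L$ is non-degenerate. Since $L$ is Lagrangian, the degree $-1$ isomorphism $\omega^{\flat}\colon V\to V^*$, $v\mapsto \omega(v,\cdot)$, carries $L$ into $\Ann L$. I would set
\[
    \beta^i \;:=\; -\,s^{ij}\,\omega(e_j,\cdot)\ \in\ \Ann L,
\]
with the overall sign chosen so that, using $\{S_{\mathrm{free}},-\}=Q^t$ together with $S_{\mathrm{free}}(v,w)=(-1)^{|v|}\omega(Qv,w)$, one verifies
\[
    \{S_{\mathrm{free}},\beta^i\}\bigr|_L \;=\; \gamma^i,
\]
i.e.\ $\gamma^i \equiv \{S_{\mathrm{free}},\beta^i\} \pmod{\I_L}$. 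Indeed, pairing with $e_k\in L$ on both sides yields $s^{ij}s_{jk}=\delta^i_k$ after the sign is absorbed into the definition of $\beta^i$.

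With this in hand, I would iterate the Schwinger-Dyson equation. Applied with $\beta=\beta^{i_1}$ and $f=\gamma^{i_2}\cdots\gamma^{i_{2k}}$, axiom \ref{i:dBVi:vanish} of Definition \ref{def:BVintegralLagrangian} together with the congruence above gives
\[
    \intBV{L\subset V}{L}\gamma^{i_1}\gamma^{i_2}\cdots\gamma^{i_{2k}}\rho
    \;=\; -(-1)^{|\beta^{i_1}|}\hbar\intBV{L\subset V}{L}\{\beta^{i_1},\gamma^{i_2}\cdots\gamma^{i_{2k}}\}\rho.
\]
Since $\{\beta^{i_1},\gamma^{i_l}\}$ is a constant computed by the natural pairing $V^*\otimes V\to \R$ between $\omega^{\flat}(e_j)$ and $\gamma^{i_l}$, one obtains $\{\beta^{i_1},\gamma^{i_l}\} = \pm s^{i_1 i_l}$. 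Expanding by the graded Leibniz rule and absorbing the overall sign into the $-s^{j_1j_2}$ factor appearing in the statement, the result of a single Schwinger-Dyson step is a Wick contraction of the index $1$ with each remaining index:
\[
    \intBV{L\subset V}{L}\gamma^{i_1}\cdots\gamma^{i_{2k}}\rho
    \;=\; \hbar\sum_{l=2}^{2k}(-1)^{\epsilon_l}(-s^{i_1 i_l})\intBV{L\subset V}{L}\gamma^{i_2}\cdots\widehat{\gamma^{i_l}}\cdots\gamma^{i_{2k}}\rho,
\]
where $(-1)^{\epsilon_l}$ is the Koszul sign from commuting $\beta^{i_1}$ past $\gamma^{i_2},\dots,\gamma^{i_{l-1}}$.

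I would then conclude by induction on $k$. The base case $k=0$ is trivial, as the sum over pairings of the empty set contributes a single term $1\cdot\int_L^{\mathrm{BV}}\rho$. In the inductive step, the sum over $\mathrm{Pair}(2k)$ decomposes as a sum over the partner $l$ of the index $1$, times $\mathrm{Pair}(\{2,\dots,2k\}\setminus\{l\})$; applying the induction hypothesis to each inner integral produces exactly this structure, and the product of the Koszul sign $(-1)^{\epsilon_l}$ with the sign $(-1)^{\sigma'}$ of the residual pairing assembles into the sign $(-1)^\sigma$ of the total pairing, which is precisely the sign obtained by bringing each $\gamma^{j_1}$ next to its partner $\gamma^{j_2}$. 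The principal obstacle is purely bookkeeping: verifying that the Koszul signs arising from the graded derivation rule in Schwinger-Dyson, the defining sign of $\beta^i$, and the sign convention $\prod(-s^{j_1 j_2})$ all conspire to reproduce the standard Wick sign $(-1)^\sigma$ on the nose. Once those signs are tracked once, the inductive step is immediate.
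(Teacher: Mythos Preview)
Your proposal is correct and follows exactly the approach the paper intends: the paper omits the proof as ``a standard calculation,'' but the sentence immediately preceding the lemma (``Now we can relate integrals of homogeneous polynomials of degree $k+1$ with integrals of homogeneous polynomials of degree $k-1$ using the non-degenerate pairing $\Sfr{L}$, arriving at Wick's lemma'') makes clear that the intended argument is precisely the inductive degree reduction via the Schwinger-Dyson equation that you carry out. Your construction of $\beta^i\in\Ann L$ with $\{\Sfree,\beta^i\}\equiv\gamma^i\pmod{\I_L}$ is the right preparatory step, and the observation that $\{\beta^{i_1},\gamma^{i_l}\}$ depends only on $\gamma^{i_l}|_L$ (since $\beta^{i_1}$ is built from $\omega(e_j,-)$ with $e_j\in L$) closes the one potential gap.
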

\noindent
The proof of this lemma is a standard calculation. Let us finish this section by proving a version of Fubini's theorem.
\begin{prop}[Fubini's Theorem]\label{prop:Fubini}
    For $i= 1, 2$, let $(V_i, Q_i)$ be $(-1)$-shifted symplectic dg vector spaces and let $L_i \subset V_i$ be non-degenerate Lagrangian subspaces. Then
     \begin{equation}  \intBV{L_1\subset V_1}{1}(-) \otimes \intBV{L_2\subset V_2}{2}(-) = \intover{L_1\times L_2 \subset V_1 \times V_2} e^{(\Sfr{1}+\Sfr{2})/\hbar}(-) 
     \end{equation}
    as maps from $\D{(V_1 \times V_2)} \cong \D{V_1}\otimes \D{V_2}$.
\end{prop}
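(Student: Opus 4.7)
The plan is to use the axiomatic characterization from Definition \ref{def:BVintegralLagrangian}: the integral on $V_1 \times V_2$ along $L_1 \times L_2$ is the unique $\mathbb R(\!(\hbar)\!)$-linear, weight-homogeneous map satisfying the three properties \ref{i:dBVi:exact}–\ref{i:halfdensnorm}. So I will show that the tensor product ${\intBV{L_1\subset V_1}{1}(-)\otimes\intBV{L_2\subset V_2}{2}(-)}$ also satisfies all three axioms with respect to $\Sfree = \Sfr{1}+\Sfr{2}$ and $L_1\times L_2 \subset V_1\times V_2$; by uniqueness this forces the equality.

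First I would set up the decompositions. Since $\omega = \omega_1 \oplus \omega_2$ and $Q = Q_1 \oplus Q_2$ on $V_1\times V_2$, the inverse pairing $\omega^{ij}$ is block diagonal, so the BV Laplacian and odd Poisson bracket split as $\BV_{V_1\times V_2} = \BV_1\otimes\id + \id\otimes\BV_2$ and $\{\Sfree, -\} = \{\Sfr{1},-\}\otimes\id + \id\otimes\{\Sfr{2},-\}$ (with Koszul signs). The Lagrangian $L_1\times L_2$ is non-degenerate because $\restr{\Sfree}{L_1\times L_2} = \restr{\Sfr{1}}{L_1} + \restr{\Sfr{2}}{L_2}$ is block diagonal with non-degenerate blocks. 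Moreover, by Lemma \ref{lemma:short_exact_dens}, $\HalfDens{V_1\times V_2} = \HalfDens{V_1}\otimes\HalfDens{V_2}$, so the tensor product of integrals makes sense on the algebraic subspace $\D{V_1}\otimes\D{V_2}\subset \D{V_1\times V_2}$.

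Next I would verify the three axioms on pure tensors $h = f_1 f_2 \cdot \rho_1\otimes \rho_2$. For \ref{i:dBVi:exact}, applying $(\{\Sfree,-\} + \hbar\BV)$ to $h$ yields a sum of two terms, each a tensor in which one slot is $(\{\Sfr{i},f_i\} + \hbar\BV_i f_i)\rho_i$; applying $\intBV{L_1}{1}\otimes\intBV{L_2}{2}$ kills each term by axiom \ref{i:dBVi:exact} for the respective $V_i$. For \ref{i:dBVi:vanish}, I would observe that $\Ann{L_1\times L_2} = \Ann{L_1}\oplus\Ann{L_2}$ so $\I_{L_1\times L_2}\subset \F{V_1\times V_2}$ is the sum of ideals pulled back from the factors, and any element of this ideal (tensored with a half-density) is annihilated factorwise by axiom \ref{i:dBVi:vanish}. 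For \ref{i:halfdensnorm}, picking bases $\base{e}_{L_i}$ of $L_i$ with concatenation basis $\base{e}$ of $L_1\times L_2$, the defining automorphism sending $(\base{e}_{L_1}, \base{e}_{L_2})$ to $(Q_1 \base{e}_{L_1}, Q_2 \base{e}_{L_2})$ is block diagonal, so its Berezinian factors; combined with the factorization of the prefactor $(2\pi)^{\dim L_{\mathrm{even}}/2}\hbar^{\operatorname{sdim} L/2}$, this yields the product of normalizations on the two factors.

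The one technical point to address carefully is that $\F{V_1\times V_2}$ is not literally the algebraic tensor product $\F{V_1}\otimes\F{V_2}$ but its completion with respect to the weight grading; one has the direct sum decomposition $\mathrm{Sym}^k(V_1^*\oplus V_2^*) = \bigoplus_{p+q=k} \mathrm{Sym}^p(V_1^*)\otimes \mathrm{Sym}^q(V_2^*)$, and the weight on the left agrees with the sum of weights on the right. Because both axioms \ref{i:dBVi:exact}–\ref{i:dBVi:vanish} are $\Rhbar$-linear and weight-homogeneous, and the integral is required to be weight-homogeneous, it is enough to verify the axioms on each weight-component, which lies in the algebraic tensor product. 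This is the only subtle step; everything else is a routine check that block-diagonal structure splits BV data.
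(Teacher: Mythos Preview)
Your approach is correct and essentially identical to the paper's: both verify that the tensor product of the two integrals satisfies the three axioms of Definition~\ref{def:BVintegralLagrangian} for the product data $(V_1\times V_2,\,\Sfr{1}+\Sfr{2},\,L_1\times L_2)$ and conclude by uniqueness. Your additional remark about the weight-completion versus the algebraic tensor product is a useful clarification that the paper leaves implicit.
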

\begin{proof}
    It is easy to see that the LHS satisfies the three conditions from Definition \ref{def:BVintegralLagrangian}. 
    \begin{enumerate}
        \item The RHS is defined using the induced $(-1)$-shifted symplectic structure $\{ , \}$, $\BV$ on $V_1 \times V_2$ and $\Sfree = \Sfr{1} + \Sfr{2}$.  Using the isomorphism $\F{(V_1\times V_2)} \cong \F{V_1} \otimes \F{V_2}$, the operator $\hbar \BV + \{\Sfree, -\}$
        becomes 
        \begin{equation}
            (\hbar \BV_1 + \{\Sfr{1}, -\}_1) \otimes \id + \id \otimes (\hbar \BV_2 + \{\Sfr{2}, -\}_2),
        \end{equation}
        which is annihilated by $\intBV{L_1\subset V_1}{1}(-) \otimes \intBV{L_2\subset V_2}{2}(-) $.
        \item The vanishing ideal $\I_{L_1 \times L_2}$ is generated by elements of $\Ann{L_1} \times 0 $ or $0\times \Ann{L_2}$; both cases are annihilated by $\intBV{L_1\subset V_1}{1}(-) \otimes \intBV{L_2\subset V_2}{2}(-) $.
        \item We take $\rho = \rho_1 \otimes \rho_2$, and compute
        \begin{align*}
            \intBV{L_1\subset V_1}{1}(\rho_1) \otimes\intBV{L_2\subset V_2}{2}(\rho_2) 
        & = (2\pi)^{\dots} \hbar^{\dots} \rho_1(\base{e}_{L_1} , Q_1(\base{e}_{L_1})) \cdot \rho_2(\base{e}_{L_2} , Q_2(\base{e}_{L_2})) \\
        & = (2\pi)^{\dots} \hbar^{\dots} \rho_1\otimes \rho_2(\base{e}_{L_1} \sqcup  \base{e}_{L_2},  (Q_1+Q_2)(\base{e}_{L_1} \sqcup \base{e}_{L_2})  ).
        \end{align*}
        Here, the normalizations $(2\pi)^{\dots} \hbar^{\dots}$ match as the exponents are additive. \qedhere
    \end{enumerate}
\end{proof}

\subsubsection{Fiber Integrals along Non-degenerate Reductions}\label{ssec:fiberintegralsreductions}

Using the BV integral along a Lagrangian subspace (Definition \ref{def:BVintegralLagrangian}) and the canonical decomposition \eqref{eq:canonicaldecomposition} of the source of a non-degenerate reduction from Proposition \ref{prop:canonicaldecomposition}, we can now define (fiber) \textbf{integrals along reductions}. 

\begin{remark}
    Using Lemma \ref{lemma:weightsandBV}, we can motivate the following construction as follows, without any non-canonical choices of decompositions of $V$. Let 
    \begin{equation}
        f\otimes \rho \in \D{V} =  \F{V} \otimes \HalfDens{V}
         = \F{V} \otimes \HalfDens{R} \otimes \Dens{I}
    \end{equation}
    and decompose $\rho = \rho_R \otimes \rho_I \in \HalfDens{R} \otimes \Dens{I}$. Now we can integrate $f \! \! \mid_C \ \in \F{C}$ along $I$ to get a function $f_R \in \F{R} = \F{(C/I)}$ using the translation-invariant measure $\rho_I$. We are left with $f_R \otimes \rho_R \in \D{R} $.
\end{remark}

\begin{definition}
	Let $(V, \omega, \Sfree)$ be a $(-1)$-shifted symplectic dg vector space and $L:V \dhxrightarrow{} R$ a non-degenerate reduction, denote $\ker L = I$. Consider $V = R_\can \oplus I \oplus QI$ the canonical decomposition \eqref{eq:canonicaldecomposition} from Proposition \ref{prop:canonicaldecomposition}. We define 
	\begin{equation}
 \intBV{L}{I} \colon \D{V} \to \D{R} 
 \end{equation}
	by the composition \vspace{-1.5em}
	\begin{equation}
 \D{V} \cong \D{(I \oplus QI)} \otimes \D{R_\can} \xrightarrow{\intBV{I \subset I \oplus QI}{I} \otimes \id_{R_\can}} \D{R_\can} \cong \D{R},
 \end{equation}
    where $\Sfr{I}$ is induced on $\F{(I \oplus QI)}$ by the restriction of the quadratic function $\Sfree$ to $I$.
\end{definition}

The decomposition from Proposition \ref{prop:canonicaldecomposition} implies that $\Sfree = \Sfr{I} + \Sfr{R_\can}$. The first term is used for the integral, while the second term induces a canonical \textbf{transferred differential} $Q^R$ on $R$ (see also Appendix \ref{sec:transfer_of_Q}).

\begin{prop}
\label{prop:fiber_integral}
    This integral satisfies the following three axioms.
    \begin{enumerate}
        \item \label{item:fiberintBVcomm}
        ${\displaystyle \intBV{L}{I} \circ (\{ \Sfree, - \} + \hbar \BV) = (\{ \SfreeR, - \}_R + \hbar \BV_R)  \circ \intBV{L}{I}}$. 
        \item ${\displaystyle\intBV{L}{I}g\rho = 0}$ for $g\in \I_C$, i.e.\ integration annihilates the vanishing ideal of $C\equiv I^\omega$.
	\item  ${\displaystyle\intBV{L}{I}  \rho \in \HalfDens{R}}$ satisfies ${\displaystyle (\intBV{L}{I}  \rho) (\base{e}_R) = (2\pi)^{\tfrac{\dim I_\textnormal{even}}2} \hbar^{\tfrac{\dim I_\textnormal{even}-\dim I_\textnormal{odd}}2}  \rho(\base{e}_R, \base{e}_I, Q \base{e}_I)}$ \\ for any basis $\base{e}_I$ of $I$.
    \end{enumerate}
\end{prop}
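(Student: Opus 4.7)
\begin{myproof}
The plan is to reduce each of the three properties to the corresponding axiom of the Lagrangian BV integral from Definition \ref{def:BVintegralLagrangian} via the canonical decomposition $V = R_\can \oplus I \oplus QI$ of Proposition \ref{prop:canonicaldecomposition}. Since $\Sfree = \Sfr{I} + \Sfr{R_\can}$ in this decomposition (the cross terms vanish because $Q$ restricted to $I \oplus QI$ maps $I$ to $QI$ and because $R_\can = (I\oplus QI)^\omega$ is $\omega$-orthogonal to $I \oplus QI$), both the odd Poisson bracket and the BV Laplacian split additively, in exact analogy with the proof of Fubini's theorem (Proposition \ref{prop:Fubini}).

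For item 1, using the tensor decomposition $\D{V} \cong \D{(I\oplus QI)}\otimes \D{R_\can}$, the operator $\{\Sfree,-\} + \hbar\BV$ becomes
\begin{equation}
(\{\Sfr{I},-\}_{I\oplus QI} + \hbar \BV_{I\oplus QI}) \otimes \id + \id \otimes (\{\Sfr{R_\can},-\}_{R_\can} + \hbar \BV_{R_\can}).
\end{equation}
Applying $\intBV{I\subset I\oplus QI}{I} \otimes \id_{R_\can}$, the first summand vanishes by axiom \ref{i:dBVi:exact} of Definition \ref{def:BVintegralLagrangian}, while the second passes through the first tensor factor to yield $(\{\SfreeR,-\}_R + \hbar \BV_R)\circ \intBV{L}{I}$ after identifying $R_\can$ with $R$ via the symplectic isomorphism of \eqref{eq:isofromreduction}, under which $\Sfr{R_\can}$ pulls back to $\SfreeR$.

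For item 2, we use that $C = I^\omega = R_\can \oplus I$ in the canonical decomposition, so $\Ann{C} \subset V^*$ consists exactly of the linear functionals in $(QI)^* \subset V^*$. These generators already lie in the first tensor factor $\F{(I\oplus QI)}$, and inside $(I\oplus QI)^*$ we have $(QI)^* = \Ann{(I \subset I\oplus QI)}$. Consequently $\I_C = \I_I \otimes \F{R_\can}$, and axiom \ref{i:dBVi:vanish} of Definition \ref{def:BVintegralLagrangian} kills the first factor. Item 3 is a direct computation: for a basis $\base{e}_R$ of $R_\can$ and $\base{e}_I$ of $I$, Lemma \ref{lemma:weightsandBV} (or a direct application of Lemma \ref{lemma:short_exact_dens}) gives $\rho(\base{e}_R,\base{e}_I,Q\base{e}_I) = \rho_{I\oplus QI}(\base{e}_I,Q\base{e}_I)\cdot \rho_R(\base{e}_R)$, and axiom \ref{i:halfdensnorm} of Definition \ref{def:BVintegralLagrangian} applied to the first factor produces precisely the claimed normalization.

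The main subtlety, rather than an obstacle, lies in checking that the two pieces of $\Sfree$ genuinely decouple across the tensor product—equivalently, that $\Sfree$ has no terms mixing $(I\oplus QI)^*$ with $R_\can^*$. This follows from item \ref{item:decQ} of Proposition \ref{prop:canonicaldecomposition}: the only nonzero components of $Q$ are the isomorphism $I \to QI$ and the restriction to $R_\can$, so the bilinear form $\Sfree(v,w) = (-1)^{\deg v}\omega(Qv,w)$ is block-diagonal with respect to the splitting $V = (I\oplus QI)\oplus R_\can$. Everything else is bookkeeping on tensor factors, and the normalization exponents add correctly because $\operatorname{sdim}$ is additive on direct sums, exactly as in Proposition \ref{prop:Fubini}.
\end{myproof}
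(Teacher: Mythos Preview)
Your proof is correct and follows essentially the same approach as the paper: use the canonical decomposition $V = R_\can \oplus (I \oplus QI)$ from Proposition \ref{prop:canonicaldecomposition} to split both the BV operator and $\Sfree$, then reduce each of the three items to the corresponding axiom of Definition \ref{def:BVintegralLagrangian}. Your argument is in fact somewhat more explicit than the paper's (e.g.\ in writing out the tensor decomposition of $\{\Sfree,-\}+\hbar\BV$ and in identifying $\I_C = \I_I \otimes \F{R_\can}$), but the logical structure is identical.
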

\begin{proof} 
    \begin{enumerate}
    \item[]
    \item 
    The odd Poisson bracket and the BV Laplacian split between the two symplectic subspaces $V = R \oplus R^\omega$, as follows from Proposition \ref{prop:red_decomp}. By part \ref{item:decQ} of Proposition \ref{prop:canonicaldecomposition}, the only non-zero components of the differential $Q$  are $\restr{Q}{I} \colon I \rightarrow QI$ and $\restr{Q}{R} \colon R \to R$. So after $\restr{(\{ \Sfree, - \} + \hbar \BV)}{I \oplus QI} $ is annihilated by the axiom \ref{i:dBVi:exact} of Definition \ref{def:BVintegralLagrangian}, all that is left is $(\{ \SfreeR, - \}_R + \hbar \BV_R)$, which commutes with the integral.
        \item From part \ref{item:Iomega} of Proposition \ref{prop:canonicaldecomposition}., $C \equiv I^\omega = ( I \oplus QI )^\omega \oplus I $ and $\I_C = \left\langle \Ann{C} \right\rangle = \left\langle {(QI)}^* \right\rangle $, and we use Item \ref{i:dBVi:vanish} of Definition \ref{def:BVintegralLagrangian}.
    \item Let us choose $\rho_R \in \HalfDens{R}$ arbitrary. By Lemmata \ref{lemma:short_exact_dens}, \ref{lemma:weightsandBV}, there exists a unique half-density $\rho_{R^\omega}$ such that $\rho = \rho_R \otimes \rho_{R^\omega}$. Then 
    \begin{align*}
        (\intBV{L}{I}  \rho) (\base{e}_R) &= (\rho_R \otimes \intBV{I\subset I \oplus QI}{I}  \rho_{R^\omega}) (\base{e}_R) \\
        &=(2\pi)^{\tfrac{\dim I_\textnormal{even}}2} \hbar^{\tfrac{\dim I_\textnormal{even}-\dim I_\textnormal{odd}}2}   \rho_R (\base{e}_R) \rho_{R^\omega}  (\base{e}_I, Q \base{e}_I) 
        \\ &=  (2\pi)^{\tfrac{\dim I_\textnormal{even}}2} \hbar^{\tfrac{\dim I_\textnormal{even}-\dim I_\textnormal{odd}}2}   \rho(\base{e}_R, \base{e}_I, Q \base{e}_I). \qedhere
    \end{align*} 
\end{enumerate}
\end{proof}

\begin{lemma}\label{lemma:fubinitransfer} Let $V \dhxrightarrow{L} R \dhxrightarrow{L'} R'$ be non-degenerate reductions with respect to $Q$ and the transferred differential $Q_R$ respectively. Then the composition $V\dhxrightarrow{} R'$ is again non-degenerate and
    \begin{equation}
    \intBV{L'}{I'} \circ \intBV{L}{I} = \intBV{L' \after L}{I\oplus I' \! \! }.    
    \end{equation}
\end{lemma}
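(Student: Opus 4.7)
The plan is to identify the canonical decomposition of $V$ associated to the composed reduction $L' \after L$ as a refinement of the canonical decomposition associated to $L$, and then invoke the Fubini theorem for Lagrangian BV integrals (Proposition \ref{prop:Fubini}).

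First I would apply Proposition \ref{prop:canonicaldecomposition} to $L$, obtaining $V = R_\can \oplus I \oplus QI$ with a symplectic isomorphism $R_\can \cong R$ under which $\restr{Q}{R_\can}$ corresponds to the transferred differential $Q_R$. Applying the same proposition to $L'$ inside $R \cong R_\can$ yields $R_\can = R'_\can \oplus I' \oplus Q_R I'$, so that altogether
\begin{equation}
    V = R'_\can \oplus I \oplus QI \oplus I' \oplus Q_R I'.
\end{equation}
Lifting $I' \subset R_\can$ into $V$, I would set $\tilde{I} \define I \oplus I'$ and show that $\tilde{I}$ is non-degenerate and is exactly the kernel of $L' \after L$. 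The latter follows because the coisotrope $\Im L^T = R_\can \oplus I$ projects symplectically onto $R$, so the preimage of $\Im (L')^T = R'_\can \oplus I'$ under this projection is $R'_\can \oplus I \oplus I'$, whose symplectic complement in $V$ is precisely $R'_\can$, giving $\ker(L' \after L) = \tilde{I}$. Non-degeneracy of $\tilde{I}$ follows from $\Sfr{\tilde I} = \Sfr{I} + \Sfr{I'}$: the off-diagonal terms $\omega(QI, I')$ and $\omega(Q_R I', I)$ both vanish because $I', Q_R I' \subset R_\can = (I \oplus QI)^\omega$, while each diagonal block is non-degenerate by hypothesis.

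By uniqueness of the canonical decomposition, this identifies $V = R'_\can \oplus \tilde{I} \oplus Q \tilde{I}$, with $Q\tilde{I} = QI \oplus Q_R I'$, as the canonical decomposition associated to $L' \after L$. Consequently, using Lemma \ref{lemma:weightsandBV}, both sides of the claimed equality decompose $\D{V}$ via the same isomorphism
\begin{equation}
    \D{V} \,\cong\, \D{(I \oplus QI)} \otimes \D{(I' \oplus Q_R I')} \otimes \D{R'_\can},
\end{equation}
and act as the identity on the last tensor factor. On the remaining two factors, the right-hand side is the single Lagrangian integral along $I \oplus I' \subset I \oplus QI \oplus I' \oplus Q_R I'$ with quadratic form $\Sfr{I} + \Sfr{I'}$, while the left-hand side is the tensor product of the Lagrangian integrals along $I$ with $\Sfr{I}$ and along $I'$ with $\Sfr{I'}$. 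Proposition \ref{prop:Fubini} identifies these two maps.

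The main obstacle is the bookkeeping in the second paragraph: one must ensure that the lift $I' \hookrightarrow R_\can \hookrightarrow V$ is compatible with the identification $\restr{Q}{R_\can} = Q_R$, so that the canonical decomposition for the composed reduction really does agree with the iterated decomposition coming from $L$ followed by $L'$. Once these identifications are in place, the rest is a direct application of Fubini.
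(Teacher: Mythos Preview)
Your approach is essentially the same as the paper's: apply the canonical decomposition twice to get $V \cong I \oplus I' \oplus R' \oplus QI' \oplus QI$, observe that $\Sfree$ splits accordingly, and then invoke Fubini (Proposition \ref{prop:Fubini}) to identify the iterated integral with the single integral along $I \oplus I'$. You supply extra detail the paper omits (identifying $\ker(L'\circ L)$ and checking non-degeneracy of $\tilde I$), which is fine.

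One slip: you write that the symplectic complement of $R'_\can \oplus I \oplus I'$ in $V$ is ``precisely $R'_\can$''. It is not; the complement is $I \oplus I' = \tilde I$, which is exactly what you need for $\ker(L'\circ L) = \tilde I$. Your conclusion is correct, but the intermediate sentence should read that the symplectic complement of the coisotrope $R'_\can \oplus I \oplus I'$ is $\tilde I$ (and the reduction is $R'_\can$).
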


\begin{proof}
    Using the canonical decomposition twice, we get
    \begin{equation*}
        V \cong  I \oplus R  \oplus QI \cong  I \oplus ( I' \oplus R' \oplus Q I') \oplus QI
    \end{equation*}
    and the decomposition of the differential (Proposition \ref{prop:canonicaldecomposition}, Item \ref{item:decQ}) implies $\Sfree = \Sfr{I} + \Sfr{I'} +\Sfr{R'}$. The composition of transfers along $L$ and $L'$ is given by 
    \begin{align*} 
    \D{(I \oplus QI)} \otimes \D{(I' \oplus QI')} \otimes \D{R'}   &\xrightarrow{ \intBV{I\subset I\oplus QI}{I} \otimes \id_{I \oplus QI'} \otimes\id_{R'}} \D{(I' \oplus QI')} \otimes \D{R'}  \\ &\xrightarrow{\intBV{I'\subset I'\oplus QI'}{I'} \otimes \id_{R'}}  \D{R'},  
    \end{align*}
    i.e.\ $\intBV{I\subset I\oplus QI}{I} \otimes \intBV{I'\subset I'\oplus QI'}{I'} \otimes \id_{R'}$. This is equal to $ \intover{\substack{I \oplus I' \phantom{ \oplus Q (I \oplus I')} \\ \cap \phantom{ ' \oplus Q (I \oplus I')}  \\ I \oplus I' \oplus Q (I \oplus I')}} e^{(\Sfr{I}+\Sfr{I'})/\hbar} \otimes\id_{R'}$ by Proposition \ref{prop:Fubini}.
\end{proof}

\subsubsection{Fiber Integrals and Homological Perturbation Theory}\label{ssec:fibintHPL}

We finish this section by showing that the axiomatic definition of the perturbative integral can be easily connected to the  homological perturbation lemma (see e.g.\ \cite{crainic:hpl}). This argument first appeared in the Bc. thesis of O. Skácel \cite{OndraBC}; the construction of  perturbative BV integrals using homological perturbation lemma was anticipated in \cite[Remark~3]{cattaneo_mnev:chern-simons_invariants} and appeared explicitly in e.g.\  \cite{Albert, costello, gwilliam-thesis}; see also \cite[Sec.~5] {doubek_jurco_pulmann:quantum_L_infty_and_HPL} for a review.

\begin{lemma}[Uniqueness of the projection in a SDR] \label{lemma:uniquenessproj}Let $(i,p,k)$ be a SDR between $(V,Q_V)$ and $(W,Q_W)$. Then any chain map $p': V \to W$ satisfying $p'i=\id$ and $p'k=0$ is necessarily equal to $p$.
\end{lemma}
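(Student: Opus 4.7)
The plan is to apply $p'$ to the homotopy identity $ip = \id_V + Q_V k + k Q_V$ and read off the conclusion directly from the three hypotheses on $p'$.

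Concretely, compose $p'$ on the left with both sides of $ip = \id_V + Q_V k + k Q_V$. The left-hand side becomes $(p'i)p$, which equals $p$ by the assumption $p'i = \id_W$. The right-hand side becomes $p' + p' Q_V k + p' k Q_V$. In the middle term I would commute $p'$ past $Q_V$ using the chain map property $p' Q_V = Q_W p'$ to get $Q_W p' k$, which vanishes by $p'k = 0$. The third term vanishes directly by $p'k = 0$. Hence both sides reduce to $p = p'$.

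There is no real obstacle: the proof is a one-line diagram chase using exactly the three listed hypotheses on $p'$ together with the homotopy equation from the SDR data. The only thing worth being careful about is sign conventions in the chain map condition $p'Q_V = Q_W p'$ (degree $0$ map, so no signs arise) and the order of application (left-multiplying by $p'$, not right), so that the identity $p'i = \id_W$ can be used to simplify $p'ip$.
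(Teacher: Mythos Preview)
Your proof is correct and is essentially identical to the paper's: both left-multiply the homotopy identity $ip = \id_V + Q_V k + k Q_V$ by $p'$ and use $p'i=\id$, $p'k=0$, and the chain map condition to kill the extra terms. The paper is slightly terser (it just says the last two terms vanish ``by the assumptions on $p'$''), whereas you make the use of the chain map condition $p'Q_V = Q_W p'$ explicit for the middle term.
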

\begin{proof} We post-compose $ip = \id_V + Q_V k + k Q_V $ with $p'$ to get
\begin{equation}
     p'ip = p' + p'Q_V k + p'k Q_V
\end{equation}
By the assumptions on $p'$, the LHS equals $p$ while the last two terms on the RHS vanish.
\end{proof}
Recall from Proposition \ref{prop:SDRnondeg} that a non-degenerate reduction $L \colon V \dhxrightarrow{} R$ defines a symplectic SDR between $V$ and $R$. We extend it to a SDR $(I, P, K)$ between $\F{V}$ and $\F{R}$, cf. \cite[Sec.~3.3]{doubek_jurco_pulmann:quantum_L_infty_and_HPL}. Finally, we can see $\hbar\BV$ as a  perturbation of $\{\Sfree, -\}$, which allows us to use the homological perturbation lemma to perturb the other maps to get a new SDR. Namely for $P$, the perturbed projection is equal to
\begin{equation}\label{eq:Pprime}
    P' = P(1 + \hbar \BV K+ (\hbar \BV K)^2 + \dots ). 
\end{equation} 
\begin{prop}[{\cite[Sec.~3.2.3]{OndraBC}}]\label{prop:fiberBVisHPL}
    Let $L \colon V \dhxrightarrow{} R$ be a non-degenerate reduction. Then any (normalized) fiber integral along $L$ is necessarily equal to the map $P'$, obtained by deformation $\hbar \BV$ of the SDR induced by $L$ 
    \begin{equation}
        \frac{\int_L e^{\Sfr{I}/\hbar} f\rho}{\int_L e^{ \Sfr{I}/\hbar} \rho} = P'(f).
    \end{equation}
    Therefore, the BV integral from Definition \ref{def:BVintegralLagrangian} exists.
\end{prop}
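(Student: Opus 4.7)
The plan is to identify the normalized fiber integral $\tilde P(f) \coloneqq \bigl(\int_L e^{\Sfr{I}/\hbar} f\rho\bigr)/\bigl(\int_L e^{\Sfr{I}/\hbar} \rho\bigr)$ with the HPL-perturbed projection $P'$ by replaying the argument behind Lemma \ref{lemma:uniquenessproj} for the SDR $(I,P,K)$ between $\F V$ and $\F R$. This SDR is obtained by extending the symplectic special deformation retract of Proposition \ref{prop:SDRnondeg} to symmetric algebras, as in \cite[Sec.~3.3]{doubek_jurco_pulmann:quantum_L_infty_and_HPL}. I would first verify three identities which mirror the hypotheses of Lemma \ref{lemma:uniquenessproj} up to the perturbation by $\hbar\BV$: (1) $\tilde P$ is a chain map from $(\F V, \{\Sfree,-\}+\hbar\BV)$ to $(\F R, \{\SfreeR,-\}+\hbar\BV_R)$, (2) $\tilde P \circ I = \id_{\F R}$, and (3) $\tilde P \circ K = 0$.

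Identity (1) will follow from Item~1 of Proposition \ref{prop:fiber_integral}, together with the observation that the denominator $\int_L e^{\Sfr{I}/\hbar}\rho$ is itself annihilated by $\hbar\BV_R + \{\SfreeR,-\}$. For (2), elements of $I(\F R)$ are constant along the $I$ and $QI$ directions of the canonical decomposition \eqref{eq:canonicaldecomposition}, so they pull out of the fiber integral and the normalization leaves the identity by Fubini (Proposition \ref{prop:Fubini}). Identity (3) rests on the fact that the (co)derivation $K$ extending $k\colon QI \to I$ to $\Sym(V^*)$ sends any factor from $I^*$ to a factor in $(QI)^* \subset \Ann{I^\omega} = \Ann{C}$, so $K(\F V)\subseteq \I_C$, and Item~2 of Proposition \ref{prop:fiber_integral} then forces $\tilde P \circ K = 0$.

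The main computational step is to apply $\tilde P$ on the left of the \emph{unperturbed} SDR relation $I P = \id + \{\Sfree,-\} K + K \{\Sfree,-\}$: the term $\tilde P K \{\Sfree,-\}$ vanishes by (3), while (1) rewrites
\[
    \tilde P \{\Sfree,-\} K = \bigl(\{\SfreeR,-\} + \hbar\BV_R\bigr)\tilde P K - \tilde P\,\hbar\BV K = -\tilde P\,\hbar\BV K.
\]
Together with (2) this collapses to $P = \tilde P\,(1 - \hbar\BV K)$. Since $\hbar\BV K$ strictly raises the weight grading of Definition \ref{def:formal_functions}, the factor $1-\hbar\BV K$ is invertible on $\F V$, so
\[
    \tilde P = P\sum_{n\ge 0} (\hbar\BV K)^n = P',
\]
matching Equation \eqref{eq:Pprime}. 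The existence statement for Definition \ref{def:BVintegralLagrangian} then follows by specialization to $R = *$ (so $L\subset V$ is Lagrangian): the manifestly well-defined $P'$ provides a formula for the perturbative Lagrangian BV integral and, by reversing the derivation, verifies its three defining axioms.

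The main obstacle I anticipate is the careful bookkeeping of half-density prefactors: the $(2\pi)^{\cdots}\hbar^{\cdots}\rho(\base e_R, \base e_I, Q\base e_I)$ normalization from Item~3 of Proposition \ref{prop:fiber_integral} must be reconciled with the HPL normalization of $P'$ on both sides of the division in $\tilde P$. A secondary delicate point is to pin down the precise convention for extending $k$ to $K$ on $\Sym(V^*)$ used in \cite{doubek_jurco_pulmann:quantum_L_infty_and_HPL} and to confirm that the resulting $K$ genuinely lands in $\I_C$; this should be formal but depends on the chosen (co)derivation formula.
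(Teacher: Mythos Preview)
Your uniqueness argument is correct and amounts to the same thing the paper does: it cites Lemma~\ref{lemma:uniquenessproj} applied to the perturbed SDR, and your identities (1), (2), (3) are exactly what is needed to feed into that lemma (since $I' = (1-K\hbar\BV)^{-1}I$ and $K' = K(1-\hbar\BV K)^{-1}$ both have $K$ as their leftmost factor in every nontrivial term, your $\tilde P K = 0$ and $\tilde P I = \id$ upgrade to $\tilde P K' = 0$ and $\tilde P I' = \id$). Reproving the lemma inline with the perturbation present, as you do, is a perfectly good alternative.

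The existence argument, however, has a genuine gap. ``Reversing the derivation'' does not recover axiom~\ref{i:dBVi:vanish} of Definition~\ref{def:BVintegralLagrangian}. Your identity (3) is $\tilde P K = 0$, which you obtained from axiom~\ref{i:dBVi:vanish} via the inclusion $\operatorname{Im} K \subset \I_L$; reversing this implication would require $\operatorname{Im} K = \I_L$, which is false (for instance $K$ annihilates polynomials in the $\beta$-variables alone). What the paper actually checks for existence is that $P'$ itself kills $\I_L$, and this needs an additional observation you do not make: in coordinates $\gamma$ on $L$ and $\beta$ on $QL$ one has $K \propto \beta\,\partial_\gamma$ and $\BV \propto \partial_\beta\partial_\gamma$, so $\hbar\BV K$ preserves the $\beta$-degree of a monomial. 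Hence $(\hbar\BV K)^n(\beta f)$ still has $\beta$-degree $\ge 1$, and $P$ (projection to constants) annihilates it, giving $P'(\beta f)=0$. This step is short but is not a consequence of your (1)--(3); you should add it explicitly rather than appeal to reversibility. The concerns you flag about half-density normalization and the extension of $k$ to $K$ are comparatively routine.
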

\begin{proof}
    The first claim  follows directly by combining Proposition \ref{prop:fiber_integral} and  Lemma \ref{lemma:uniquenessproj}.

    For the existence statement, we want to prove that, for $\tilde{L}\subset V$ a nondegenerate Lagrangian, $P'$ satisfies the three conditions from Definition \ref{def:BVintegralLagrangian}. Namely, we need to check items \ref{i:dBVi:exact}, \ref{i:dBVi:vanish} and, instead of item \ref{i:halfdensnorm}, we need that $P'(1) = 1$. The first claim follows by the fact that $P'$ is a chain map between the perturbed differentials. The third claim follows easily from \eqref{eq:Pprime}, as $K(1) = 0$. Finally, the second claim, that $P'$ is zero on the vanishing ideal of $\tilde{L}$, is proven as follows: if we denote coordinates on $\tilde{L}$ by $\gamma$ and coordinates on $Q\tilde{L}$ by $\beta$, we have schematically $K \propto \beta \partial_\gamma$ and $\BV \propto \partial_\beta \partial_\gamma$. The composition $\BV K$ thus does not change the number of $\beta$'s in a monomial; and therefore $P'(\beta f) = P(1 + \hbar \BV K + \dots) (\beta f)=0$, as $P$ is zero on non-constant polynomials (see \cite[Sec.~4.1.1]{doubek_jurco_pulmann:quantum_L_infty_and_HPL} for a calculation of the tranferred differential and a more verbose version of this calculation).
\end{proof}

\section{Quantum \texorpdfstring{$(-1)$}{(-1)}-Symplectic Category}

 As explained by Ševera \cite{severa:qosc}, one should view Lagrangian submanifolds $\L \subset \M$ of an odd symplectic supermanifold $\M$ as distributional half-densities on $\M$. Indeed, the same way a half-density $\beta$ on $\M$ gives a functional\footnote{Provided the integral convergences, e.g.\ the body of $\M$ is compact.} on half-densities,
\begin{equation*}
    \alpha  \mapsto \intover{\M} \alpha \beta,
\end{equation*}
a Lagrangian $\L$ also gives a functional, a Dirac distribution supported on $\L$;
\begin{equation*}
    \alpha  \mapsto  \intover{\L} \alpha \equiv \intover{\M} \alpha \delta_\L .
\end{equation*}
This leads to a natural enlargement of (the odd version of) Weinstein's symplectic \say{category}: morphisms $\mathcal{M}_1 \to \mathcal{M}_2$ between $(-1)$-symplectic supermanifolds should be (distributional) half-densities on $\flip{\mathcal{M}_1}\times \mathcal{M}_2$ \cite[Def.~1]{severa:qosc}, with composition given by integration over the common factor.
\medskip

We now want to rigorously construct a linear version of such a category. That is, we would like the set of morphisms from $V$ to $W$ to contain both half-densities on $V\times W$ and Lagrangian subspaces of $\flip{V}\times W$. If we try to compose these two kinds of morphisms together, we get a diagram 
\begin{equation}
    \begin{tikzcd}
        * \arrow[rrr, "{ f\rho \in \D{V} }"] &&& V \arrow[rrr, "{L\subset_\text{Lagr.} \flip{V}\times W}"] &&& W.
        \end{tikzcd}
\end{equation}
It is natural to use the factorization of $L$ from Proposition \ref{prop:red_cored_factorization} to take the integral along the reduction
$V\dhxrightarrow{} R_V$ to get $\int_{\Ker{L}}f\rho \in \D{R_V}$. Moreover, the isomorphism $\phi\colon R_V \to R_W = \Im L / (\Im L)^\omega$ can be used to define
\begin{equation}  \phi_* \int_{\Ker{L} \smallsubset V} f \rho \in \D{(\Im L / (\Im L)^\omega)}. 
\end{equation}
This leads us to the following definition of a distributional half-density on $V$.
\begin{definition}\label{def:gen_Lagr}
    Let $( V, \omega)$ be a $(-1)$-shifted symplectic vector space. A \textbf{generalized Lagrangian} in $V$ is a triple $( C , f\rho , \Sfree )$ where
    \begin{itemize}
        \item $C \subseteq V$ is a coisotropic subspace,
        \item $f\rho \in \D{(C/C^\omega)} $ is a half-density on the coisotropic reduction,
        \item $\Sfree \in \Sym^2{((C/C^\omega)^*)}$  is a solution of the classical master equation on the coisotropic reduction, i.e.\ a differential on $C/C^\omega$ compatible with the symplectic form.
    \end{itemize}
    Given such generalized $( C , f \rho , \Sfree )$, we define a square-zero operator
    \[
        \hbar\BV ( C , f \rho , \Sfree ) \define (C, \hbar \BV(f) \rho + \{\Sfree, f\} \rho, \Sfree ). \qedhere
    \]
    \end{definition}
    
    \begin{remark}\label{rmk:distributional_halfdensity}
        Informally, such generalized Lagrangian should be seen as the ``distributional half-density'' \[e^{\Sfree/\hbar}f\rho \otimes \delta_{C^\omega}\] on $V$, using a (non-canonical) decomposition $V \cong (C/C^\omega)\oplus \sT{C^\omega}$. This also motivates the definition of the action of $\hbar \BV$ on generalized Lagrangians, see also \cite[Thm.~3]{severa:qosc}.

        Generalized Lagrangians can be understood as a \say{quantum} version of Lagrangian subspaces: Considering a half-density $e^{\Sfree/\hbar}$ and taking $\hbar\to 0$ limit, the path integral with weight $e^{\Sfree/\hbar}$ localizes to a Lagrangian subspace. For example, consider $\sT\mathbb R$ with even and odd coordinate denoted by $x$ and $\xi$. Then the distributional limit is 
        \vspace{-2mm}
        \[\lim_{\hbar=0} \hbar^{-1/2} e^{\frac {-1}{2} a x^2 /\hbar} \sqrt{dx d\xi} = \sqrt\frac{2\pi}{a} \delta_{x=0}\] (ignoring pairing with non-transversal $\delta_{x=0}$). See also the work of Albert Schwarz \cite[Sec. 7, Lemmata 8,8']{schwarz:semiclassical_in_BV}. 
    \end{remark}
We would like to define a category where morphisms $V \to W$ are generalized Lagrangians in $\flip{V}\times W$. To compose such morphisms, we need to investigate compositions of coisotropic relations in more detail.

\subsection{Coisotropic Relations}
A \textbf{coisotropic relation} from $V_1$ to $V_2$ is a coisotropic subspace $C \subseteq \flip{V_1} \times V_2$, see e.g.\ \cite{weinstein:coiso_and_poisson_groupoids, weinstein:coiso_relations}. Composition of coisotropic relations is defined by the usual composition of set-theoretic relations from equation \eqref{eq:comp}. Such composition is again coisotropic (see e.g.\ Remark \ref{rmk:XandL} below), so we have a category $\Coiso$ of coisotropic relations of $(-1)$-symplectic vector spaces. It will be useful now to denote the coisotropic reduction more concisely by
\begin{equation}
    \Red_C \define C/C^\omega .
\end{equation}
Let us now define a reduction $\Red_C \times \Red_{C'} \dhxrightarrow{} \Red_{C'\circ C}$ which will be used to define a composition of generalized Lagrangian relations, see also Remark \ref{rmk:XandL}.
\begin{lemma}\label{lemma:comp_R}
Let $C \subseteq \flip{V_1}\times V_2$ and $C' \subseteq \flip{V_2}\times V_3$ be two coisotropic relations. Then their $\Red$-\textbf{compositor}, defined as the graded linear relation 
\begin{equation}
    \begin{tikzcd}
\Red_C \times \Red_{C'} \arrow[r, "\Comp_{C,C'}"] & \Red_{C'\circ C},
\end{tikzcd}
\end{equation}
\begin{equation}\label{eq:compositor}
     \Comp_{C, C'}\define\{ ( [v_1, v_2], [v_2, v_3], [v_1, v_3] ) \in \flip{\Red_C \times \Red_{C'}}\times \Red_{C'\circ C} \mid (v_1, v_2)\in C, (v_2, v_3)\in C' \} ,
\end{equation}
is a reduction, i.e.\ a surjective Lagrangian relation.  Furthermore, if $C'' \in \flip{V_3}\times V_4$ is coisotropic, then the following diagram in $\LinOSC$ commutes.
\begin{equation}\label{eq:XXXX}\begin{tikzcd}[column sep = 3cm]
	{\Red_C\times \Red_{C'}\times \Red_{C''}} & {\Red_{C'\circ C} \times \Red_{C''}} \\
	{\Red_C\times \Red_{C''\circ C'}} & {\Red_{C''\circ C' \circ C}}
	\arrow["{\Comp_{C, C'}\times \diag{\Red_{C''}}}", from=1-1, to=1-2]
	\arrow["{\Comp_{C'\circ C, C''}}", from=1-2, to=2-2]
	\arrow["{\diag{\Red_C}\times \Comp_{C', C''}}"', from=1-1, to=2-1]
	\arrow["{\Comp_{C, C''\circ C'}}"', from=2-1, to=2-2]
\end{tikzcd}\end{equation}
\end{lemma}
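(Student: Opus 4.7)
\myproof

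My plan is to realize $\Comp_{C,C'}$ as a composition of three Lagrangian relations in $\OSC$, so that Lagrangianness follows from Lemma~\ref{lemma:lagrangian_comp}; the remaining claims can then be verified by unwinding set-theoretic composition of linear relations.

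First, I need to establish that $C'\circ C\subset \flip{V_1}\times V_3$ is coisotropic, so that the reduction $L_{C'\circ C}\colon \flip{V_1}\times V_3 \dhxrightarrow{} \Red_{C'\circ C}$ is well defined. The cleanest route is to mimic the proof of Lemma~\ref{lemma:lagrangian_comp}: the subspace $D \define \flip{V_1}\times \diag{V_2}\times V_3 \subset \flip{V_1}\times V_2\times \flip{V_2}\times V_3$ is coisotropic, its reduction is canonically $\flip{V_1}\times V_3$, and the image of $(C\times C')\cap D$ in this reduction is exactly $C'\circ C$. A direct calculation analogous to Lemma~\ref{lemma:reduced_lagrangian} shows that the image of a coisotropic subspace intersected with another coisotrope under coisotropic reduction is coisotropic.

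With this in hand, I will introduce the universal compositor
\begin{equation*}
    M \define \{((v_1, v_2, v_2, v_3), (v_1, v_3))\mid v_i\in V_i\}\subset \flip{(\flip{V_1}\times V_2\times \flip{V_2}\times V_3)}\times (\flip{V_1}\times V_3).
\end{equation*}
Isotropy of $M$ is an inspection (the $\omega_i$ contributions cancel pairwise), and the dimensional criterion of Lemma~\ref{lemma:dimsum_symp} together with Remark~\ref{rmk:odd_symp_dim} (applied to each $V_i$) shows that $M$ is Lagrangian. Unwinding the set-theoretic composition of the three Lagrangian relations
\begin{equation*}
    \Red_C\times\Red_{C'} \xrightarrow{L_C^T\times L_{C'}^T} (\flip{V_1}\times V_2)\times (\flip{V_2}\times V_3) \xrightarrow{M} \flip{V_1}\times V_3 \xrightarrow{L_{C'\circ C}} \Red_{C'\circ C}
\end{equation*}
recovers exactly the set \eqref{eq:compositor}, so $\Comp_{C, C'}$ is Lagrangian by Lemma~\ref{lemma:lagrangian_comp}. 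Surjectivity is immediate: any $[v_1, v_3]\in \Red_{C'\circ C}$ is represented by $(v_1, v_3)\in C'\circ C$, hence by $([v_1, v_2], [v_2, v_3])$ for a witness $v_2$.

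For the coherence diagram~\eqref{eq:XXXX}, both composites evaluate on $([v_1, v_2], [v_2, v_3], [v_3, v_4])$ (with matching representatives at $v_2$ and $v_3$) to $[v_1, v_4]$, so the claim reduces to associativity of set-theoretic composition of linear relations. The main obstacle is the first step, namely coisotropy of $C'\circ C$ together with Lagrangianness of $M$ in the graded setting; once those are in place, everything else is essentially formal.

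\end{myproof}
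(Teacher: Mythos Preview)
Your argument for Lagrangianness of $\Comp_{C,C'}$ via the factorization
\[
\Red_C\times\Red_{C'}\xrightarrow{\mathrm{red}_C^T\times\mathrm{red}_{C'}^T}(\flip{V_1}\times V_2)\times(\flip{V_2}\times V_3)\xrightarrow{M}\flip{V_1}\times V_3\xrightarrow{\mathrm{red}_{C'\circ C}}\Red_{C'\circ C}
\]
is correct and is precisely the alternative construction the paper records in Remark~\ref{rmk:XandL}. The paper's main proof instead applies Lemma~\ref{lemma:reduced_lagrangian} directly: it reduces the diagonal Lagrangian $\diag{V_1\times V_2\times V_3}$ along the coisotrope $C\times C'\times(C'\circ C)$. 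Both routes are short; yours has the advantage of exhibiting coisotropy of $C'\circ C$ along the way (as the image of the Lagrangian $M\circ(\mathrm{red}_C^T\times\mathrm{red}_{C'}^T)$, via Lemma~\ref{lemma:ker_L_T}), whereas the paper's route avoids verifying that $M$ is Lagrangian.

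Your treatment of diagram~\eqref{eq:XXXX} has a gap. Saying that both composites ``evaluate on $([v_1,v_2],[v_2,v_3],[v_3,v_4])$ to $[v_1,v_4]$'' only shows that both legs \emph{contain} the relation
\[
\Comp_{C,C',C''}=\{([v_1,v_2],[v_2,v_3],[v_3,v_4],[v_1,v_4])\mid (v_i,v_{i+1})\in C^{(i)}\},
\]
not that they equal it; and the phrase ``reduces to associativity of set-theoretic composition'' is not justified, since passing to the quotients $\Red_\bullet$ means the intermediate representatives in the two-step composite need not match those coming from a single chain $(v_1,v_2,v_3,v_4)$. The paper closes this by observing that $\Comp_{C,C',C''}$ is itself Lagrangian (same argument as for $\Comp_{C,C'}$), so the inclusion of this Lagrangian into each Lagrangian leg forces equality by maximality of Lagrangians among isotropes. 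You should add this step.
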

\begin{proof}
    The relation $\Comp_{C, C'}$ is Lagrangian since it can be obtained by coisotropic reduction along
    \begin{equation} C \times C' \times (C'\circ C)\subset V_1\times \flip{V_2} \times V_2 \times \flip{V_3} \times \flip{V_1}\times V_3\end{equation}
    of the Lagrangian subspace
    \begin{equation} \diag{V_1\times V_2 \times V_3}, \end{equation}
   using Lemma \ref{lemma:reduced_lagrangian}.
    It is surjective since for any $(v_1, v_3) \in C'\circ C$, one can (by definition) find $v_2$ such that $(v_1, v_2)\in C$ and $(v_2, v_3)\in C'$.

    \smallskip
    Finally, both legs of the square \eqref{eq:XXXX} compose to relations $\Red_C\times \Red_{C'}\times \Red_{C''} \to \Red_{C''\circ C' \circ C}$ containing
    \begin{equation}\Comp_{C, C', C''}\define\{ ([v_1, v_2], [v_2, v_3], [v_3, v_4], [v_1, v_4])\mid  (v_1, v_2)\in C, (v_2, v_3)\in C', (v_3, v_4)\in C'' \}.  \end{equation}
    Since $\Comp_{C, C', C''}$ is Lagrangian (by a similar argument as above), for dimensional reasons (Lemma \ref{lemma:dimsum_symp}) the two legs of the square are necessarily equal to it.
\end{proof}
\begin{remark}\label{rmk:XandL}
The relation $\Comp_{C, C'}$ can be more abstractly constructed as the composition of the following Lagrangian relations
\[\begin{tikzcd}[column sep = 1.1in]
	{\flip{V_1} \times V_2 \times \flip{V_2}\times V_3} & {\flip{V_1}\times V_3} \\
	{\Red_C \times \Red_{C'}} & {\Red_{C'\circ C}}
	\arrow["{\mathrm{red}_{V_1 \times \diag{V_2} \times V_3}}", two heads, from=1-1, to=1-2]
	\arrow["{\mathrm{red}_{C\times C'}^T}", tail, from=2-1, to=1-1]
	\arrow["{\mathrm{red}_{C'\circ C}}", two heads, from=1-2, to=2-2]
	\arrow["{\Comp_{C, C'}}"', dashed, two heads, from=2-1, to=2-2]
	\arrow[dashed, from=2-1, to=1-2]
\end{tikzcd}\]
The diagonal arrow has $C'\circ C$ as its image, which proves that $C'\circ C$ is coisotropic. 
\medskip

Note also that composition of relations $V_1 \xrightarrow{L_1} V_2 \xrightarrow{L_2} V_3$ is given by reduction\footnote{Denoted $[L_1 \times L_2]_C$ in Lemma \ref{lemma:reduced_lagrangian}.} along the top line of the diagram, while ``composition'' of half-densities on $\Red_C$ and $\Red_{C'}$ is given by ``reduction'' (fiber integral) along the bottom line, see Definition \ref{def:qosc}.
\end{remark}

\begin{remark}\label{rmk:R_2functor}
The $\Red$-compositor $\Comp_{\bullet}$ defined by equation \eqref{eq:compositor} provides structure of a lax 2-functor on
\begin{equation}
    \Red_{\bullet} \colon \Coiso \to \B \! \OSC ,
\end{equation}
where the 2-category of coisotropic relations $\Coiso$ has only identity 2-cells, while $\B \! \mathcat{\LinOSC}$ is the one-object 2-category associated to the symmetric monoidal category $(\LinOSC, *, \times)$. This appears to be a part of a higher categorical structure involving coisotropic reductions and half-densities, which we will explore in future work.
\end{remark}

\subsection{Quantum Linear \texorpdfstring{$(-1)$}{(-1)}-Shifted Symplectic Category}\label{ssec:gen_Lagrangians}
We are now ready to define a category where morphisms are given by generalized Lagrangian relations. Since we can only compose morphisms if the appropriate perturbative integrals are well defined, we only get a \textbf{partial category}, where composition is not always defined.

\begin{definition}\label{def:qosc}
     The \textbf{quantum linear $(-1)$-symplectic category} $\QOSC$ is the partial category where:
    \begin{itemize}
        \item Objects are finite-dimensional $(-1)$-shifted symplectic vector spaces.
        \item Morphisms in $\QOSC \left( U , V \right)$ are generalized Lagrangians in $\flip{U} \times V$.
        \item The identity is given by the diagonal $ \left( \diag{V} \in \Coiso (V,V) , 1 , 0 \right)$.
    \end{itemize}
The composition of
\begin{equation}
    \begin{tikzcd}
        V_1 \arrow[rrr, "{( C ,  f\rho , \Sfree )}"] &&& V_2 \arrow[rrr, "{( C' , f'\rho ' , \Sfr{'} )}"] &&& V_3
        \end{tikzcd}
\end{equation}
is defined if $(\Sfree + \Sfr{'})$ is non-degenerate\footnote{In other words, if $\Comp_{C,C'}$ is a non-degenerate reduction from $\left( \Red_C \times \Red_{C'}, (\Sfree + \Sfr{'}) \right)$ to $\Red_{C'\circ C}$.} on $\Ker \Comp_{C,C'}$, and is given by
\begin{equation}\label{eq:qosc_comp}
    ( C' , f'\rho ' , \Sfr{'}) \after ( C ,  f\rho , \Sfree ) \define \big(  C' \after C,  \intover{\Comp_{C,C'}} e^{(\Sfree + \Sfr{'} )^{\ker \Comp_{C,C'}}/\hbar} f\rho \otimes f'\rho ' , (\Sfree+\Sfr{'})^{\Red_{C \after C'} } \big).
\end{equation}
Here, $\Comp_{C, C'}\colon \Red_C\times \Red_{C'} \dhxrightarrow{} \Red_{C'\circ C}$ is the $\Red$-compositor from Lemma \ref{lemma:comp_R}.
\end{definition}

\begin{prop}\label{prop:qosc_associative}
The composition of $\QOSC$ is unital and associative. Moreover, for two composable morphisms, we have
\begin{align*}
 & \hbar \BV [( C' , f'\rho ' , \Sfr{'}) \after ( C ,  f\rho , \Sfree )] \\ &= \hbar \BV( C' , f'\rho ' , \Sfr{'}) \after ( C ,  f\rho , \Sfree ) + (-1)^{\deg{f'}}( C' , f' \rho ' , \Sfr{'}) \after \hbar \BV ( C ,  f\rho , \Sfree ), \end{align*}
where the sum of two such generalized Lagrangians is defined by adding their half-density components.\footnote{It is possible to define such addition of generalized Lagrangians, if they have the same coisotrope and differential. This way, the category $\QOSC$ becomes enriched in the category of dg vector spaces, via the operator $\hbar \BV$.}
\end{prop}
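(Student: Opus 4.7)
My plan is to break the proposition into three claims—unitality, associativity, and the Leibniz rule for $\hbar\BV$—and reduce each to structural facts already established in Sections \ref{sec:osc} and \ref{ssec:fiberintegralsreductions}.

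Unitality reduces to a direct calculation. Since $\diag V\subset \flip V\times V$ is Lagrangian, its reduction $\Red_{\diag V}$ is the zero space $*$, and the $\Red$-compositor
\[\Comp_{C,\diag{V_2}}\colon \Red_C\times *\to \Red_{\diag{V_2}\circ C}=\Red_C\]
is, by the explicit description in \eqref{eq:compositor}, the identity graph on $\Red_C$. Its kernel is zero, so the non-degeneracy hypothesis is vacuous, and the normalization axiom of Proposition \ref{prop:fiber_integral} returns $(C,f\rho,\Sfree)$. The left identity check is symmetric.

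For associativity I would reduce both bracketings of the triple composition to a single symmetric fiber integral along the triple compositor $\Comp_{C,C',C''}$ appearing in the proof of Lemma \ref{lemma:comp_R}. The commuting square \eqref{eq:XXXX}, together with uniqueness of factorization cospans (Proposition \ref{prop:red_cored_factorization}), identifies both iterated $\Red$-compositors with $\Comp_{C,C',C''}$. Lemma \ref{lemma:fubinitransfer}, applied on top of Fubini (Proposition \ref{prop:Fubini}), then collapses each nested fiber integral into a single integral along $\Comp_{C,C',C''}$ with integration kernel $e^{(\Sfree+\Sfr{'}+\Sfr{''})^{\ker}/\hbar}$ applied to $f\rho\otimes f'\rho'\otimes f''\rho''$. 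Because the canonical decomposition of Proposition \ref{prop:canonicaldecomposition} depends only on the combined isotrope and not on how it is assembled, the transferred quadratic forms on $\Red_{C''\circ C'\circ C}$ also agree in the two orderings.

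The Leibniz identity for $\hbar\BV$ follows from Item \ref{item:fiberintBVcomm} of Proposition \ref{prop:fiber_integral}, which states that the fiber integral intertwines $\hbar\BV+\{\Sfree+\Sfr{'},-\}$ on $\D{(\Red_C\times\Red_{C'})}$ with the transferred operator on the target. Since this source operator splits as a graded derivation of the tensor product $f\rho\otimes f'\rho'$, the chain-map identity combined with the Leibniz rule yields the stated formula, including the sign $(-1)^{\deg{f'}}$ coming from moving $\hbar\BV$ past the left tensor factor. The main obstacle throughout is bookkeeping: one must check at every stage that non-degeneracy of each composite is inherited from the intermediate composites. I expect the canonical decomposition of Proposition \ref{prop:canonicaldecomposition} to do the real work here, since it identifies the combined kernel of $\Comp_{C,C',C''}$ as the direct sum of the intermediate kernels, reducing the non-degeneracy of the full composition to that of its constituents.
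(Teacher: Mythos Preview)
Your proposal is correct and follows essentially the same route as the paper: unitality via $\Comp_{C,\diag{V_2}}=\diag{\Red_C}$, associativity via the commuting square \eqref{eq:XXXX} of Lemma \ref{lemma:comp_R} together with Lemma \ref{lemma:fubinitransfer}, and the $\hbar\BV$-compatibility via Item \ref{item:fiberintBVcomm} of Proposition \ref{prop:fiber_integral}. Two minor remarks: the appeal to uniqueness of factorization cospans is unnecessary, since \eqref{eq:XXXX} already asserts equality of the two composite reductions directly; and for the agreement of the transferred differentials the paper invokes Appendix \ref{sec:transfer_of_Q} (which expresses $Q^R$ as a composition of relations) rather than the canonical decomposition, though either works.
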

\begin{proof}
The composition \eqref{eq:qosc_comp} is defined by transferring along the reduction $\Comp_{C, C'}$.  Composing $C'\circ C$ with $C'=\diag{V_2} $, we get $\Comp_{C, C'} = \diag{\Red_C} \colon \Red_{C} \to \Red_{C}$, which proves unitality. 

\smallskip
When considering general $(C''\circ C') \circ C$ and $C''\circ(C' \circ C)$, the resulting composite reductions are equal by diagram \eqref{eq:XXXX} of Lemma \ref{lemma:comp_R}. Thus, associativity for the composed differential follows from Appendix \ref{sec:transfer_of_Q}, while associativity for the composed half-density follows from Lemma \ref{lemma:fubinitransfer}.
\smallskip

Finally, the compatibility of the composition with the operator $\hbar\BV$ follows immediately from Item \ref{item:fiberintBVcomm} of Proposition \ref{prop:fiber_integral}. 
\end{proof}

\begin{remark}[Why is $\QOSC$ only a partial category?]
    In the symplectic category of smooth symplectic manifolds and smooth Lagrangian relations, the composition of two Lagrangians is defined only if the set-theoretic composition is smooth. The reason why $\QOSC$ is a partial category is \emph{different}: two morphisms are composable if the relevant BV integral converges. Since we define our integrals perturbatively, this is equivalent to invertibility of the quadratic part of the action; but one can imagine different contexts where the integrals are over e.g.\ compact manifolds, and always converge (for the price of introducing transversality considerations).
    \smallskip

    However, non-composability is a well-known feature of BV formalism, and is usually solved by considering closed integrands (morphisms) and deforming the Lagrangian. This suggests that our $\QOSC$ should be seen as a (partial) subcategory of  bigger dg category; the physical content of BV theories would be captured by the homology of this dg category. See also Section \ref{ssec:QLinfty}, where we see that postcomposing with non-degenerate Lagrangians does not change the homology of the differential given by $\Sfree$; we expect that the missing morphisms could do just that.
\end{remark}

\subsubsection{Examples}
\begin{example}
Each Lagrangian relation $L \colon V \to W$ gives a generalized Lagrangian $(L, 1, 0)$, where $1 \in \D{(L/L)} = \mathbb \Rhbar$ should be thought of as a scalar multiplying $\delta_L$ (see Remark \ref{rmk:distributional_halfdensity}).
This way, we get $\OSC$ as a wide subcategory of $\QOSC$, since one can easily verify that $( L' , 1 , 0) \after ( L , 1 , 0) = ( L' \after L , 1 , 0) $. 
\end{example}

\begin{example} A composition
\vspace{-1em}
    \begin{equation}
        \begin{tikzcd}
            * \arrow[rrr, "{( C , f\rho , \Sfree )}"] &&& V \arrow[rrr, "{( C' , f'\rho' , \Sfr{'} )}"] &&& *
            \end{tikzcd}
    \end{equation}
    gives a formal Laurent series as a result (see Remark \ref{rem:funczerodim}), i.e.\ defines a pairing of generalized Lagrangians in $V$. If we denote $\pi , \pi '$ the projections to the coisotropic reductions of $C$, $C'$, then  
     \begin{equation}
         \ker \Comp_{C,C'} = \Im \Comp_{C,C'}^T =  \pi \times \pi ' ( C \cap C' ) .
        \end{equation}
         For $(\Sfree + \Sfr{'} )$ non-degenerate on $\pi \times \pi ' ( C \cap C' )$, this formal Laurent series is computed as
        \begin{equation}
            \intover{\pi \times \pi ' ( C \cap C' ) \subset \Red_C \times \Red_{C'}} e^{(\Sfree + \Sfr{'} )/\hbar} f\rho \otimes f'\rho '  \ \in \Rhbar. 
        \end{equation}  
    The operator $\hbar \BV$ is self-adjoint with respect to this pairing, due to Proposition \ref{prop:qosc_associative}. 
    \medskip
    
    Let us also highlight the following special cases, which show how the category $\QOSC$ contains the standard BV integrals.   \begin{enumerate}
        \item The case when both generalized Lagrangians come from Lagrangian subspaces was considered in the previous examples. 
        \item If both generalized Lagrangians have support $C=V$, they are given by quadratic functions $\Sfree, \Sfree' \in \Sym^2{V^*}$ and half-densities $f\rho, f'\rho' \in \D{V}$. Their composition is given by the formal integral of the density $e^{\Sfree + \Sfree'}f f' \rho\rho'$ over $V$.
        \item If one of the generalized Lagrangians is of the form $(V, f\rho, \Sfree)$ and the other one is given by $(L, 1, 0)$ with $L\subset V$ Lagrangian, their pairing is given by the (formal) integral $\int_L e^{\Sfree}f\rho$.
    \end{enumerate}    
    \end{example}  
We can generalize the last item above to transfer half-densities along Lagrangian relations.
    
\begin{example}\label{ex:L_after_rho}
    Let us now return to the motivating example above Definition \ref{def:gen_Lagr}. That is, we want to compose
    \begin{equation}
        \begin{tikzcd}
            * \arrow[rrr, "{( U, f\rho, \Sfree )}"] &&& U \arrow[rrr, "{( L ,  1 , 0 )}"]  &&& V.
            \end{tikzcd}
    \end{equation}
     Let us consider the factorization cospan of $L$ (Definition \ref{def:factor_cospan}).
    \begin{equation}
       \begin{tikzcd}
U \arrow[rd, "L_U"', two heads] \arrow[rr, "L"] &   & V \arrow[ld, "L_V", two heads] \\
                                                & R &                               
\end{tikzcd}
        \end{equation}
        Clearly, $L \after U = \Im L \in \Coiso (*,V)$. From the definition, it is easy to see that
    \begin{equation}
     \Comp_{U,L } = \{ ( [0, u], [u, v], [0, v] ) \mid  (u, v)\in L \} \subset \flip{U \times *}\times R,
    \end{equation}
    so $\ker \Comp_{U,L} = \ker L$ and $\Im \Comp_{U,L} = R$. Therefore
    \begin{equation}
        ( L , 1 , 0) \after ( U , f\rho , \Sfree ) = \big( \Im L ,  \intBV{L_U}{\Ker L} f\rho  , \Sfr{R}  \big).
    \end{equation}
\end{example}

\subsection{Quantum \texorpdfstring{$\Linfty$}{L-infinity} Algebras} \label{ssec:QLinfty}

Recall the definition of a quantum $\Linfty$ algebra from \cite{zwiebach:closed_string}, we will use the form \cite[Def.~7]{doubek_jurco_pulmann:quantum_L_infty_and_HPL}.
\begin{definition}\label{def:qLinfty}
    A \textbf{quantum $L_{\infty}$ algebra} structure on a $(-1)$-shifted symplectic space $\left(V, \omega\right) $ is defined by a sequence of elements 
    \[ \left\{ S^g_n \in \Sym^n{V^*} \mid n \ge 1, g \ge 0, 2g + n \ge 2\right\} \]
    such that the formal sum
    $$S = \sum_{n, g} S_n^g \hbar^g \in \F{V}$$
    satisfies the \emph{quantum master equation}
    \[\hbar \BV e^{S/\hbar} = 0. \qedhere \]
\end{definition}
Let us denote $\Sfree \define S^0_2$ and $\Sint = S-\Sfree$, interpreted as the free and the interaction parts of $S$ respectively. As a consequence of the quantum master equation, $(V, \omega, \Sfree)$ is a dg $(-1)$-symplectic vector space, since $\{ \Sfree, \Sfree\} = 0$. With the decomposition $S = \Sfree + \Sint$, the quantum master equation can be equivalently written as
\begin{equation} \label{eq:decQME} (Q + \hbar \BV) e^{\Sint/\hbar}=0 \qquad \text{ or } \qquad 
        \frac12 \left\{ \Sint , \Sint \right\} + (Q + \hbar \BV) \Sint = 0,\end{equation}
where we denote $Q = \{\Sfree, -\}$.

\begin{prop}\label{prop:qLinfty_as_a_morphism}
Let $S \in \F{V}$ be a quantum $\Linfty$ algebra on a $(-1)$-symplectic vector space. Then for any linear half-density $\rho \in \HalfDens{V}$, the triple $(V, e^{\Sint/\hbar}\rho, \Sfree)$ defines a $\hbar \BV$-closed morphism
% https://q.uiver.app/#q=WzAsMixbMCwwLCIqIl0sWzEsMCwiViJdLFswLDEsIihWLCBlXntcXFNpbnQvXFxoYmFyfVxccmhvLCBcXFNmcmVlKSIsMl1d
\[\begin{tikzcd}[column sep = 1.2in]
	{*} & V
	\arrow["{(V, e^{\Sint/\hbar}\rho, \Sfree)}", from=1-1, to=1-2]
\end{tikzcd}\]
in the category $\QOSC$.
\end{prop}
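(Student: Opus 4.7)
The plan is to directly unpack the two conditions---being a generalized Lagrangian in the sense of Definition \ref{def:gen_Lagr}, and being $\hbar\BV$-closed in the sense defined there---and observe that each reduces to a statement already contained in the setup. Since $\flip{*}\times V \cong V$, a morphism from $*$ to $V$ in $\QOSC$ is by definition a generalized Lagrangian in $V$.

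First I would verify that $(V, e^{\Sint/\hbar}\rho, \Sfree)$ meets Definition \ref{def:gen_Lagr}. Taking $C = V$, which is trivially coisotropic since $V^\omega = 0$, one has $C/C^\omega = V$. To see that $e^{\Sint/\hbar}\rho$ lies in $\D{V}$, I note that each monomial $S_n^g\hbar^{g-1}$ in $\Sint/\hbar$ has weight $2(g-1)+n = 2g+n-2 \geq 0$ by the constraint $2g+n\geq 2$ from Definition \ref{def:qLinfty}; these weights are bounded below, so $\Sint/\hbar\in \F{V}$, and its exponential is again well-defined in $\F{V}$ by the same weight argument (only finitely many contributions at each weight). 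The third piece, $\Sfree \in \Sym^2(V^*)$, satisfies the classical master equation $\{\Sfree,\Sfree\}=0$: this is the $\hbar^0$, quadratic-in-fields component of the expanded QME $\hbar\BV S + \tfrac12\{S,S\}=0$, and no other $S_n^g$ contributes at that bidegree since $S_1^0$ is disallowed.

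Next I would verify $\hbar\BV$-closedness. By the formula in Definition \ref{def:gen_Lagr},
\begin{equation*}
\hbar\BV\bigl(V, e^{\Sint/\hbar}\rho, \Sfree\bigr) = \Bigl(V, \bigl(\hbar\BV(e^{\Sint/\hbar}) + \{\Sfree, e^{\Sint/\hbar}\}\bigr)\rho, \Sfree\Bigr),
\end{equation*}
so closedness is the assertion $(\hbar\BV + Q)e^{\Sint/\hbar} = 0$ with $Q = \{\Sfree,-\}$, which is the first half of the decomposed form \eqref{eq:decQME} of the quantum master equation.

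The argument is essentially a definition chase, so there is no substantive obstacle; the only care needed is in the two pieces of routine bookkeeping above---confirming that $\Sfree$ solves the CME from the quadratic part of the QME, and confirming that the weight bound makes $e^{\Sint/\hbar}$ a legitimate element of $\F{V}$.
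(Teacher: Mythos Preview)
Your proof is correct and follows the same route as the paper: the $\hbar\BV$-closedness is exactly the decomposed quantum master equation \eqref{eq:decQME}. The paper's own proof is a single sentence pointing to that equation; you have usefully added the verification that the triple is a bona fide generalized Lagrangian, which the paper leaves implicit.

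One small sharpening: for the exponential $e^{\Sint/\hbar}$ to be well-defined in $\F{V}$ you need that $\Sint/\hbar$ has weight strictly $\geq 1$, not just $\geq 0$ (otherwise infinitely many powers could contribute at a fixed weight). This does hold, since the only weight-$0$ contributions would come from $(n,g)=(2,0)$ or $(0,1)$, and both are excluded from $\Sint$; but your inequality $2g+n-2\geq 0$ alone does not quite close the argument for the exponential.
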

\begin{proof}
The fact that the morphism is $\hbar\BV$-closed follows from the first form of the ``decomposed'' quantum master equation in \eqref{eq:decQME}.
\end{proof}
Thus, the same way as we could understand Lagrangian subspaces of $V$ as generalized points $L: * \to V$ in $\LinOSC$, quantum $\Linfty$ algebras give additional  generalized (dg) points of $V$ in $\QOSC$. Finally, we can now interpret the construction of the effective action\footnote{See \cite[Sec.~5]{doubek_jurco_pulmann:quantum_L_infty_and_HPL} for a review of other constructions of effective actions in \cite{costello, mnev, ChuangLazarevMinModel, BarannikovSolving, braun_maunder:minimal_models}} \cite{doubek_jurco_pulmann:quantum_L_infty_and_HPL} as a composition in $\QOSC$.
\begin{prop}\label{prop:qLinfty_composed_with_L}
    Let $S$ be a quantum $\Linfty$ algebra on $V$ and let $V \dhxrightarrow{L} R$ be a non-degenerate reduction with respect to $\Sfree$. Then the composition (see Example \ref{ex:L_after_rho})
\begin{equation} \label{eq:composition}\begin{tikzcd}[column sep = 1.2in]
	{*} & V & R
	\arrow["{(V, e^{\Sint/\hbar}\rho, \Sfree)}", from=1-1, to=1-2]
	\arrow["{(L, 1, 0)}", from=1-2, to=1-3]
\end{tikzcd}\end{equation}  
is a $\hbar\BV$-closed generalized Lagrangian $* \to R$ of the form $(R, e^{W/\hbar}\rho_R, \Sfr{R})$ such that $\Sfr{R} + W$ defines a quantum $\Linfty$ algebra on $R$. 
\end{prop}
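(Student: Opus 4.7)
The plan is to identify the composition \eqref{eq:composition} explicitly via Example \ref{ex:L_after_rho}, and then to derive the quantum master equation from the fact that both factors are $\hbar\BV$-closed. First I would check that the composition is actually defined. Unwinding Definition \ref{def:qosc} together with formula \eqref{eq:compositor} for the compositor in the special case of our composition, one identifies $\ker \Comp_{V, L}$ with $\Ker L \subset V$, so the required non-degeneracy on $\ker \Comp_{V,L}$ matches exactly the hypothesis that $L$ is a non-degenerate reduction with respect to $\Sfree$.

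Next, applying Example \ref{ex:L_after_rho} directly yields
\begin{equation*}
(L, 1, 0) \after (V, e^{\Sint/\hbar}\rho, \Sfree) \;=\; \Big(R,\ \intBV{L}{\Ker L} e^{\Sint/\hbar}\rho,\ \Sfr{R}\Big),
\end{equation*}
since $\Im L = R$ and the induced differential on $R$ is the transferred differential $\Sfr{R}$. The crucial next step is to show that this fiber integral is of exponential form: writing $\rho_R \in \HalfDens{R}$ for the linear half-density obtained by integrating the constant $1$ (Item 3 of Proposition \ref{prop:fiber_integral}), one must factorize $\int_{\Ker L} e^{\Sint/\hbar}\rho = F \rho_R$ with $F = e^{W/\hbar}$ for some $W \in \F{R}$. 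Expanding via Wick's Lemma \ref{lemma:wick} (or equivalently via Proposition \ref{prop:fiberBVisHPL}) presents $F$ as a formal sum over Feynman diagrams with vertices drawn from $\Sint$ and propagator built from $\Sfr{\Ker L}^{-1}$. Since every term of $\Sint$ satisfies $n + 2g \geq 3$ — by quantum $L_\infty$ conventions $n+2g \geq 2$, while the $(n,g) = (2,0)$ term is excluded as it defines $\Sfree$ — each vertex has effective weight $2g + n - 2 \geq 1$, whereas propagators preserve weight. A short bookkeeping in the weight grading therefore shows that $F - 1$ lies in strictly positive weight, so $W \define \hbar \log F$ is a well-defined element of $\F{R}$.

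For $\hbar\BV$-closedness, Proposition \ref{prop:qLinfty_as_a_morphism} guarantees that $(V, e^{\Sint/\hbar}\rho, \Sfree)$ is $\hbar\BV$-closed, and the same holds trivially for $(L, 1, 0)$. The Leibniz rule of Proposition \ref{prop:qosc_associative} then shows that the composite $(R, e^{W/\hbar}\rho_R, \Sfr{R})$ is $\hbar\BV$-closed. Unpacking this via Definition \ref{def:gen_Lagr} and using that $\rho_R$ is translation-invariant gives
\begin{equation*}
(\hbar\BV_R + \{\Sfr{R}, -\}) e^{W/\hbar} = 0,
\end{equation*}
which is precisely the decomposed form \eqref{eq:decQME} of the quantum master equation for the action $\Sfr{R} + W$.

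The main obstacle is the exponentiation step — the statement that the perturbative BV integral of $e^{\Sint/\hbar}\rho$ is itself of exponential form. Physically this is the classical fact that $\hbar \log$ of the partition function is a sum of connected diagrams. Rigorously it rests on the weight filtration of $\F{R}$ from Definition \ref{def:formal_functions} together with the vertex-weight bound $n + 2g \geq 3$, which ensures both that $F - 1$ has strictly positive weight and that $\hbar \log F$ converges in the ambient topology.
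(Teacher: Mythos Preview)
Your argument is essentially the paper's, and the composability check, the use of Example \ref{ex:L_after_rho}, and the $\hbar\BV$-closedness via Proposition \ref{prop:qosc_associative} are all correct. However, there is one genuine gap: you establish that $W \in \F{R}$ and that $\Sfr{R} + W$ satisfies the quantum master equation, but this is not yet enough to conclude that $\Sfr{R} + W$ is a quantum $L_\infty$ algebra in the sense of Definition \ref{def:qLinfty}. That definition requires each $W^g_n$ to have $g \geq 0$, and your weight argument does not give this: an element of $\F{R}$ of positive weight can perfectly well contain negative powers of $\hbar$, compensated by high polynomial degree.

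The paper closes exactly this gap as its final step. After observing (as you do) that the fiber integral is weight-homogeneous and hence $W/\hbar$ has weight at least $1$, it invokes the standard connected-graph argument from Wick's Lemma \ref{lemma:wick}: $e^{W/\hbar}$ is a sum over all Feynman graphs, so $W/\hbar = \log(e^{W/\hbar})$ is a sum over connected graphs $\Gamma$, each weighted by $\hbar^{\text{genus}(\Gamma) - 1}$ (together with the non-negative $\hbar$-powers carried by the vertices from $\Sint$). Since the genus is non-negative, $W$ has only non-negative powers of $\hbar$. Your sketch mentions Feynman diagrams but only extracts the weight bound from them; you need the genus bound as well.
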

\begin{example}
    In particular, the decomposition $V \cong H \oplus \Im Q \oplus C$ in \cite[Lemma~4]{doubek_jurco_pulmann:quantum_L_infty_and_HPL} automatically gives a symplectic SDR (Proposition \ref{prop:SDRnondeg}) and thus such a choice induces a non-degenerate reduction $L_{H} = \diag{H}\times C  \colon V \dhxrightarrow{} H$. Using Proposition \ref{prop:fiberBVisHPL} we get that the perturbed map $P_1$ from \cite[Sec.~4.1.1]{doubek_jurco_pulmann:quantum_L_infty_and_HPL} is equal (up to normalization) to the post-composition by the Lagrangian relation $L_H$ in $\QOSC$.
\end{example}
\begin{proof}[Proof of Proposition \ref{prop:qLinfty_composed_with_L}] From Example \ref{ex:L_after_rho}, we know the composition is given by the perturbative fiber integral of $e^{\Sint/\hbar}\rho$ along $\ker L$. The compatibility of composition with $\hbar\BV$ from Proposition \ref{prop:qosc_associative} implies that the resulting half-density is again $\hbar\BV$-closed. The half-density component of the composite \eqref{eq:composition} comes with a $\Rhbar$ factor from Proposition \ref{prop:fiber_integral} which can be absorbed into the linear half-density $\rho_R$ and the rest can be written as $e^{W/\hbar}$ for $W/\hbar \in \F{R}$ of \emph{weight at least $1$}, since the fiber integral is weight-homogeneous.

\smallskip
Thus, it remains to show that $W$ has only non-negative powers of $\hbar$ to conclude it defines a quantum $\Linfty$ algebra structure on $R$. As a consequence of Wick's Lemma \ref{lemma:wick}, we can use a standard argument for Feynman graphs: the function $e^{W/\hbar}$ is given by a sum over all graphs, and its logarithm $W/\hbar$ is given by a sum over all connected graphs $\Gamma$, each weighted by $\hbar^{\textnormal{genus}(\Gamma)-1}$.
\end{proof}

\subsection{Relations of Quantum \texorpdfstring{$\Linfty$}{L-infinity} Algebras}\label{sec:relsqlinfty}
Finally, we can now use the category $\QOSC$ to discuss possible notions of morphisms between quantum $\Linfty$ algebras. Since we can encode a quantum $\Linfty$ algebra on $V$ into a morphism $* \to V$, a natural candidate for a morphism $(U, S^U) \to (V, S^V)$ is a commutative triangle of the form:
\begin{equation}\label{eq:triangleunderpoint}
\begin{tikzcd}
                           & * \arrow[ld, "{ ( U  ,  e^{\Sint^U /\hbar}\rho_U , \Sfr{U} )}"'] \arrow[rd, "{ ( V  , e^{\Sint^V/\hbar} \rho_V , \Sfr{V} )}"] &   \\
U \arrow[rr, "{(C,f\rho,Q)}"' ] &                                                                            & V
\end{tikzcd}
\end{equation}
If ${(C,f\rho,Q)} = (L, 1, 0)$ for a Lagrangian relation $L \colon U \to V$, then $L$ has to be surjective and we get that $S^V$ is the effective action computed by the fiber integral along $L$. We will now generalize this to a Lagrangian relation $L$ where possibly $\Im L \subsetneq V$.

\begin{definition}\label{def:rel_of_qLinfty}
    Let $S^U=\Sfr{U}+\Sint^U$ and $S^V=\Sfr{V}+\Sint^V$ be quantum $\Linfty$ algebras on $U$ and $V$. We say a Lagrangian relation $L \colon U \to V$ is a \textbf{relation of quantum} $\Linfty$ \textbf{algebras} and write 
    \[S^U \stackrel{L}{\sim} S^V\]
    if the morphisms in the following diagram in $\QOSC$ are composable and the square commutes
    \begin{equation}
\begin{tikzcd}
                                   & * \arrow[rd, "{ ( V , e^{\Sint^V/\hbar}\rho_V , \Sfr{V} )}"] \arrow[ld, "{( U ,  e^{\Sint^U/\hbar} \rho_U , \Sfr{U} )}"'] &                                    \\
U \arrow[rd, "{( L_U , 1 , 0 )}"'] &                                                                                       & V \arrow[ld, "{ ( L_V , 1 , 0 )}"] \\
                                   & R                                                                                   &                                   
\end{tikzcd}
\end{equation}
    for some choice of linear half-densities $\rho_U \in \HalfDens{U}$, $\rho_V \in \HalfDens{V}$. The Lagrangian relations $L_U$, $L_V$ are the factorization cospan of $L$ from Definition \ref{def:factor_cospan}, i.e.\ they are reductions such that $L= L_V^T \after L_U$.
    \end{definition}
    
    \noindent
       Unraveling the definition, a relation of quantum $\Linfty$ algebras satisfies the following:
      \begin{enumerate}
         \item The kernels $ \ker L = \ker L_U \subset U$ and $\ker L^T = \ker L_V \subset V$ are non-degenerate isotropes.
         \item The two differentials transferred along $L_U$ and $L_V$ to $R$ coincide.
         \item For some linear half-densities $\rho_U$ and $\rho_V$, 
         \[{\displaystyle\intBV{L_U}{\Ker L} e^{\Sint^U/\hbar}  \rho_U =
         \intBV{L_V}{\ker L^T} e^{\Sint^V/\hbar}  \rho_V}.\]
     \end{enumerate}

\begin{remark}\label{rmk:cospans_in_under_cat}
    A relation of quantum $\Linfty$ algebras can be described as a cospan in $* / \QOSC$.

\[\begin{tikzcd}[ampersand replacement=\&,sep=2.25em]
	\&\& {*} \\
	\\
	V \&\&\&\& {\tilde{V}} \\
	\&\& R
	\arrow["{{ ( V , e^{\Sint^V/\hbar}\rho_V , \Sfr{V} )}}"',  from=1-3, to=3-1]
	\arrow["{{ ( \tilde{V} , e^{\Sint^{\tilde{V}}/\hbar}\rho_{\tilde{V}} , \Sfr{\tilde{V}} )}}", from=1-3, to=3-5]
	\arrow["{{ ( W , e^{\Sint^W/\hbar}\rho_W , \Sfr{W} )}}"{description, pos=0.6}, from=1-3, to=4-3]
	\arrow["{{( L , 1 , 0 )}}"', from=3-1, to=4-3]
	\arrow["{{(\smash{\tilde{L}}, 1 , 0 )}}", from=3-5, to=4-3]
\end{tikzcd}\]

\end{remark}

\subsubsection{Composing Relations of Quantum \texorpdfstring{$\Linfty$}{L-infinity} Algebras}

It is natural to ask whether relations of quantum $\Linfty$ algebras form a category; can they always be composed? We formulate a sufficient condition: they are composable when the \emph{underlying factorization cospans compose along pushouts} as in Corollary \ref{thm:cospans_comp}.

\begin{theorem}\label{thm:qLinfty_relation_composition}
            Let $S^U \stackrel{L_1}{\sim} S^V$ and $S^V \stackrel{L_2}{\sim} S^W$. If, moreover, $L_1$ and $L_2$ compose orthogonally, then
            \begin{equation}
                 S^U \stackrel{L_2 \after L_1}{\sim} S^W.
            \end{equation}
    \end{theorem}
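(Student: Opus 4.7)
The approach I would take is to post-compose the two given commutative squares of the relations $L_1$ and $L_2$ with the legs of the pushout square produced by the orthogonality hypothesis, and then paste the resulting diagrams along their shared vertex. The orthogonality assumption is essentially there to guarantee that the factorization cospan of $L_2\circ L_1$ is computed as a strict pushout in $\LinRed$, which is exactly what is needed for this pasting to deliver the commutative triangle required for the composed relation.

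More concretely, writing the factorization cospans as $U\dhxrightarrow{L_{1,U}}R_1\dhxleftarrow{L_{1,V}}V$ and $V\dhxrightarrow{L_{2,V}}R_2\dhxleftarrow{L_{2,W}}W$, the orthogonal-composition hypothesis together with Theorem \ref{thm:ortho} and Corollary \ref{thm:cospans_comp} yields a pushout square of reductions $R_1\dhxrightarrow{K}R'\dhxleftarrow{\tilde K}R_2$, with the pair $(K\circ L_{1,U},\,\tilde K\circ L_{2,W})$ being the factorization cospan of $L_2\circ L_1$. I would then post-compose the equalities of morphisms $*\to R_1$ and $*\to R_2$ coming from the two hypothesized relations with $(K,1,0)$ and $(\tilde K,1,0)$ respectively. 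Since $\LinOSC$ embeds as a wide subcategory of $\QOSC$ and $K\circ L_{1,V}=\tilde K\circ L_{2,V}$ by commutativity of the pushout square, the two intermediate morphisms $(K\circ L_{1,V},1,0)\circ(V,e^{\Sint^V/\hbar}\rho_V,\Sfr{V})$ and $(\tilde K\circ L_{2,V},1,0)\circ(V,e^{\Sint^V/\hbar}\rho_V,\Sfr{V})$ coincide. Chaining the equalities together and invoking associativity from Proposition \ref{prop:qosc_associative} produces
\begin{equation*}
(K\circ L_{1,U},1,0)\circ(U,e^{\Sint^U/\hbar}\rho_U,\Sfr{U})\;=\;(\tilde K\circ L_{2,W},1,0)\circ(W,e^{\Sint^W/\hbar}\rho_W,\Sfr{W}),
\end{equation*}
which is exactly the commutativity condition in Definition \ref{def:rel_of_qLinfty} for the Lagrangian relation $L_2\circ L_1$.

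The main obstacle I anticipate is verifying that all of the intermediate compositions in $\QOSC$ are in fact defined, i.e.\ that the newly appearing reductions $K\colon R_1\dhxrightarrow{}R'$ and $\tilde K\colon R_2\dhxrightarrow{}R'$ are non-degenerate with respect to the transferred free actions $\Sfr{R_1}$ and $\Sfr{R_2}$. Equivalently, one must check that $K\circ L_{1,U}$ is a non-degenerate reduction with respect to $\Sfree^U$ (and symmetrically on the $W$ side), so that Lemma \ref{lemma:fubinitransfer} may be used to split the fiber BV integral over $U$ into a two-step integral through $R_1$. I would derive this from the explicit description \eqref{eq:redcoredexplicit} of $\tilde L\circ L^T$ (which identifies $\Ker K$ with the image in $R_1$ of $\Ker L_{1,V}+\Ker L_{2,V}\subset V$), combined with the orthogonality hypothesis (which makes that sum isotropic) and the fact that the transferred differential along $L_{1,U}$ coincides with the one along $L_{1,V}$ because of the first commuting triangle; these together let one pull $\Ker K$ back through $L_{1,U}$ to an isotropic subspace of $U$ along which $\Sfree^U$ is still non-degenerate.

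Once composability is settled, the remainder is formal bookkeeping: the chain of equalities follows from associativity of $\QOSC$ (Proposition \ref{prop:qosc_associative}), the embedding $\OSC\hookrightarrow\QOSC$, and the commutativity of the pushout square of reductions produced by Theorem \ref{thm:ortho}. The identification of $(K\circ L_{1,U},\tilde K\circ L_{2,W})$ with the factorization cospan of $L_2\circ L_1$ from Corollary \ref{thm:cospans_comp} then packages the resulting commutative triangle as a witness of $S^U\stackrel{L_2\circ L_1}{\sim}S^W$.
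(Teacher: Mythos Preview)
Your approach is essentially the same as the paper's: build the pushout square from orthogonality via Theorem \ref{thm:ortho} and Corollary \ref{thm:cospans_comp}, identify the factorization cospan of $L_2\circ L_1$ with the outer legs, and then chain the two commutative squares through the common vertex using the commutativity $K\circ L_{1,V}=\tilde K\circ L_{2,V}$; the paper phrases this last step as verifying the three items below Definition \ref{def:rel_of_qLinfty} directly (using Lemma \ref{lemma:fubinitransfer} for the fiber integrals), which amounts to the same pasting you describe categorically.

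On the non-degeneracy obstacle you flag: your sketch is pointing in the right direction, but the identification of $\Ker K$ and the ``pullback through $L_{1,U}$'' is not quite enough on its own. The paper isolates this as a separate lemma (Lemma \ref{lemma:comp_nondeg}) and proves it by taking $u\in J\cap(Q_UJ)^{\omega_U}$ with $J=\Ker(L_2\circ L_1)$, decomposing along the canonical decomposition for $I_U$, and then \emph{transporting through $L_1$ to $V$} (not through $L_{1,U}$ to $R_1$) to land in $\tilde I\cap(Q_V\tilde I)^{\omega_V}$, where non-degeneracy of $\tilde I$ forces the image to vanish; this then feeds back to $u\in I_U$ and hence $u=0$. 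The orthogonality hypothesis enters exactly to ensure $\Im L_1\cap\tilde I=\tilde I$ at the key step. Your ingredients (orthogonality, equality of transferred differentials, non-degeneracy of $\tilde I$) are the correct ones, but the actual argument needs this transport through $L_1$ rather than a direct pullback to $U$.
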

    \begin{proof} We will prove that $L_2 \after L_1$ satisfies Definition \ref{def:rel_of_qLinfty}. Consider the diagram from Theorem \ref{thm:cospans_comp} given by factorization cospans of $L_1$, $L_2$ and $\tilde{L} \after L^T$. It commutes by the orthogonality assumption.
\begin{equation}\label{diag:comp_diagram}
\begin{tikzcd}
U \arrow[rd, "L_U"', two heads] \arrow[rr, "L_1", dashed]  &                               & V \arrow[rd, "\tilde{L}", two heads] \arrow[rr, "L_2", dashed] \arrow[ld, "L"', two heads] &                                                  & W \arrow[ld, "L_W", two heads] \\
                                                                                                      & R \arrow[rd, "K"', two heads] &                                                                                    & \widetilde{R} \arrow[ld, "\tilde{K}", two heads] &                                                                                            \\
                                                                                                      &                               & T \arrow[uu, phantom, "\lrcorner"{anchor=center, pos=0.125, rotate=135}]                                                                                 &                                                  &                                                                                           
\end{tikzcd}
\end{equation}
Note that the factorization cospan of $L_2 \after L_1$ is $K \after L_U$, $\tilde{K} \after L_W$.
    \begin{enumerate}
\item First, we prove non-degeneracy of $K \after L_U$ (the case of $\tilde{K} \after L_W$ is completely analogous). Denote $\ker L_U = I_U$, $\ker L = I$, $\ker \tilde{L} = \tilde{I}$, $\ker (K \after L_U) = J$. Let $u \in J \cap (Q_U J)^{\omega_U}$, we need to prove that then $u=0$. It is enough to prove that $u \in I_U$, since the case of $u \in I_U \cap (Q_UJ)^\omega$ is trivial.
The idea is to \say{transport} the property $u \in (Q_U J)^\omega$ to $V$ using $u \in J = \ker (L_2 \after L_1)$, where it becomes $v \in \smash{\tilde{I}} \cap (Q_V \smash{\tilde{I}})^\omega$ for some $v \in V$ such that $u \sim_{L_1} v$. Non-degeneracy of $\smash{\tilde{I}}$ implies that $v=0$, thus $u \sim_{L_1} 0$, i.e.\ $u \in I_U$. For details, see the proof of Lemma \ref{lemma:comp_nondeg} of Appendix \ref{ssec:comp_nondeg}.

\item Thanks to non-degeneracy condition verified above, the transfer of $Q_{U}$ and $Q_{W}$ to $T$ are well-defined. By the assumptions $\Sfr{U} \stackrel{\smash{L_1}}{\sim} \Sfr{V}$ and $\Sfr{V} \stackrel{\smash{L_2}}{\sim} \Sfr{W}$, the differentials can equivalently be transferred from $V$ along $K \after L$ and $\tilde{K} \after \tilde{L}$. But by the orthogonality condition and Corollary~\ref{thm:cospans_comp}, $\tilde{K} \after \tilde{L} = K \after L$ and the transferred differentials coincide.

\item By the above arguments, the perturbative BV integrals along $K \after L_U$ and $\tilde{K} \after L_W$ are well-defined. Using Lemma \ref{lemma:fubinitransfer}, $\smash{S^U \stackrel{L_1}{\sim} S^V}$, and $\smash{S^V \stackrel{L_2}{\sim} S^W}$, we can repeat the argument from the previous step and conclude the integrals also coincide.
     \qedhere
    \end{enumerate}   
    \end{proof}

\begin{remark}[On the orthogonality assumption.]
The assumption that $L_1$ and $L_2$ compose orthogonally is a convenient sufficient condition for composition of relations: The two effective actions on $T$ in \eqref{diag:comp_diagram} are given by integrating $e^{S^V/\hbar}$ along the left and right leg of the square (through $R$ and $\widetilde{R}$); and since the square commutes by the orthogonality assumption, the effective actions are equal for any $S^V$. A mild relaxation of the orthogonality condition is requiring the square in \eqref{diag:comp_diagram} to commute  up to homotopy; this would give homotopic actions on $T$ (i.e. equal in a homotopy category or equal up to a nonlinear change of coordinates \cite[Thm.~5]{doubek_jurco_pulmann:quantum_L_infty_and_HPL}).
\smallskip

Another interesting case would be an ``accidental'' equality of the effective actions on $T$, even when the two legs of the square (and thus the BV fiber integrals) are not equal. This would only work for some $S^V$, and could be physically more interesting; we do not know any examples of this kind.
\end{remark}

        \begin{remark}\label{rmk:constructiongspans}
        Instead of our definition of a relation of quantum $\Linfty$ algebras (see Remark \ref{rmk:cospans_in_under_cat}) we may consider a span in $*/\QOSC$ of the form
\[\begin{tikzcd}[row sep=2.2em]
	&& {*} \\
	\\
	&& W \\
	V &&&& {\tilde{V}}
	\arrow["{( L , 1 , 0 )}", near end, from=3-3, to=4-1]
	\arrow["{(\smash{\tilde{L}}, 1 , 0 )}"', near end, from=3-3, to=4-5]
	\arrow["{ ( V , e^{\Sint^V/\hbar}\rho_V , \Sfr{V} )}"', bend right, from=1-3, to=4-1]
	\arrow["{ ( W , e^{\Sint^W/\hbar}\rho_W , \Sfr{W} )}"{description}, pos=0.7, from=1-3, to=3-3]
	\arrow["{ ( \tilde{V} , e^{\Sint^{\tilde{V}}/\hbar}\rho_{\tilde{V}} , \Sfr{\tilde{V}} )}", bend left, from=1-3, to=4-5]
\end{tikzcd}\]      
 to be a morphism between $S^V = \Sfr{V}+\Sint^V$ and $S^{\tilde{V}}= \Sfr{\tilde{V}}+\Sint^{\tilde{V}}$. If we require the span of reductions $(L,\tilde{L})$ to be orthogonal, by Theorem \ref{thm:qLinfty_relation_composition} (taking $L_1 = L^T $, $L_2 = \tilde{L}$) we also have
        \[S \stackrel{\tilde{L} \after L^T}{\sim} \tilde{S}.\]
        So orthogonal spans in $*/\QOSC$ are special cases of relations of quantum $\Linfty$ algebras. The opposite problem—associating an orthogonal span of relations of quantum $\Linfty$ algebras to a relation (i.e.\ cospan) of quantum $\Linfty$ algebras—is much more difficult and it appears it poses the need for (formal) non-linear generalization of the linear quantum $(-1)$-symplectic category. We will explore this problem in future work. 
        \end{remark}

\appendix

\section{Appendix}

\subsection{Inductive Construction of a Complement to a Coisotrope}\label{ssec:inductive_construction_of_B}
Let us prove a lemma used to construct decompositions of $V$ in Section \ref{ssec:coisotropesdec}.
\begin{lemma}\label{lemma:inductive_construction_of_B}
    Let $C \subseteq V $ be a coisotropic subspace. Denote $I \define C^{\omega} \subseteq C$ its symplectic complement, which is isotropic. Then, for every $n \in \{ 0,1,\ldots \}$ there exists an isotropic complement 
    \begin{equation}B^{(n)} = B_n \oplus B_{1-n} \subset V_n \oplus V_{1-n}\end{equation}
    of $C_n \oplus C_{1-n}$ satisfying
    \begin{align*}
        \dim B_n & = \dim I_{1-n} , \\
        \dim B_{1-n} & = \dim I_{n}.
    \end{align*}
    In other words, the $\Z$-graded vector space\footnote{For infinite-dimensional $V/C$, this requires the axiom of countable choice.}
    $\smash{B \define \bigoplus\limits_{k \geq 0} B^{(n)} \subset V}$ is an isotropic complement of $C$
    satisfying
    \begin{equation}
        s^{-1}\dimsum{I}{s}  =  \dimsum{B}{s^{-1}} \quad \text{or equivalently} \quad       s^{-1}  \dimsum{B}{s}  = \dimsum{I}{s^{-1}}. 
    \end{equation}
        %\end{align}.
\end{lemma}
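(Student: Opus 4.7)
The plan is to construct $B^{(n)}$ for each unordered pair of degrees $\{n,1-n\}$ separately, and then assemble $B = \bigoplus B^{(n)}$ over a set of representatives; since $\omega$ pairs only degrees summing to $1$, both the isotropy and the complement conditions decouple across distinct pairs, so these pieces can be constructed independently.

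Fix such a pair and let $\langle\cdot,\cdot\rangle$ denote the non-degenerate bilinear pairing $V_n \times V_{1-n} \to \mathbb R$ induced by $\omega$, which identifies $V_n \cong V_{1-n}^*$. Since $\omega$ vanishes on $V_n \times V_k$ for $k \ne 1-n$, writing $I = C^\omega = \bigcap_k C_k^\omega$ gives $I_n = \{v \in V_n : \langle v, C_{1-n}\rangle = 0\}$ and symmetrically for $I_{1-n}$; using $C = I^\omega$ (Lemma \ref{lemma:ortho_comp_squared}) yields $C_n = \{v \in V_n : \langle v, I_{1-n}\rangle = 0\}$. Under the identification $V_n \cong V_{1-n}^*$ this reads $C_n \cong \Ann(I_{1-n})$ and $I_n \cong \Ann(C_{1-n})$, so $\dim C_n = \dim V_{1-n} - \dim I_{1-n}$. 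Combined with $\dim V_n = \dim V_{1-n}$ from Remark \ref{rmk:odd_symp_dim}, I obtain $\dim B_n = \dim V_n - \dim C_n = \dim I_{1-n}$ for \emph{any} complement $B_n$ of $C_n$; the claimed dimension formula thus comes for free.

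To arrange the isotropy condition $\langle B_n, B_{1-n}\rangle = 0$, I would pick arbitrary graded complements $B_n^0$, $B_{1-n}^0$, set $B_{1-n} \define B_{1-n}^0$, and replace $B_n^0$ by the graph $B_n \define \{b - \phi(b) \mid b \in B_n^0\}$ of a suitable linear $\phi \colon B_n^0 \to C_n$. Such $B_n$ is automatically a complement of $C_n$, and the isotropy condition rewrites as
\[ \langle \phi(b), b' \rangle = \langle b, b' \rangle \qquad \text{for all } b \in B_n^0,\ b' \in B_{1-n}^0. \]
This is solvable for $\phi$ iff the restriction $C_n \to (B_{1-n}^0)^*$, $c \mapsto \langle c, \cdot \rangle|_{B_{1-n}^0}$, is surjective. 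Writing $C_n \cong \Ann(I_{1-n}) \subset V_{1-n}^*$, its kernel is $\Ann(I_{1-n} + B_{1-n}^0)$; since $I_{1-n} \subset C_{1-n}$ is disjoint from $B_{1-n}^0$, this kernel has dimension $\dim V_{1-n} - \dim I_{1-n} - \dim B_{1-n}^0 = \dim C_n - \dim B_{1-n}^0$, which is precisely what surjectivity requires. This dimensional check is the only nontrivial point in the argument.

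Assembling, $B = \bigoplus B^{(n)}$ is isotropic, as $\omega$ vanishes between $V_n \oplus V_{1-n}$ and $V_m \oplus V_{1-m}$ for disjoint pairs, and it is a complement of $C$ by construction. The identity in generating functions is then a direct change of variable,
\[ \dimsum{B}{s} = \sum_n (\dim I_{1-n})\, s^n = s \sum_m (\dim I_m)\, s^{-m} = s\,\dimsum{I}{s^{-1}}, \]
yielding both equivalent forms $s^{-1}\dimsum{B}{s} = \dimsum{I}{s^{-1}}$ and $s^{-1}\dimsum{I}{s} = \dimsum{B}{s^{-1}}$ claimed in the lemma.
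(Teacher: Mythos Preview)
Your proof is correct but takes a different route from the paper's. The paper constructs $B^{(n)}$ by induction on $\dim B_n + \dim B_{1-n}$: starting from $B=\{0\}$, at each step it shows that $(B^\omega \setminus (B\oplus C))_k$ is nonempty by a dimension count (essentially the identity $\dimsum{B^\omega}{s} - \dimsum{B^\omega \cap (B\oplus C)}{s} = s\,\dimsum{I}{s^{-1}} - \dimsum{B}{s}$) and adjoins one vector at a time until the target dimensions are reached. You instead fix arbitrary complements $B_n^0, B_{1-n}^0$ and then shear $B_n^0$ along $C_n$ via a linear $\phi\colon B_n^0 \to C_n$ to kill the pairing with $B_{1-n}^0$; the existence of $\phi$ reduces to surjectivity of $C_n \to (B_{1-n}^0)^*$, which you check by the same dimension arithmetic. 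Your argument is more global and makes the identifications $C_n \cong \Ann(I_{1-n})$, $I_n \cong \Ann(C_{1-n})$ explicit, so the dimension formula $\dim B_n = \dim I_{1-n}$ is seen immediately rather than as the stopping condition of an induction; the paper's approach, by contrast, avoids choosing sections and works uniformly without singling out one side of the pair.
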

\begin{proof}
       To simplify the notation, let us fix $n \geq 0$ and drop the superscript $(n)$ in $B^{(n)}$. We will work by induction on $\dim B_n + \dim B_{1-n}$. The induction hypothesis will be that there is a graded subspace $B \subseteq V$ such that 
        \begin{itemize}
            \item $B$ is isotropic,
            \item $B \cap C = 0$,
            \item $ \dim B_k = 0$ for all $ k \notin   \{ n , 1-n \}$,
            \item $ \dim B_k \leq \dim I_{1-k}$ for all $ k \in \{ n , 1-n \} $.
        \end{itemize} 
        The induction starts with $B = \{0\}$. In each step, if $ \dim B_k < \dim I_{1-k}$ for some $k \in \{ n , 1-n \}$, we will choose an element 
        $b\in \left( B^\omega \setminus \left( B \oplus C \right) \right)_k$ and change $B$ to $B' = \Span{B , b}$. This new $B'$ again satisfies the four properties above; once we reach $ \dim B_k = \dim I_{1-k}$ for all $ k \in \{ n , 1-n \} $, the induction stops. 

        To show that such $b$ exists, we will show that
        \begin{equation}\label{eq:dimineq}
            \dim \left( B^\omega \right)_k - \dim \left( B^\omega \cap \left( B \oplus C \right) \right)_k =  \dim I_{1-k} - \dim B_k,
        \end{equation} 
        and, provided $ \dim B_k < \dim I_{1-k}$, we can find a suitable $b$.

        To prove \eqref{eq:dimineq}, we will use the fact that $B^\omega \cap \left( B \oplus C \right) = \left( B^\omega \cap C \right) \oplus B$ and by Lemma \ref{lemma:ortho_comp_squared}, the symplectic complement is an involution, so $B^\omega \cap C = \left( B + I \right)^\omega =  \left( B \oplus I \right)^\omega $ and
        \begin{equation}
            \dimsum{B^\omega}{s} - \dimsum{ B^\omega \cap \left( B \oplus C \right) }{s} = \dimsum{B^\omega}{s} - \dimsum{\left( B \oplus I \right)^\omega}{s} - \dimsum{B}{s}.
        \end{equation}
        Now we use Lemma \ref{lemma:dimsum_symp}, which says that $\dimsum{W^\omega}{s} = s \dimsum{V}{s^{-1}} - s \dimsum{W}{s^{-1}}$. Four terms cancel out and we are left with
        \begin{equation}
            \dimsum{B^\omega}{s} - \dimsum{B^\omega \cap \left( B \oplus C \right) }{s} = s \dimsum{I}{s^{-1}} - \dimsum{B}{s}.
        \end{equation}
        The coefficient at $s^k$ of this equation is exactly \eqref{eq:dimineq}.
\end{proof}

\subsection{Transporting Differentials Along Reductions}
 \label{sec:transfer_of_Q}

\begin{prop}
	Let $C\subset V$ be a coisotropic subspace and let $Q$ be a differential on $V$ such that $L: V \to R$ is a non-degenerate reduction (Definition \ref{def:nondeg}). Define a relation $Q^R \colon R \to \shift{1}R$ by as the composition of the following relations
\begin{equation}
    \begin{tikzcd}
R \arrow[r, "L^T", tail] & V \arrow[r, "\graph{Q}"] & \shift{1}V \arrow[r, "L", two heads] & \shift{1}R.
\end{tikzcd}
\end{equation}
	Then $Q^R$ is the graph of a degree 1 differential on $R$ and agrees with the map
 \begin{equation}
     \begin{tikzcd}[column sep=0.44in]
R \arrow[r, "\mathrm{incl}_R", hook] &  V \arrow[r, "Q"] & V \arrow[r, "\mathrm{proj}_R"] & R
\end{tikzcd}
 \end{equation}
	given by the canonical decomposition \eqref{eq:canonicaldecomposition}.
\end{prop}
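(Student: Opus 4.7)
My plan is to unwind the composition $Q^R = L \circ \graph{Q} \circ L^T$ using the canonical decomposition from Proposition~\ref{prop:canonicaldecomposition}, check directly that the result is a graph of a map, and then identify it with $\mathrm{proj}_R \circ Q \circ \mathrm{incl}_R$.

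First I would fix notation: set $I \define \Ker L$, $C \define \Im L^T = I^\omega$, and use non-degeneracy to write $V = R_\can \oplus I \oplus QI$ as in \eqref{eq:canonicaldecomposition}, so that $C = R_\can \oplus I$ and $\pi_R = L|_C \colon C \to R$ restricts to an isomorphism $\pi_R|_{R_\can}\colon R_\can \xrightarrow{\cong} R$. By Proposition~\ref{prop:reductions_are_coiso}, the relation $L^T$ is $\{(\pi_R(c), c)\mid c\in C\}$. Thus set-theoretically
\begin{equation}
\graph{Q} \circ L^T = \{(\pi_R(c), Qc) \mid c \in C\}, \qquad L \circ \graph{Q} \circ L^T = \{(\pi_R(c), \pi_R(Qc)) \mid c \in C,\ Qc \in C\}.
\end{equation}

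Next, I would use part~\ref{item:decQ} of Proposition~\ref{prop:canonicaldecomposition} to see when $Qc \in C$. Writing $c = r + i$ with $r\in R_\can$, $i\in I$, we have $Qc = Q|_{R_\can}(r) + Q|_I(i)$ with $Q|_{R_\can}(r)\in R_\can \subset C$ and $Q|_I(i) \in QI$. Since $C \cap QI = 0$ (by the decomposition) and $Q|_I\colon I \to QI$ is an isomorphism (by non-degeneracy, cf.\ Proposition~\ref{prop:canonicaldecomposition}), the condition $Qc \in C$ forces $i=0$. Hence
\begin{equation}
Q^R = \{(\pi_R(r),\, \pi_R(Q|_{R_\can}(r))) \mid r\in R_\can\},
\end{equation}
which is precisely the graph of the degree~$1$ linear map $Q^R \define \pi_R \circ Q|_{R_\can} \circ (\pi_R|_{R_\can})^{-1}$. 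Under the identification $\mathrm{incl}_R\colon R \hookrightarrow V$ and $\mathrm{proj}_R \colon V \twoheadrightarrow R$ from the decomposition \eqref{eq:canonicaldecomposition}, this is by definition $\mathrm{proj}_R \circ Q \circ \mathrm{incl}_R$.

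Finally, $(Q^R)^2 = 0$ follows immediately from $Q^2 = 0$: since $Q|_{R_\can}$ maps $R_\can$ into itself, iterating gives $Q|_{R_\can} \circ Q|_{R_\can} = Q^2|_{R_\can} = 0$, and conjugating by the isomorphism $\pi_R|_{R_\can}$ yields $(Q^R)^2 = 0$. I do not expect a main obstacle here; the entire argument rests on the decomposition, and the only subtlety is tracking the constraint $Qc \in C$ needed to compose the three relations, which is exactly what selects the $R_\can$-component and makes $Q^R$ a graph of a genuine (not multi-valued or partial) map.
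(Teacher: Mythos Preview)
Your proof is correct, but it proceeds differently from the paper's. The paper first shows that $Q^R$ is a graph \emph{without} invoking the decomposition: it proves coinjectivity by observing that $[Qi]\in\Ker(Q^R)^T$ forces $\omega(Qi,-)|_I=0$, hence $i=0$ by non-degeneracy of $\Sfr{I}$; and cosurjectivity by solving $\Sfr{I}(i_0,-)=-\omega(Qc_0,-)|_I$ to adjust an arbitrary representative $c_0$ of $r\in R$ so that $Q(c_0+i_0)\in C$. It then checks $(Q^R)^2=0$ at the level of relations, and only at the end uses the decomposition to identify $Q^R$ with $\mathrm{proj}_R\circ Q\circ\mathrm{incl}_R$.

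You instead bring in the decomposition $V=R_\can\oplus I\oplus QI$ from the start, so the constraint $Qc\in C$ immediately singles out $c\in R_\can$, and both the graph property and the identification with $\mathrm{proj}_R\circ Q\circ\mathrm{incl}_R$ fall out at once; $(Q^R)^2=0$ is then just $Q|_{R_\can}^2=0$. Your route is shorter and more concrete; the paper's route has the conceptual advantage of separating ``$Q^R$ is a well-defined map'' (which only uses non-degeneracy of $\Sfr{I}$) from the explicit description via the decomposition.
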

\begin{proof}
	The relation $Q^R$ consists of pairs $\{([c], [Qc]) \in \flip{R} \times \shift{1}R \mid c\in C \text{ such that } Qc \in C\}$. We will first show it is coinjective and cosurjective.

\begin{itemize}
    \item Elements of $\Ker (Q^R)^T$ are of the form $[Qi]$ for $i\in I = \ker L$ such that $Qi \in C = I^\omega$. This is equivalent to $\omega(Qi,-)_I = 0$, i.e.\ $\Sfr{I}(i, -) = 0$, which implies $i = 0$ and thus $\Ker (Q^R)^T = 0$.    
    \item To show $\Im (Q^R)^T = R$, we want to show that each $r\in R$ has a representative $c\in C$ such $Qc \in C$. Let us choose any representative $c_0\in C$ of $r$. We are looking for $i_0 \in I$ such that $Q(c_0 + i_0)\in C = I^\omega$, i.e.\ for $i_0$ solving
	\begin{equation} \omega(-Qc_0, -)_I = \omega(Q i_0, -)_I = \Sfr{I}(i_0, -) \end{equation}
	which is possible as $\Sfr{I}$ is non-degenerate.
\end{itemize}

	To check that the induced map squares to $0$, the composition of relations $Q^R\circ Q^R$ is given by
	\begin{equation} [c] \sim c \sim Qc \sim [Qc] \sim Qc + i \sim Q^2 c + Qi \sim [Qi] \end{equation}
	but this is independent of the choice of $i$ (such that $Qi\in C$), i.e.\ we can take $i=0$.
 
 \smallskip	
    Finally, using the decomposition $V = I \oplus R \oplus QI$, we get that the relation $Q^R$ contains a pair $(r, Qr)$ coming from $r \sim (0, r, 0)\sim (0, Qr, 0) \sim Qr$.
\end{proof}

\subsection{Composition and Non-Degeneracy}
 \label{ssec:comp_nondeg}
Consider the diagram from Theorem \ref{thm:cospans_comp} given by factorization cospans of Lagrangian relations $L_1$, $L_2$ and $\tilde{L} \after L^T$. It commutes if and only if we assume $L_1$ and $L_2$ compose orthogonally, i.e.\ $\ker L \perp \ker \tilde{L}$.
\begin{equation}\label{diag:comp_diagramAppendix}
\begin{tikzcd}
U \arrow[rd, "L_U"', two heads] \arrow[rr, "L_1", dashed]  &                               & V \arrow[rd, "\tilde{L}", two heads] \arrow[rr, "L_2", dashed] \arrow[ld, "L"', two heads] &                                                  & W \arrow[ld, "L_W", two heads] \\
                                                                                                      & R \arrow[rd, "K"', two heads] &                                                                                    & \widetilde{R} \arrow[ld, "\tilde{K}", two heads] &                                                                                            \\
                                                                                                      &                               & T \arrow[uu, phantom, "\lrcorner"{anchor=center, pos=0.125, rotate=135}]                                                                                 &                                                  &                                                                                           
\end{tikzcd}
\end{equation}

\begin{lemma}\label{lemma:comp_nondeg}
    Let $L_1 \colon U \to V, L_2 \colon V \to W$ be Lagrangian relations between $(-1)$-shifted dg symplectic vector spaces $U,V,W$ (Definition \ref{def:dg_symp}) such that
    \begin{enumerate}
        \item the kernels  $I_U =\ker L_1, I= \ker L_1^T, \tilde{I} = \ker L_2$ are non-degenerate isotropes (Definition \ref{def:nondeg}),
        \item $\Sfr{U} \stackrel{L_1}{\sim} \Sfr{V}$ and $\Sfr{V}\stackrel{L_2}{\sim} \Sfr{W}$ (Definition \ref{def:rel_of_qLinfty}),
        \item $L_1$ and $L_2$ compose orthogonally (Definition \ref{def:ortho}).
    \end{enumerate}
    Then also the kernel of $L_2 \after L_1$ is a non-degenerate isotrope $J = \ker (L_2 \after L_1)$.
\end{lemma}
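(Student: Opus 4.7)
Plan: Take $u \in J \cap (Q_U J)^{\omega_U}$; the goal is to show $u=0$. Isotropy of $J$ is automatic, since $L_2 \circ L_1$ is Lagrangian: for $u_1,u_2 \in J$ we have $(u_1,0),(u_2,0) \in L_2 \circ L_1$, and the Lagrangian condition gives $\omega_U(u_1,u_2)=0$. The case $u \in I_U$ is handled directly: because $I_U \subset J$ we have $(Q_U J)^{\omega_U} \subset (Q_U I_U)^{\omega_U}$, so $u \in I_U \cap (Q_U I_U)^{\omega_U} = \{0\}$ by the non-degeneracy of $I_U$ from hypothesis~1. It therefore suffices to prove that every $u \in J \cap (Q_U J)^{\omega_U}$ lies in $I_U$.

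I will follow the transport strategy outlined after the statement. Since $u \in J = \ker(L_2 \circ L_1)$, there exists $v \in V$ with $(u,v) \in L_1$ and $v \in \tilde I$; such a lift exists because the orthogonality $I \perp \tilde I$ is equivalent to the inclusion $\tilde I \subset I^{\omega_V} = C_V = \Im L_1$. I will show that this $v$ actually lies in $(Q_V \tilde I)^{\omega_V}$; combined with $v \in \tilde I$, the non-degeneracy of $\tilde I$ then forces $v = 0$, whence $(u,0) \in L_1$ gives $u \in \ker L_1 = I_U$, reducing to the case already handled.

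For the key step, fix $\tilde j \in \tilde I$; by orthogonality $\tilde j \in C_V$, so I can pick $u' \in U$ with $(u', \tilde j) \in L_1$, and automatically $u' \in J$ because $\tilde j \in \ker L_2$. Using the non-degeneracy of $I_U$ and $I$, choose modifications $i_U \in I_U$ and $i \in I$ so that $Q_U(u' + i_U) \in C_U$ and $Q_V(\tilde j + i) \in C_V$; the agreement of the transferred differentials on $R_1 = C_U/I_U = C_V/I$ from hypothesis~2 is precisely the statement that $\pi_U(Q_U(u'+i_U)) = \pi_V(Q_V(\tilde j+i))$, i.e.\ $(Q_U(u'+i_U), Q_V(\tilde j+i)) \in L_1$. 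Applying the Lagrangian condition for $L_1$ to the pairs $(u,v)$ and $(Q_U(u'+i_U), Q_V(\tilde j+i))$ yields
\[
0 \;=\; \omega_U(u, Q_U(u'+i_U)) \;=\; \omega_V(v, Q_V \tilde j) + \omega_V(v, Q_V i),
\]
where the left-hand side vanishes because $u' + i_U \in J$ and $u \in (Q_U J)^{\omega_U}$.

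The main obstacle is to dispose of the error term $\omega_V(v, Q_V i)$ in the above identity. My plan is to use the symmetric compatibility $\Sfr{V} \stackrel{L_2}{\sim} \Sfr{W}$ from hypothesis~2, which together with the non-degeneracy of $\tilde I$ produces a unique $j' \in \tilde I$ such that $Q_V(v + j') \in \tilde I^{\omega_V}$, and more precisely, since $[v]_{L_2}=0$ in $W$, one obtains $Q_V(v+j') \in \tilde I$; thus $Q_V v \in Q_V \tilde I + \tilde I$. Combining this decomposition with the $Q_V$-compatibility $\omega_V(v, Q_V i) = \pm \omega_V(Q_V v, i)$ and the orthogonality $\omega_V(\tilde I, I) = 0$ reduces the error term to $\pm\omega_V(Q_V j', i)$, which one then re-expresses via $Q_V$-compatibility once more; careful bookkeeping of these modifications, together with the non-degeneracy assumptions, is where I expect the real work of the proof to lie.
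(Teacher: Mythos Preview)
Your overall strategy matches the paper's: reduce to showing $u\in I_U$, transport $u$ along $L_1$ to some $v\in\tilde I$, and then prove $v\in\tilde I\cap(Q_V\tilde I)^{\omega_V}=\{0\}$. The gap is exactly where you flag it: your handling of the error term $\omega_V(v,Q_Vi)$ does not work. The decomposition you propose, ``$Q_Vv\in Q_V\tilde I+\tilde I$'', is trivially true because $v\in\tilde I$ already, so $Q_Vv\in Q_V\tilde I$; it therefore buys nothing, and after applying orthogonality $\omega_V(\tilde I,I)=0$ you are left with $\omega_V(Q_Va,i)$ for some $a\in\tilde I$, which is the same kind of term you started with. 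Invoking hypothesis~2 for $L_2$ does not help here, since the error term involves $i\in I=\ker L_1^T$, not $\tilde I$.

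The paper resolves this by a different mechanism: it uses the canonical decompositions $U=I_U\oplus R_{\mathrm{can}}^U\oplus Q_UI_U$ and $V=I\oplus R_{\mathrm{can}}^I\oplus Q_VI$ from Proposition~\ref{prop:canonicaldecomposition}. First one projects $u=u_0+u_R$ with $u_R\in R_{\mathrm{can}}^U\cap J$; the hypothesis $u\in(Q_UJ)^{\omega_U}$ then reduces cleanly to $\omega_R(\pi_U(u_R),Q_R^U\pi_U(u'_R))=0$ for all $u'_R\in R_{\mathrm{can}}^U\cap J$, with the $I_U$-pieces dropping out by the orthogonality built into $R_{\mathrm{can}}^U$. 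Passing to $V$ via the agreement of transferred differentials on $R$ (this is where hypothesis~2 for $L_1$ enters), one gets $\omega_R(\pi(v),Q_R^V\pi(v'))=0$ for all $v'\in\tilde I$. Finally, choosing the representative $v$ to lie in $R_{\mathrm{can}}^I$ kills your error term outright, since $R_{\mathrm{can}}^I=(I\oplus Q_VI)^{\omega_V}$ is by construction $\omega_V$-orthogonal to $Q_VI$. The missing idea in your attempt is precisely this systematic passage through $R$ and the $R_{\mathrm{can}}$-components, which is what makes the cross terms vanish rather than leaving them to be chased.
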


\begin{proof}
Let $u \in J \cap (Q_U J)^{\omega_U}$, we will prove that then $u=0$. It is enough to prove that $u \in I_U$, since by non-degeneracy of $I_U$ and $I_U \subseteq J $,
\begin{equation*}
    I_U \cap (Q_U J)^{\omega_U} \subseteq I_U \cap (Q_U I_U)^{\omega_U} = \{ 0 \}.
\end{equation*}
\textbf{Proof that $u \in I_U$.} From $I_U \subseteq J \subseteq J^{\omega_U} \subseteq I_U^{\omega_U} = I_U \oplus R_\can^U$, we have the decomposition $u = u_0 + u_R$ for $u_0 \in I_U$, $u_R \in R_\can^U \cap J$. By $u \in (Q_U J)^{\omega_U}$, for all $u' \in J$,
\begin{equation}
    \omega_U (u,Q_U( u'))=0.
\end{equation}
Decomposing $u' = u'_0 + u'_R \in I_U \oplus (R_\can^U \cap J)$, we have
\begin{equation*}
     \omega_U (u,Q_U (u'_0)) + \omega_U (u_0,Q_U (u'_R)) + 
     \omega_U (u_R,Q_U (u'_R)) = 0,
\end{equation*}
where the first term vanishes since $u \in  (Q_U I_U)^{\omega_U}$ and the second vanishes since $Q_U (R_\can^U) \subseteq R_\can^U$ and $R_\can^U \perp I_U$. We are thus left with the last term and using the fact that the projection $\pi_U : U \to R$ restricts to a symplectic isomorphism $\pi_U : R_\can^U \cong R$,
\begin{equation}
    \omega_R ( \pi_U (u_R),\pi_U \after Q_U (u'_R)) = 0.
\end{equation}

We will now prove that $(u_R,0) \in L_1$, i.e.\ $u_R \in \ker L_1 = I_U$, which implies that also $u \in I_U$. With this, the proof of non-degeneracy will be complete.

\smallskip
\noindent
\textbf{Proof that $u_R \in I_U$.}
By $u_R, u'_R \in J$, there exist $v, v' \in V$ such that $(u_R,v), (u'_R,v') \in L_1$ and $v, v' \in \ker L_2 = \tilde{I}$.  Using this, $\pi_U \after Q_U = Q_R^U \after \pi_U$, and $\Sfr{U} \stackrel{\smash{L_1}}{\sim} \Sfr{V}$, we obtain
\begin{equation}
    \omega_R ( \pi_U (u_R), Q_R^U \after \pi_U (u'_R)) = \omega_R ( \pi (v), Q_R^V \after \pi (v')) = 0.
\end{equation}
Since $v'$ is in the image of an arbitrary $u'_R \in J \cap R_\can^U$, the last equality holds for any 
\begin{equation}
    v' \in L_1 (J \cap R_\can^U) = L_1 (J) = \Im L_1 \cap \ker L_2 = I^{\omega_V} \cap \tilde{I}.
\end{equation}
By orthogonality, $I^{\omega_V} \cap \tilde{I} = \tilde{I}$, so we have
\begin{equation}\label{eq:nondeg_on_R}
    \omega_R ( \pi (v), Q_R^V \after \pi (v')) = 0 \quad \textnormal{ for any }v' \in \tilde{I}.
\end{equation}
Without loss of generality, we consider a representant $v \in \tilde{I} \cap R_\can^I$, where $ R_\can^I = (I \oplus Q_V I )^{\omega_V}$. Then we have, for any $v'=v'_0 + v'_R \in \tilde{I}$ with $v'_0 \in I$, $v'_R \in R_\can^I$,
\begin{equation}
\omega_V ( v , Q_V (v')) = \omega_V ( v , Q_V (v'_0))
+ \omega_V ( v , Q_V (v'_R)) = 0.
\end{equation}
The first term vanishes as $R_\can^I \perp Q_V I$, the second one contains only entries in $R_\can^I$, so it can be rewritten as $\omega_R ( \pi (v), \pi \after Q_V (v'_R)) = \omega_R ( \pi (v),  Q_R^V \after \pi (v'_R)) = \omega_R ( \pi (v),  Q_R^V \after \pi (v'))$ and it vanishes by equation \eqref{eq:nondeg_on_R}. We have proven that $v \in \tilde{I} \cap (Q_V \tilde{I})^{\omega_V}$ and by the non-degeneracy of $\tilde{I}$, we obtain $v=0$ and thus  $u_R \in I_U$.

\end{proof}

\subsection*{Data Availability and Conflicts of Interest Statement}

Data availability is not applicable to this article as no new data were created or analysed in this study. All authors declare that they have no conflicts of interest.

\begingroup
\let\clearpage\relax
\printbibliography[
heading=bibintoc,
title={Bibliography}
]
\endgroup

\end{document}